\numberwithin{equation}{section}
\theoremstyle{plain}
\newtheorem{thm}{Theorem}[section]
\newtheorem{lem}[thm]{Lemma}
\newtheorem{prop}[thm]{Proposition}
\newtheorem{cor}[thm]{Corollary}
\theoremstyle{definition}
\theoremstyle{remark}
\newtheorem{rem}[thm]{Remark}
\newcommand{\wttt}{\tilde{\wtt}{}}
\newcommand{\zttt}{\tilde{\ztt}{}}
\newcommand{\ttld}{\tilde{t}{}}
\newcommand{\Wtth}{\hat{\Wtt}{}}
\newcommand{\wtth}{\hat{\wtt}{}}
\newcommand{\ztth}{\hat{\ztt}{}}
\newcommand{\Wttt}{\tilde{\Wtt}{}}
\newcommand{\Attt}{\tilde{\Att}{}}
\newcommand{\Bttt}{\tilde{\Btt}{}}
\newcommand{\sspeed}{\sigma}
\newcommand{\pip}{\pi^{\perp}}
\newcommand{\wttb}{\bar{\wtt}{}}
\begin{document}

\title[On the fractional density gradient blow-up conjecture of Rendall]{On the fractional density gradient blow-up conjecture of Rendall}

\author[T.A.~Oliynyk]{Todd A.~Oliynyk}
\address{School of Mathematics\\
9 Rainforest Walk\\
Monash University, VIC 3800\\ Australia}
\email{todd.oliynyk@monash.edu}

\begin{abstract} 
On exponentially expanding Friedmann-Lema\^{i}tre-Robertson-Walker (FLRW) spacetimes, there is a distinguished family of spatially homogeneous and isotropic solutions to the relativistic Euler equations with a linear equation of state of the form $p=\sigma \rho$, where $\sigma \in [0,1]$ is the square of the sound speed. Restricting these solutions to a constant time hypersurface yields initial data that uniquely generates them. In this article, 
we show, for sound speeds satisfying $\frac{1}{3}<\sigma <\frac{k+1}{3k}$ with $k\in \Zbb_{>\frac{3}{2}}$, that $\Tbb^2$-symmetric initial data that is chosen sufficiently close to spatially homogeneous and isotropic data uniquely generates a $\Tbb^2$-symmetric solution of the relativistic Euler equations that exists globally to the future. Moreover, provided $k\in \Zbb_{>\frac{5}{2}}$, we show that there exist open sets of $\Tbb^2$-symmetric initial data for which the fractional density gradient becomes unbounded at timelike infinity. This rigorously confirms, in the restricted setting of relativistic fluids on exponentially expanding FLRW spacetimes, the fractional density gradient blow-up scenario conjectured by Rendall in \cite{Rendall:2004}.
\end{abstract}

\maketitle

\section{Introduction\label{intro}}
Relativistic perfect fluids with a linear equation of state on a prescribed spacetime $(M,\gt)$ are governed by the relativistic Euler equations\footnote{See Section \ref{indexing} for our indexing conventions.} 
\begin{equation}
\nablat_i \Tt^{ij}=0 \label{relEulA}
\end{equation}
where 
\begin{equation*}
\Tt^{ij} = (\rho+p)\vt^i \vt^j + p \gt^{ij}
\end{equation*}
is the stress energy tensor, $\vt^{i}$ is the fluid
four-velocity normalized by $\gt_{ij}\vt^i \vb^j=-1$, and the fluid's proper energy density $\rho$ and pressure $p$ are related by
\begin{equation*} 
p = \sspeed \rho.
\end{equation*} 
Since $\sspeed=\frac{dp}{d\rho}$ is the square of the sound speed, we will always assume\footnote{While this restriction on the sound speed is often taken for granted, it is, strictly speaking, not necessary; see \cite{Geroch:2010} for an extended discussion.} 
that $0\leq \sspeed \leq 1$
in order to ensure that the speed of sound is less than or equal to the speed of light. In this article, we will restrict our attention to analysing the relativistic Euler equations on
exponentially expanding Friedmann-Lema\^{i}tre-Robertson-Walker (FLRW) spacetimes $(M,\gt)$ where
\begin{equation}\label{M-def} 
M = (0,1]\times \Tbb^3,
\end{equation}
and\footnote{By introducing a change of time coordinate via $\tilde{t}=-\ln(t)$, the metric \eqref{FLRW} can be brought into the more recognizable form $\gt = -d\tilde{t}\otimes d\tilde{t} + e^{2\tilde{t}}\delta_{ij}dx^I \otimes dx^J$,
where now $\tilde{t} \in [0,\infty)$.
}
\begin{equation} \label{FLRW}
\gt = \frac{1}{t^2} g
\end{equation}
with
\begin{equation} \label{conformal}
g = -dt\otimes dt + \delta_{IJ}dx^I \otimes dx^J.
\end{equation}
 It is important to note that, due to our conventions, the future is located in the direction of \textit{decreasing} $t$ and future timelike infinity is located at $t=0$. 
Consequently, we require that $dt(\vt)<0$
holds in order to guarantee that the four-velocity is future directed. 

From a cosmological perspective, the following family of solutions to the relativistic Euler equations are distinguished because they are spatially isotropic and homogeneous:
\begin{equation} \label{Hom-A}
(\rho_H,v_H^i) = (t^{3(1+\sspeed)}\rho_c,-\delta^i_0), \quad t\in (0,1],
\end{equation} 
where $\rho_c>0$ is any positive constant and $v^i_H$ is the conformal four velocity, see \eqref{vi-def} below. For sound speeds satisfying $0<\sigma<\frac{1}{3}$, the nonlinear stability to the future of this distinguished family of solutions was first established in 
the articles\footnote{In these articles, stability was established for relativistic, self-gravitating fluids with a positive cosmological constant. However, the techniques developed there also apply to the simpler setting considered in this article where gravitational effects are neglected.}
\cite{RodnianskiSpeck:2013,Speck:2012}. Stability for the end points $\sigma=0$ and $\sigma=\frac{1}{3}$ was later established in\footnote{Again, stability was established in these articles in the more difficult case of relativistic, self-gravitating fluids with a positive cosmological constant.} \cite{LubbeKroon:2013} and \cite{HadzicSpeck:2015},
respectively. We further note the articles \cite{Friedrich:2017,LiuOliynyk:2018b,LiuOliynyk:2018a,Oliynyk:CMP_2016} where other approaches for proving stability were developed, and the articles \cite{LeFlochWei:2021,LiuWei:2021} in which the authors analysed the stability of fluids with nonlinear equations of state. Additionally, we mention the articles \cite{Fajman_et_al:2023,FOW:2021,Ringstrom:2008,Ringstrom:2009,Speck:2013,Wei:2018} where related stability results are established.

On the other hand, for sound speed satisfying $\frac{1}{3}<\sspeed <1$, it was conjectured by Rendall in \cite{Rendall:2004} that the homogeneous and isotropic family of solutions \eqref{Hom-A} are unstable\footnote{While Rendall made this conjecture for relativistic, self-gravitating fluids with a positive cosmological constant, the same arguments apply to the simpler setting we are considering in this article.} to the future in the sense that inhomogeneous features will appear in the fluid density of nonlinear perturbations that will lead to blow-up of the fractional density gradient at future timelike infinity. This blow-up was first observed numerically in the article \cite{MarshallOliynyk:2022} to occur in $\Tbb^2$-symmetric solutions of the relativistic Euler equations on the same exponentially expanding FLRW spacetimes that we consider in this article. Later, the same behaviour was observed in solutions of the Gowdy symmetric Einstein-Euler equations in the article \cite{BMO:2023}, which numerically confirmed Rendall's fractional density gradient blow-up scenario.
It is worthwhile noting here that a related instability in spherically symmetric solutions of the Einstein-Euler equations has been observed for sound speed satisfying $\frac{1}{3}<\sspeed<1$ \cite{Harada:2001}. We further note that stability for the end point $\sigma=1$, which corresponds to a stiff fluid, was analysed in the article \cite{Fournodavlos:2022}. There, it was shown that the fluid four-velocity of every, sufficiently small irrotational perturbation of a FLRW fluid solution  becomes spacelike in finite time. This result rigorously proves that the FLRW solutions are nonlinearly  unstable to the future for $\sigma=1$.

An important point to make clear at this point is that not all solutions to the relativistic Euler equations are expected to be unstable for sound speed satisfying $\frac{1}{3}<\sigma < 1$. Indeed, it was shown in  \cite{MarshallOliynyk:2022,Oliynyk:2021} that there exists a family of spatially homogeneous, but not isotropic, solutions (so-called \textit{tilted fluids}) to the relativistic Euler equations on exponentially expanding FLRW spacetimes that are nonlinearly stable to the future for sound speed satisfying $\frac{1}{3}<\sigma < 1$.

Now, the main observations from \cite{MarshallOliynyk:2022} that were drawn from the numerical study of the $\Tbb^2$-symmetric relativistic Euler
can be summarised as follows:

\begin{enumerate}[(A)]
    \item For each $\sspeed\in(\frac{1}{3},1)$ and each choice of $\Tbb^2$-symmetric initial data that is close to the spatially homogeneous and isotropic data
     \begin{equation*}
(\rho_H,v_H^i)\bigl|_{t=1} = (\rho_c,-\delta^i_0),
    \end{equation*}
     the $\Tbb^2$-symmetric numerical solution of the relativistic Euler equations that is generated from this initial data displays ODE dominated behaviour at late times and is well-approximated by a solution of the \textit{asymptotic equation}, which is obtained by discarding all spatial derivatives from the relativistic Euler equations.
    \item For each $\sspeed \in(\frac{1}{3},1)$, there exists $\Tbb^2$-symmetric initial data, which is close to spatially homogeneous and isotropic data
    and for which the component of the spatial fluid velocity in the direction orthogonal to the $\Tbb^2$-symmetry crosses zero at a finite number of points, that generates $\Tbb^2$-symmetric numerical solutions of the relativistic Euler equations whose fractional density gradient develops sharp features and becomes unbounded at future timelike infinity. This realises numerically, in the restricted setting of relativistic fluids on expanding cosmological spacetime, the fractional density contrast blow-up conjectured by Rendall.
\end{enumerate}
In interpreting observation (A) above, it is important to emphasise that the appearance of ODE dominated behavior, if it exists, can only be be observed by using a suitable (non-unique) choice of the variables to represent the relativistic fluid. Each choice of variables not only determines the form of the evolution equation, but also the form of the associated asymptotic equation. Consequently, different choices of variables will lead to different evolution and asymptotic equations. The choice of variables, and hence evolution and asymptotic equations, employed in  \cite{MarshallOliynyk:2022}, see equations  (1.62)-(1.67)  from that article, is different from the one made here, see \eqref{Wtt-def}, \eqref{T2-Eul-E} and \eqref{T2-Eul-asymp.1}. While the choice of variables is different, the mathematical and physical content is the same. 

The main aim of this article is, for sounds speeds satisfying $\frac{1}{3}<\sspeed <1 $ and under additional technical restrictions, to prove that nonlinear $\Tbb^2$-symmetric perturbations of the spatially homogeneous and isotropic solutions \eqref{Hom-A} of the relativistic Euler equations display behaviour that agrees with observations (A) and (B) from above. The precise statements of the main results obtained in this article can be found in Theorem \ref{thm-exist}, Remark \ref{rem-exist} and Theorem \ref{thm-non-empty}. Theorem \ref{thm-exist} establishes that $\Tbb^2$-symmetric, nonlinear perturbations of the spatially homogeneous and isotropic solutions \eqref{Hom-A} exist globally to the future and determines their leading order asymptotics,  while Theorem \ref{thm-non-empty} establishes the existence of initial data for which the fractional density contrast blows-up at future timelike infinity thereby rigorously confirming observation (B) and a conjecture of Rendall from \cite{Rendall:2004}. 
Observation (A) is rigorously justified in Remark \ref{rem-exist}.(iv).

One intriguing aspect regarding Theorem \ref{thm-exist} is that we can only show that the $H^k(\Tbb)$ norm of the perturbed solutions remains bounded all the way to future timelike infinity  provided that sound speed satisfies $\frac{1}{3}<\sspeed <\frac{1+k}{3 k}$. Thus for fixed $\sigma \in (\frac{1}{3},1)$, there appears to be an upper bound $\kappa_{\sspeed}$ satisfying  
\begin{equation} \label{kappa-sspeed-def}
\kappa_{\sspeed} \geq \frac{1+k}{3 k}
\end{equation}
on the differentiability for which the solution in the $H^k(\Tbb)$ norm remains uniformly bounded to the future if $k < \kappa_{\sspeed}$ ($k \leq \kappa_{\sspeed}$), and  presumably, blows-up at future timelike infinity if $k\geq \kappa_{\sspeed}$ ($k > \kappa_{\sspeed}$). If this behaviour actually occurs, it would then imply a $H^k(\Tbb)$-instability for sufficiently high differentiability where the instability is due to \textit{norm inflation}. While we do not prove that norm inflation actually occurs in this article, it was observed in the numerical solutions from \cite{MarshallOliynyk:2022} and so we expect that it is a real phenomena. 

It is worthwhile noting here that the blow-up of the fractional density contrast at future timelike infinity is of particular physical interest because it is not consistent with the expected behaviour of fluids on cosmological spacetimes; see \cite[\S 5]{BMO:2023} for a discussion of the expected behaviour. Indeed, the fractional density gradient is expected to remain uniformly bounded and converge to a bounded function at future timelike infinity. This has been rigorously verified for perturbations of the spatially homogeneous and isotropic solutions \eqref{Hom-A} to the relativistic Euler equations, e.g. see \cite{Speck:2013}, for sounds speeds satisfying $0\leq \sspeed \leq \frac{1}{3}$ and also for the tilted family of spatially homogeneous (but not isotropic) solutions to the relativistic Euler equations studied in \cite{MarshallOliynyk:2022,Oliynyk:2021} for sounds speeds satisfying $\frac{1}{3}< \sspeed <1$.

For the remainder of the article, we will always assume that the sound speed satisfies
\begin{equation*}
\frac{1}{3} < \sspeed < 1.
\end{equation*}

\subsection{Future directions}
A number of interesting lines of investigation are suggested by the results of this article. Here, we mention two that we are actively pursuing. First, we would like to remove the $\Tbb^2$-symmetry assumption, and once that is accomplished, consider coupling to the Einstein equations with a positive cosmological constant. One goal of this line of investigation would be to rigorously verify the blow-up at future timelike infinity of the fractional density gradient that was observed in numerical solutions to the Einstein-Euler equations in \cite{BMO:2023}. Doing so would fully confirm Rendall's conjecture. 

The second direction we are currently pursuing is to extend the results of this article to allow for fractional Sobolev regularity; in this article, we only work with Sobolev spaces of integral regularity. This extension will be necessary to have any chance of determining the critical regularity $\kappa_{\sspeed}$, c.f.~\eqref{kappa-sspeed-def}, past which norm inflation can occur.

\subsection{Overview} This article is organised as follows. First, in Section \ref{sec:prelim}, we fix notation that will be employed throughout. Next, in Section \ref{relEulsec}, we introduce a symmetric hyperbolic formulation of the relativistic Euler equations, see \eqref{relEulB}, and then in Section \ref{sec:T2-symmetric}, we derive a $\Tbb^2$-symmetric version of these equations, see \eqref{T2-Eul-B.1}-\eqref{T2-Eul-B.2}. 
In Section \ref{sec:T2-symmetric}, we introduce a nonlinear change of variables, see \eqref{cov} and \eqref{phi-def}, and a $t$-weighted rescaling, see \eqref{Wtt-def}, in order to bring the $\Tbb^2$-symmetric relativistic Euler equations into a Fuchsian form, see \eqref{T2-Eul-E} and \eqref{T2-Eul-F}, that is suitable for analysing solutions all the way to future timelike infinity located at $t=0$. In the following section, Section \ref{sec:coeff-bounds}, bounds on the coefficient matrices $\Bt(t,\wtt)$, $\Att^0(t,\wtt)$ and $\Att^1(t,\wtt)$, see \eqref{AcBc-def}, that appear in the Fuchsian formulations \eqref{T2-Eul-E} and \eqref{T2-Eul-F} of the $\Tbb^2$-symmetric relativistic Euler equations are derived along with a commutator bound. These bounds are then used in Section \ref{sec:exist-asymp} to establish the future global existence and uniqueness of solutions to the $\Tbb^2$-symmetric relativistic Euler equations that correspond to nonlinear perturbations of the spatially homogeneous and isotropic family of solutions \eqref{Hom-A}. The precise statement of our future global existence result is given in Theorem \ref{thm-exist}. In addition to establishing the future global existence of solutions, we also, in Section \ref{sec:exist-asymp}, determine leading order asymptotics for these solutions near future timelike infinity; see Theorem \ref{thm-exist} for details. In Section \ref{sec:frac-den-grad-blow-up}, we establish the existence of open sets of initial data that lead to, among other things, the blow-up at future timelike infinity of the fractional density gradient. Finally, in Section \ref{sec:beyond}, we derive  beyond leading order asymptotic expansions and use them to determine in greater detail the behaviour of solutions near timelike infinity.

\section{Preliminaries\label{sec:prelim}}

\subsection{Coordinates and indexing conventions\label{indexing}}
We define the $3$-torus by $\Tbb^3=(\Tbb)^3$ where $\Tbb$ is obtained by quotienting $\Rbb$, i.e. $\Tbb=\Rbb/\sim$, by the equivalence relation: $x \sim y$ if and only if $y = x + n L$ where $n\in \Zbb$ and $L$, the period, is a fixed number in $\Rbb$. On the spacetime manifold \eqref{M-def},
we use $(x^i)=(x^0,x^I)$ to denote coordinates where the $(x^I)$ are periodic spatial coordinates
on $\Tbb^{3}$ and $x^0$ is a Cartesian time coordinate
on the interval $(0,1]$. Lower case Latin letters, e.g. $i,j,k$,
will label spacetime coordinate indices that run from $0$ to $3$ while upper case Latin letters, e.g. $I,J,K$, will label spatial coordinate indices that run from
$1$ to $3$. Partial derivatives with respect to the coordinates $(x^i)$ will be denoted by $\del{i} = \frac{\del{}\;}{\del{}x^i}$.
We will frequently set $t=x^0$, and use the notion $\del{t} = \del{0}$ for the partial derivative with respect to the coordinate $x^0$.

\subsection{Inner-products and matrices}
We denote the Euclidean inner-product on $\Rbb^n$ by 
\begin{equation*}
\ipe{\xi}{\zeta} = \xi^{\tr} \zeta, \quad \xi,\zeta \in \Rbb^n,
\end{equation*} 
and use $|\xi| = \sqrt{\ipe{\xi}{\xi}}$
to denote the Euclidean norm. Letting $\Mbb{n}$ denote the set of $n\times n$-matrices, we, for any $A,B\in \Mbb{n}$, define 
\begin{equation*}
    A\leq B \quad \Longleftrightarrow \quad \ipe{\xi}{A\xi} \leq \ipe{\xi}{B\xi}, \quad \forall \; \xi \in \Rbb^n.
\end{equation*}
Also, for $A\in \Mbb{n}$, we define
the operator norm in the usual fashion via
\begin{equation*}
   |A|_{\op} = \sup_{\xi\in \Rbb^n_\times} \frac{|A\xi|}{|\xi|}
\end{equation*}
where $\Rbb^n_\times = \Rbb^n\setminus\{0\}$.

\subsection{Constants and inequalities}
We use the standard notation $a \lesssim b$
for inequalities of the form
$a \leq Cb$
in situations where the precise value or dependence on other quantities of the constant $C$ is not required.
On the other hand, when the dependence of the constant on other inequalities needs to be specified, for
example if the constant depends on the norm $\norm{u}_{L^\infty}$, we use the notation
$C=C(\norm{u}_{L^\infty})$.
Constants of this type will always be non-negative, non-decreasing, continuous functions of their argument.

\subsection{Sobolev spaces}
The $W^{k,p}$, $k\in \Zbb_{\geq 0}$, norm of a map $C^\infty(\Tbb,\Rbb^n)$ is defined by
\begin{equation*}
\norm{u}_{W^{k,p}(\Tbb)} = \begin{cases} \begin{displaystyle}\biggl( \sum_{\ell=0}^{k} \int_{\Tbb} |\del{x}^\ell u|^p \, dx\biggl)^{\frac{1}{p}}  \end{displaystyle} & \text{if $1\leq p < \infty $} \\
 \begin{displaystyle} \max_{0\leq \ell \leq k}\sup_{x\in \Tbb}|\del{x}^\ell u(x)|  \end{displaystyle} & \text{if $p=\infty$}
\end{cases}.
\end{equation*}
The Sobolev space $W^{k,p}(\Tbb,\Rbb^n)$ is then defined to be the completion of $C^\infty(\Tbb,\Rbb^n)$ with respect to the norm
$\norm{\cdot}_{W^{k,p}}$. When $n=1$ or the dimension $n$ is clear from the context, we simplify the notation and write $W^{k,p}(\Tbb)$ instead, and we employ the standard notation $H^k(\Tbb,\Rbb^n)=W^{k,2}(\Tbb,\Rbb^n)$ throughout.
On $\Tbb$, we denote the $L^2$ inner-product by
\begin{equation*}
\ip{u}{v}= \int_{\Tbb} \ipe{u(x)}{v(x)}\, dx.
\end{equation*}


\section{A symmetric hyperbolic formulation of the Relativistic Euler equations\label{relEulsec}}
The starting point for the existence results established in this article is the symmetric hyperbolic formulation of the relativistic Euler equations from \cite[\S 2]{Oliynyk:2021}. These equations are formulated in terms of the spatial components $v_J$ of the \textit{conformal four velocity} 
\begin{equation} \label{vi-def}
v_i= \frac{1}{t} g_{ij}\vt^{j} 
\end{equation}
and the \textit{modified density} $\zetat$ that is defined via
\begin{equation}\label{zetat-def}
 \rho = t^{3(1+\sspeed)}\rho_c v_0^{-(1+\sspeed)}e^{(1+\sspeed)\zetat},
\end{equation}
where $\rho_c>0$ is an arbitrary positive constant and we have set
\begin{equation}\label{v0-def}
v_0 = \sqrt{|v|^2 +1} , \qquad |v|^2 = \delta_0^{IJ}v_I v_J.
\end{equation}
As shown in  \cite[\S 2]{Oliynyk:2021}, c.f.~equations (2.15)-(2.17) from \cite{Oliynyk:2021}, the relativistic Euler equations \eqref{relEulA}
can be expressed in terms of the variables $(v_J,\zetat)$ as
\begin{equation} \label{relEulB}
\Bt^k \del{k}\Vt = \frac{1}{t} \frac{1-3\sspeed}{v_0^2}\Pi \Vt
\end{equation}
where 
\begin{align*}
\Vt &= (\zetat, v_J)^{\tr}, 
\\
\Bt^0 &= \begin{pmatrix} \sspeed & 0 \\ 0 &\dsp M^{IJ}-\frac{\sspeed}{v_0^4}v^I v^J\end{pmatrix} ,
\\
\Bt^K &=  -\frac{1}{v_0}\begin{pmatrix} \sspeed v^K & \sspeed M^{KJ} \\ 
  \sspeed M^{JK}&  \dsp \biggl(M^{IJ}+\frac{\sspeed}{v_0^4}v^I v^J\biggr)v^K -\frac{\sspeed}{v_0^2}
  (v^I \delta_0^{JK}+v^J \delta_0^{IK}) \end{pmatrix}, 
  \\
\Pi &=\begin{pmatrix} 0 & 0 \\ 0 & \delta_0^{IJ} \end{pmatrix},
\\
M^{IJ} &= \delta_0^{IJ} - \frac{1}{v_0^2}v^I v^J  
\intertext{and}
v^J &= \delta_0^{JI}v_I. 
\end{align*}

\section{The $\Tbb^2$-symmetric relativistic Euler equations\label{sec:T2-symmetric}} 
Following \cite[\S 5]{Oliynyk:2021}, see also \cite[\S 1.3]{MarshallOliynyk:2022}, we can obtain a $\Tbb^2$-symmetric reduction of the relativistic Euler equations \eqref{relEulB} from the ansatz
\begin{align}
\tilde \zeta(t,x^1,x^2,x^3)&=z(t,x^1), \label{zttt-def} \\
v_{I}(t,x^1,x^2,x^3) &= u(t,x^1) \delta_I^1. \label{vttt-def}
\end{align}
As can be verified by a direct calculation, the pair $(\zetat,v_I)$ will solve \eqref{relEulB} provided $(z,u)$ satisfy\footnote{Here and in the following, we set $x=x^1$ and $\del{x}=\del{x^1}$.}
\begin{align}
\del{t}z-\frac{u }{\sqrt{1+u^2}}\del{x}z-\frac{1}{\left(1+u^2\right)^{3/2}}\del{x}u&=0, \label{T2-Eul-A.1} \\
\del{t} u-\frac{\sspeed \sqrt{1+u^2} }{1+(1-\sspeed)
   u^2}\del{x}z+\frac{(2 \sspeed-1-(1-\sspeed) u^2) u}{\sqrt{1+u^2}
   (1+(1-\sspeed) u^2)} \del{x}u&=\frac{1}{t} \frac{(1-3 \sspeed)(1+u^2) u}{1+(1-\sspeed) u^2}. \label{T2-Eul-A.2}
\end{align}

\subsection{A nonlinear change of velocity variable\label{sec:cov}}
The $\Tbb^2$-symmetric relativistic Euler equations \eqref{T2-Eul-A.1}-\eqref{T2-Eul-A.2}  are symmetisable, and consequently, the local-in-time existence of solutions, that is, on time intervals $(t_1,1]$ with $t_1$ sufficiently close to $1$, is guaranteed by standard existence theory for symmetric hyperbolic systems. However, the form of these equations is not useful for analysing nonlinear perturbations of the homogeneous solution \eqref{Hom-A} for sound speed satisfying $1/3<\sspeed < 1$ on the time interval $(0,1]$. This is because the singular source term
on the right hand side of \eqref{T2-Eul-A.2} is nonlinear in $u$ and has the wrong sign, i.e. $1-3\sspeed<0$. Due to these properties, the source term produces dangerous singular terms with unfavourable signs when using \eqref{T2-Eul-A.1}-\eqref{T2-Eul-A.2} to derive energy estimates for $(z,u)$ and its derivatives. 

To overcome this difficulty, we linearise the source term using a change of the velocity variable of the form 
\begin{equation}\label{cov}
w=\phi(u).
\end{equation}
Under this change of variable, a short calculation shows that the system \eqref{T2-Eul-A.1}-\eqref{T2-Eul-A.2} transforms as
 \begin{align}
\del{t}z-\frac{u }{\sqrt{1+u^2}}\del{x}z-\frac{1}{(1+u^2)^{\frac{3}{2}}}\Bigl(\frac{d\phi}{du}\Bigr)^{-1}\del{x}w&=0, \label{T2-Eul-B.1} \\
\del{t} w-\frac{\sspeed \sqrt{1+u^2} }{1+(1-\sspeed)
   u^2}\frac{d\phi}{du}\del{x}z+\frac{(2 \sspeed-1-(1-\sspeed) u^2) u}{\sqrt{1+u^2}
   (1+(1-\sspeed) u^2)} \del{x}w&=\frac{1}{t} \frac{d\phi}{du}\frac{(1-3 \sspeed)(1+u^2) u}{1+(1-\sspeed) u^2}. \label{T2-Eul-B.2}
\end{align} 
We linearise the source term of \eqref{T2-Eul-B.2} by demanding that $\phi(u)$ solve the ODE
\begin{equation*}
\frac{d\phi}{du}=\frac{1+(1-\sspeed) u^2}{(1+u^2) u} \phi. 
\end{equation*}
This equation is separable and we can directly integrate it to get the particular solution
\begin{equation}\label{phi-def}
\phi(u) = \frac{u}{(1+u^2)^{\frac{\sspeed}{2}}}. 
\end{equation}
Because we cannot invert \eqref{phi-def} analytically, we first establish that it is smoothly invertible in
the following lemma in order to ensure the change of variable 
\eqref{cov} is well-defined. 

\begin{lem}\label{phi-lem}
Suppose $\sspeed \in [0,1)$. Then the map
\begin{equation*}
\phi\: : \: \Rbb \longrightarrow \Rbb \: :\: u\longmapsto  \frac{u}{(1+u^2)^{\frac{\sspeed}{2}}}
\end{equation*}
is a monotonically increasing diffeomorphism. Moreover, for any $w_0<0<w_1$, the inverse map $\phi^{-1}(w)$ and
its derivative admit the asymptotic expansions:
\begin{align*}
\phi^{-1}(w)&=w^{\frac{1}{1-\sspeed}}\Bigl(1+ \Ord\Bigl(|w|^{\frac{2}{\sspeed-1}}\Bigr)\Bigr), \quad w>w_1, \\
\phi^{-1}(w)&=-|w|^{\frac{1}{1-\sspeed}}\Bigl(1+ \Ord\Bigl(|w|^{\frac{2}{\sspeed-1}}\Bigr)\Bigr), \quad w<w_0, \\
(\phi^{-1})'(w)&=w^{\frac{\sspeed}{1-\sspeed}}\biggl(\frac{1}{1-\sspeed}+ \Ord\Bigl(|w|^{\frac{2}{\sspeed-1}}\Bigr)\biggr), \quad w>w_1,
\intertext{and}
(\phi^{-1})'(w)&=|w|^{\frac{\sspeed}{1-\sspeed}}\Bigl(\frac{1}{1-\sspeed}+ \Ord\Bigl(|w|^{\frac{2}{\sspeed-1}}\Bigr)\Bigr), \quad w<w_0.
\end{align*}
\end{lem}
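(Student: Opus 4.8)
The plan is to establish the three claims of the lemma---monotonicity/diffeomorphism, and the four asymptotic expansions---in that order, since the expansions presuppose that $\phi^{-1}$ exists as a smooth function.

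\textbf{Step 1: $\phi$ is a monotonically increasing diffeomorphism.} First I would compute the derivative directly:
\begin{equation*}
\phi'(u) = \frac{d}{du}\Bigl(u(1+u^2)^{-\sspeed/2}\Bigr) = (1+u^2)^{-\sspeed/2} - \sspeed u^2 (1+u^2)^{-\sspeed/2-1} = \frac{1+(1-\sspeed)u^2}{(1+u^2)^{\sspeed/2+1}}.
\end{equation*}
Since $\sspeed\in[0,1)$, the numerator $1+(1-\sspeed)u^2$ is strictly positive for all $u\in\Rbb$, so $\phi'(u)>0$ everywhere and $\phi$ is strictly increasing and smooth. To see it is onto $\Rbb$, note $\phi$ is odd and, for $u>0$, $\phi(u) = u^{1-\sspeed}(1+u^{-2})^{-\sspeed/2}\to\infty$ as $u\to\infty$ because $1-\sspeed>0$; likewise $\phi(u)\to-\infty$ as $u\to-\infty$. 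Hence $\phi$ is a continuous strictly increasing surjection $\Rbb\to\Rbb$, so it is a bijection, and since $\phi'$ never vanishes, the inverse function theorem gives that $\phi^{-1}$ is smooth; thus $\phi$ is a diffeomorphism.

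\textbf{Step 2: asymptotics of $\phi^{-1}$.} By oddness it suffices to treat $w>w_1>0$ (equivalently $u$ large positive), and the $w<w_0$ cases follow by the symmetry $\phi^{-1}(-w)=-\phi^{-1}(w)$. For $u>0$ write $w=\phi(u)=u^{1-\sspeed}(1+u^{-2})^{-\sspeed/2}$, so $w^{1/(1-\sspeed)} = u(1+u^{-2})^{-\sspeed/(2(1-\sspeed))}$, i.e.
\begin{equation*}
u = w^{\frac{1}{1-\sspeed}}\bigl(1+u^{-2}\bigr)^{\frac{\sspeed}{2(1-\sspeed)}}.
\end{equation*}
Since $u\to\infty$ as $w\to\infty$, Taylor expanding $(1+u^{-2})^{\sspeed/(2(1-\sspeed))} = 1 + \Ord(u^{-2})$ and using $u \sim w^{1/(1-\sspeed)}$ to convert $u^{-2} = \Ord\bigl(w^{-2/(1-\sspeed)}\bigr) = \Ord\bigl(|w|^{2/(\sspeed-1)}\bigr)$ yields $\phi^{-1}(w) = w^{1/(1-\sspeed)}\bigl(1+\Ord(|w|^{2/(\sspeed-1)})\bigr)$, which is the first expansion. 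For the derivative, differentiate the inverse-function identity: $(\phi^{-1})'(w) = 1/\phi'(\phi^{-1}(w))$, and from Step 1,
\begin{equation*}
\frac{1}{\phi'(u)} = \frac{(1+u^2)^{\frac{\sspeed}{2}+1}}{1+(1-\sspeed)u^2} = \frac{u^{\sspeed}(1+u^{-2})^{\frac{\sspeed}{2}+1}}{(1-\sspeed)\bigl(1+\frac{1}{(1-\sspeed)u^2}\bigr)} = \frac{u^{\sspeed}}{1-\sspeed}\bigl(1+\Ord(u^{-2})\bigr).
\end{equation*}
Substituting $u=\phi^{-1}(w) = w^{1/(1-\sspeed)}(1+\Ord(|w|^{2/(\sspeed-1)}))$ gives $u^{\sspeed} = w^{\sspeed/(1-\sspeed)}(1+\Ord(|w|^{2/(\sspeed-1)}))$, and collecting the error terms yields the third expansion; the $w<w_0$ versions follow by symmetry.

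\textbf{Main obstacle.} None of the individual steps is deep; the only point requiring care is bookkeeping the error exponents when passing from expansions in $u$ to expansions in $w$---in particular checking that $u^{-2}$, $(1+u^{-2})^{\text{power}}-1$, and the error in $u\approx w^{1/(1-\sspeed)}$ are all genuinely $\Ord(|w|^{2/(\sspeed-1)})$ and that composing/multiplying these does not produce a larger error, together with verifying that the sub-leading corrections remain smaller than the stated ones (using $\sspeed<1$ so that $2/(\sspeed-1)<0$). A clean way to organise this is to set $s=u^{-2}$, expand everything as a convergent power series in $s$ near $s=0$, invert the relation between $w^{-2/(1-\sspeed)}$ and $s$ by the analytic implicit function theorem, and then read off that every correction is $\Ord(s) = \Ord(|w|^{2/(\sspeed-1)})$.
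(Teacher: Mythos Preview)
Your proposal is correct and follows essentially the same route as the paper's proof: both compute $\phi'(u)=\frac{1+(1-\sspeed)u^2}{(1+u^2)^{1+\sspeed/2}}$ to get monotonicity and surjectivity, then invert the relation $u = w^{1/(1-\sspeed)}(1+u^{-2})^{\sspeed/(2(1-\sspeed))}$ by expanding in $u^{-2}$ and substituting back, use $(\phi^{-1})'(w)=1/\phi'(u)$ for the derivative expansion, and obtain the $w<w_0$ cases from oddness. The paper handles your ``main obstacle'' in exactly the way you suggest at the end---by writing the correction as an analytic function of $u^{-2}$ near $0$---so your bookkeeping concern is well-placed but easily resolved.
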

\begin{proof}
$\;$

\noindent\underline{Diffeomorphism:}
Since $\sspeed \in [0,1)$, we observe from \eqref{phi-def} that $\phi \in C^\infty(\Rbb,\Rbb)$, $\lim_{u\rightarrow \pm\infty}\phi(u) = \pm \infty$ and 
\begin{equation} \label{dphi}
\frac{d\phi}{du} = \frac{1+(1-\sspeed)u^2}{(1+u^2)^{\frac{2+\sspeed}{2}}}> 0, \quad u\in \Rbb.
\end{equation} 
From this, we can immediately conclude that $\phi$ is a monotonically increasing diffeomorphism from $\Rbb$ to $\Rbb$, and moreover that its inverse $\phi^{-1}$ is also monotonically increasing.

\medskip
\noindent\underline{Asymptotics:} Since $\phi(0)=0$ by \eqref{phi-def}, we deduce that
\begin{equation} \label{w-lbnd}
0<u_1=\phi^{-1}(w_1) < u=\phi^{-1}(w), \quad 0<w_1<w.
\end{equation}
From this, we obtain
\begin{equation*}
w=\phi(u) = \frac{u}{(1+u^2)^{\frac{\sspeed}{2}}} = \frac{u^{1-\sspeed}}{\dsp\biggl(1+\frac{1}{u^2}\biggr)^{\frac{\sspeed}{2}}},
\end{equation*}
which we can rearrange to get
\begin{equation} \label{u-asymp-A}
u = w^{\frac{1}{1-\sspeed}}\biggl(1+\frac{1}{u^2}\biggr)^{\frac{\sspeed}{2(1-\sspeed)}}.
\end{equation}
Expanding the term in the brackets gives
\begin{equation} \label{u-asymp-B}
u = w^{\frac{1}{1-\sspeed}}\biggl(1+ \frac{f(u^{-2})}{u^2}\biggr)
\end{equation}
where $f(\xi)$ is analytic on the region  $|\xi|< 2/u_1^2$ and satisfies $f(0)=\frac{\sspeed}{\sspeed(1-\sspeed)}$.
Recalling that $1-\sspeed>0$, the asymptotic expansion
\begin{equation*}
u = \phi^{-1}(w)=w^{\frac{1}{1-\sspeed}}\Bigl(1+ \Ord\Bigl(|w|^{\frac{2}{\sspeed-1}}\Bigr)\Bigr), \quad w>w_1,
\end{equation*}
then follows from plugging the representation \eqref{u-asymp-A} for $u$ into the right hand side of \eqref{u-asymp-B}.

Next, by \eqref{dphi}, we observe that
\begin{equation*}
\frac{du}{dw}= (\phi^{-1})'(w) = \biggl(\frac{d\phi}{du}\biggr)^{-1}=  \frac{(1+u^2)^{\frac{2+\sspeed}{2}}}{1+(1-\sspeed)u^2}
=   \frac{\dsp u^\sigma \biggl(1+\frac{1}{u^2}\biggr)^{\frac{2+\sspeed}{2}}}{\dsp (1-\sspeed)+\frac{1}{u^2}}.
\end{equation*}
Expanding the right hand side, we can express this as 
\begin{equation*}
\frac{du}{dw}=u^\sigma \biggl(\frac{1}{1-\sspeed} + \frac{h(u^{-2})}{u^2}\biggr)
\end{equation*}
where $h(\xi)$ is analytic on the region  $|\xi|< 2/u_1^2$ and satisfies $h(0)=-\frac{\sspeed(1+\sspeed)}{2(1-\sspeed)^2}$.
Plugging the representation \eqref{u-asymp-A} of $u$ into the right hand side of the this expression yields the asymptotic 
expansion
\begin{equation*}
\frac{du}{dw}=(\phi^{-1})'(w)=w^{\frac{\sspeed}{1-\sspeed}}\biggl(\frac{1}{1-\sspeed}+ \Ord\Bigl(|w|^{\frac{2}{\sspeed-1}}\Bigr)\biggr), \quad w>w_1>0.
\end{equation*}
This establishes the validity of the asymptotic expansions on the region $w>w_1>0$. Noting that $\phi(u)$ satisfies $\phi(-u)=-\phi(u)$,
it follows that $\phi^{-1}(-w)=-\phi^{-1}(w)$, and after differentiating, that $(\phi^{-1})'(-w)=(\phi^{-1})'(w)$. Using these relations, the asymptotic expansions on the region $w<w_0<0$ can be obtained directly from the asymptotic expansions on the region $w>w_1=|w_0|>0$.
\end{proof}

To proceed, we employ \eqref{phi-def} and \eqref{dphi} to express the $\Tbb^2$-symmetric relativistic Euler equations \eqref{T2-Eul-B.1}-\eqref{T2-Eul-B.2} in matrix form as
\begin{equation}\label{T2-Eul-C}
\del{t}W + B\del{x}W = \frac{1-3\sspeed}{t}\pi W
\end{equation}
where
\begin{align}
W&= (z,w)^{\tr}, \label{W-def}\\
B &=  \begin{pmatrix} \dsp
-\frac{u }{\sqrt{1+u^2}}&\dsp -\frac{1}{(1+u^2)^{\frac{1-\sspeed}{2}} (1+(1-\sspeed)u^2 )}\\[4mm]
\dsp -\frac{\sspeed}{(1+u^2)^{\frac{1+\sspeed}{2}}}&\dsp \frac{(2 \sspeed-1-(1-\sspeed) u^2) u}{\sqrt{1+u^2}
   (1+(1-\sspeed) u^2)}
\end{pmatrix}, \label{B-def}
\intertext{and}
\pi &= \begin{pmatrix}0 & 0 \\ 0 & 1 \end{pmatrix}. \label{pi-def}
\end{align} 
We further observe that
\begin{equation} \label{A0-def}
A^0 = \begin{pmatrix}
\dsp \frac{\sspeed (1+ (1-\sspeed)u^2)}{(1+u^2)^{\sspeed}} & 0 \\[2mm] 0 & 1
\end{pmatrix}
\end{equation}
is symmetriser for \eqref{T2-Eul-C}, and multiplying \eqref{T2-Eul-C} on the left with $A^0$ yields the symmetric hyperbolic system
\begin{equation}\label{T2-Eul-D}
A^0\del{t}W + A^1\del{x}W = \frac{1-3\sspeed}{t}\Pi W
\end{equation}
where
\begin{equation}\label{A1-def}
A^1 := A^0B =  \begin{pmatrix} \dsp
-\frac{\sspeed (1+ (1-\sspeed)u^2) u }{(1+u^2)^{\frac{1+2\sspeed}{2}}}&\dsp  -\frac{\sspeed}{(1+u^2)^{\frac{1+\sspeed}{2}}}\\[4 mm]
\dsp -\frac{\sspeed}{(1+u^2)^{\frac{1+\sspeed}{2}}}&\dsp \frac{(2 \sspeed-1-(1-\sspeed) u^2) u}{\sqrt{1+u^2}
   (1+(1-\sspeed) u^2)}
\end{pmatrix}.
\end{equation}

\begin{rem}\label{B-rem} 
Through the formula \eqref{B-def}, we have defined $B$ as a matrix valued map that depends on $u$. However, we will need to, at various points in the arguments below,
interpret $B$ as depending on  
$w$ via 
\begin{equation*}
B(w):= B\bigl|_{u=\phi^{-1}(w)}.
\end{equation*}
It will always be clear from context to which interpretation we are referring. The matrices $A^0$ and $A^1$ will be treated in a similar fashion as well as quantities derived from them and $B$. 
\end{rem}

\subsection{Fuchsian formulation\label{sec:rescale}}
While \eqref{T2-Eul-C} (or equivalently \eqref{T2-Eul-D}) is a Fuchsian formulation of the $\Tbb^2$-symmetric relativistic Euler equations, it not yet in a form that is suitable for analysing solutions all the way to future timelike infinity. To complete the transformation into a suitable form, we rescale $W$ by $t^{3\sspeed-1}$ to get
\begin{equation}\label{Wtt-def}
\Wtt = (\ztt,\wtt)^{\tr} := t^{3\sspeed-1} W = (t^{3\sspeed-1}z,t^{3\sspeed -1}w)^{\tr}.
\end{equation}
A short calculation then shows that \eqref{T2-Eul-C} and \eqref{T2-Eul-D} can be expressed in terms of $\Wtt$ in Fuchsian form as
\begin{equation}\label{T2-Eul-E}
\del{t}\Wtt + \Btt(t,\wtt)\del{x}\Wtt = \frac{3\sspeed-1}{t}\pip \Wtt,
\end{equation}
or equivalently
\begin{equation}\label{T2-Eul-F}
\Att^0(t,\wtt)\del{t}\Wtt + \Att^1(t,\wtt)\del{x}\Wtt = \frac{3\sspeed-1}{t}\Att^0(t,\wtt)\pip \Wtt,
\end{equation}
where
\begin{gather}
\Btt(t,\wtt) = B( t^{1-3\sspeed}\wtt), \quad \Att^0(t,\wtt)= A^0( t^{1-3\sspeed}\wtt), \quad \Att^i(t,\wtt)=A^i(t^{1-3\sspeed}\wtt), \label{AcBc-def}
\intertext{and}
\pip = \begin{pmatrix} 1 & 0 \\ 0 & 0 \end{pmatrix}. \label{pip-def}
\end{gather}
For later use, we observe from  \eqref{pi-def}, \eqref{A0-def}, \eqref{AcBc-def} and \eqref{pip-def} that the following identities hold:
\begin{equation}\label{pip-Ac0-com}
\pi^2 = \pi, \quad \pi \pip = 0, \quad  (\pip)^2 = \pip, \quad \pi^{\tr}=\pi, \quad (\pip)^{\tr}=\pip \AND [\pip,\Att^0] = [\pi,\Att^0] = 0.
\end{equation}

The method we will employ to establish the existence and uniqueness to solutions to \eqref{T2-Eul-E} is to adapted the Fuchsian global existence and uniqueness theory first introduced in \cite{Oliynyk:CMP_2016} and further developed in the articles \cite{BeyerOliynyk:2024,BOOS:2021,FOW:2021}. 
An adaption is necessary because the matrices $\Att^0(t,\wtt)$ and $\Att^0(t,\wtt)$ do not satisfy the assumptions needed to apply the global existence theory from \cite{BeyerOliynyk:2024,BOOS:2021,FOW:2021}. In Section \ref{sec:coeff-bounds} below, we derive a number of estimates that will allow us to adapt the Fuchsian global existence theory so that it can be applied to \eqref{T2-Eul-E}. The proof of global existence is carried out in Section \ref{sec:exist-asymp}.

\subsubsection{Matrix divergence: $\Div\! \Att(t,\wtt)$}
In deriving energy estimates, we will need to consider the \textit{matrix divergence} $\Div\! \Att(t,\wtt)$ defined by the right hand side of
\begin{align}
\del{t}(\Att^0(t,\wtt(t,x))) + \del{x}(\Att^1(t,\wtt(t,x)))=&t^{1-3\sspeed}\Bigl(\del{t}\wtt+
\frac{1-3\sspeed}{t}\wtt\Bigr) \frac{dA^0}{dw}(t^{1-3\sspeed}\wtt) \notag \\
&\qquad + t^{1-3\sspeed}\del{x}\wtt \frac{dA^1}{dw}(t^{1-3\sspeed}\wtt), \label{Div-Ac-0}
\end{align}
where $\wtt(t,x)$ is obtained from a solution of \eqref{T2-Eul-E}. Noting by \eqref{cov}, \eqref{phi-def}, \eqref{dphi} and \eqref{Wtt-def} 
that  
\begin{align*}
\Bigl( t^{1-3\sspeed}\del{t}\wtt+
\frac{1-3\sspeed}{t}t^{1-3\sspeed}\wtt\Biggr)\frac{dA^0}{dw}(t^{1-3\sspeed}\wtt) =\del{t}w \frac{dA^0}{dw}(w)=\del{t} w  \Bigl(\frac{d\phi}{du}\Bigr)^{-1}
\frac{dA^0}{du},
\end{align*}
we observe from  \eqref{cov}, \eqref{phi-def}, \eqref{T2-Eul-C}, \eqref{A0-def} and \eqref{pip-def} that 
 \begin{align}\label{Div-Ac-1}
\Bigl( t^{1-3\sspeed}\del{t}\wtt+
\frac{1-3\sspeed}{t}t^{1-3\sspeed}\wtt\Bigr)\frac{dA^0}{dw}(t^{1-3\sspeed}\wtt) =\Bigl(a_1\del{x}z+a_2\del{x}w + 
\frac{1}{t}\bigl(-2(3\sspeed-1)+a_3\bigr)\Bigr)A^0\pip
\end{align}
where
\begin{align}
a_1 &=\frac{2 \sspeed u ((1-\sspeed)^2 u^2-2 \sspeed +1)}{\sqrt{1+u^2} (1+(1-\sspeed ) u^2)^2}, \label{a1-def}\\
a_2 &=-\frac{2 u^2 (1+u^2)^{\frac{\sspeed -1}{2}}((\sspeed-1 ) u^2+2 \sspeed -1)((1-\sspeed )^2 u^2-2\sspeed +1)}{(1+(1-\sspeed ) u^2)^3}, \label{a2-def}
\intertext{and}
a_3 & =\frac{2 (3 \sspeed-1) \left(1+u^2\right)}{\left(1+(1-\sspeed) u^2\right)^2} .\label{a3-def}
\end{align}
\begin{rem} It is worth pointing out that the non-negativity of function $a_3$ is essential for establishing energy estimates that yield uniform bounds on the time interval $(0,1]$. Specifically, the non-negativity of $a_3$ allows us to obtain the estimate \eqref{energy-est-C} below. Had $a_3$ been negative anywhere, then our whole approach to establishing existence of solutions on the interval $(0,1]$ would have failed. 
\end{rem}
Substituting \eqref{Div-Ac-1} into \eqref{Div-Ac-0} and labelling the result $\Div\! \Att(t,\wtt)$ yields
\begin{equation}\label{DivAc-def}
\Div\! \Att(t,\wtt) =t^{1-3\sspeed}\bigl(a_1\del{x}\ztt+a_2\del{x}\wtt\bigr)\Att^0\pip + 
\frac{1}{t}\bigl(-2(3\sspeed-1)+a_3\bigr)\Att^0\pip + t^{1-3\sspeed}\del{x}\wtt \frac{dA^1}{dw}(t^{1-3\sspeed}\wtt).
\end{equation}
Defined this way, we then have that
\begin{equation} \label{DivAc-on-shell}
\Div\!\Att(t,\wtt(t,x)) = \del{t}(\Att^0(t,\wtt(t,x))) + \del{x}(\Att^1(t,\wtt(t,x)))
\end{equation}
for  functions $\wtt(t,x)$ that are obtained from a solution of \eqref{T2-Eul-E}. For use below, we will need to express $\frac{dA^1}{dw}$, which appears in \eqref{DivAc-def}, as 
\begin{equation}\label{DwAc1-rep}
\frac{dA^1}{dw}= (A^0)^{\frac{1}{2}}Q (A^0)^{\frac{1}{2}}
\end{equation}
where
\begin{equation}  \label{sqrtA0}
 (A^0)^{\frac{1}{2}}  = \begin{pmatrix}
\dsp \frac{\sqrt{\sspeed} \sqrt{1+ (1-\sspeed)u^2}}{(1+u^2)^{\frac{\sspeed}{2}}} & 0 \\[2mm] 0 & 1
\end{pmatrix}
\end{equation}
is the square root of the matrix $A^0$ and
\begin{equation*} 
Q= (A^0)^{-\frac{1}{2}} \frac{dA^1}{dw}  (A^0)^{-\frac{1}{2}}= \biggl(\frac{d\phi}{du}\biggr)^{-1}(A^0)^{-\frac{1}{2}} \frac{dA^1}{du}  (A^0)^{-\frac{1}{2}}.
\end{equation*}
Using \eqref{dphi}, \eqref{A1-def} and \eqref{sqrtA0}, we find after a straightforward calculation that
\begin{equation}\label{DwAc1-rep-A}
Q=
\begin{pmatrix}
\dsp  \frac{ -2 (1-\sspeed)^2 u^4+(5 \sspeed-3) u^2-1}{(1+u^2)^{\frac{1-\sspeed}{2}}(1+ (1-\sspeed)u^2)^2} &\dsp
 \frac{\sqrt{\sspeed} (\sspeed+1) u
  }{ (1+u^2)^{\frac{1-\sspeed}{2}}(1+(1-\sspeed) u^2)^{\frac{3}{2}}} \\[5mm]
\dsp \frac{\sqrt{\sspeed} (\sspeed+1) u }{(1+u^2)^{\frac{1-\sspeed}{2}}(1+(1-\sspeed) u^2)^{\frac{3}{2}}} & \dsp
\frac{
   \sspeed^2 (3 u^2+2) u^2-2 \sspeed (u^4-1)-(1+u^2)^2}{(1+u^2)^{\frac{1-\sspeed}{2}}(1+(1-\sspeed) u^2)^3}
\end{pmatrix}.
\end{equation}

\section{Coefficient bounds and a commutator estimate\label{sec:coeff-bounds}}
Before we consider the existence of future global solutions to the system \eqref{T2-Eul-E} or equivalently \eqref{T2-Eul-F}, we first establish a number of bounds on the coefficients of these systems that will be needed in the existence proof. The required bounds are stated in Lemmas \ref{lem-a1a2a3Q}, \ref{lem-A0B} and \ref{lem-Ac0-lbnd}, and Corollary \ref{cor-Ac0Bc}. Additionally, we establish a variation on a standard commutator estimate in Lemma \ref{comm-lem} that will also be needed in our existence proof.
\begin{lem}\label{lem-a1a2a3Q}
Suppose $\sspeed \in [1/3,1)$. Then
\begin{equation*}
 0\leq a_3(w) \leq \frac{2(3\sspeed-1)}{(1-\sspeed)^2}
\end{equation*}
for all $w\in \Rbb$ and 
\begin{equation*}
\sup_{w\in \Rbb}\bigl(|a_1(w)|+|a_2(w)|+|Q(w)|_{\op}\bigr) \leq C
\end{equation*}
for some constant $C>0$.
\end{lem}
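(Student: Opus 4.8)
The plan is to reduce the whole statement to elementary estimates for functions of the single real variable $u$. By Lemma~\ref{phi-lem} the map $\phi\colon\Rbb\to\Rbb$ is a diffeomorphism, so $u=\phi^{-1}(w)$ runs over all of $\Rbb$ as $w$ does; in particular any bound that is uniform in $u\in\Rbb$ is uniform in $w\in\Rbb$. Recalling the convention of Remark~\ref{B-rem}, it therefore suffices to prove that the functions $a_1,a_2,a_3$ of \eqref{a1-def}--\eqref{a3-def} and the four entries of the matrix $Q$ of \eqref{DwAc1-rep-A}, viewed as functions of $u$, satisfy the asserted bounds uniformly in $u\in\Rbb$.

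For $a_3$ everything is explicit. Viewing $a_3$ as a function of $u$ via \eqref{a3-def}, nonnegativity is immediate since $3\sspeed-1\ge0$ and the remaining factors in $a_3$ are positive. For the upper bound I would use the elementary inequality $1+(1-\sspeed)u^2\ge(1-\sspeed)(1+u^2)$, which holds because $1\ge 1-\sspeed$; squaring it and inserting the result into the denominator of \eqref{a3-def} gives
\begin{equation*}
a_3(u)=\frac{2(3\sspeed-1)(1+u^2)}{\bigl(1+(1-\sspeed)u^2\bigr)^2}\le\frac{2(3\sspeed-1)(1+u^2)}{(1-\sspeed)^2(1+u^2)^2}=\frac{2(3\sspeed-1)}{(1-\sspeed)^2(1+u^2)}\le\frac{2(3\sspeed-1)}{(1-\sspeed)^2},
\end{equation*}
which is the claimed bound.

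For $a_1$, $a_2$ and the entries $Q_{11}$, $Q_{12}=Q_{21}$, $Q_{22}$, I would argue via continuity together with decay at infinity. Each of these is a continuous function on all of $\Rbb$, since in every one of the formulas \eqref{a1-def}, \eqref{a2-def}, \eqref{DwAc1-rep-A} the denominator is a product of positive powers of $1+u^2\ge1$ and of $1+(1-\sspeed)u^2\ge1$ and hence never vanishes. It then remains only to show that each of these functions tends to $0$ as $|u|\to\infty$, because a continuous function on $\Rbb$ with finite limits at $\pm\infty$ is bounded. This is a degree count, with care taken over the non-integer exponents (writing $(1+u^2)^\alpha=\Ord\bigl(|u|^{2\alpha}\bigr)$ as $|u|\to\infty$): from \eqref{a1-def}--\eqref{a2-def} one obtains $a_1=\Ord(|u|^{-2})$ and $a_2=\Ord\bigl(|u|^{\sspeed-1}\bigr)$, and from \eqref{DwAc1-rep-A} one obtains $Q_{11}=\Ord\bigl(|u|^{\sspeed-1}\bigr)$, $Q_{12}=\Ord\bigl(|u|^{\sspeed-3}\bigr)$ and $Q_{22}=\Ord\bigl(|u|^{\sspeed-3}\bigr)$; each exponent is negative because $\sspeed<1$, so each of these functions vanishes at infinity and is therefore bounded. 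Finally, the operator norm of a $2\times2$ matrix is controlled by (twice) the largest absolute value of its entries, e.g.\ through the Frobenius norm, so boundedness of the entries of $Q$ yields a uniform bound on $|Q(u)|_{\op}$; combining this with the bounds on $|a_1|$ and $|a_2|$ gives the second assertion.

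The argument is essentially routine; the only points requiring attention are the bookkeeping of the fractional powers of $1+u^2$ in the asymptotic count, and the observation that it is the strict inequality $\sspeed<1$ (rather than merely $\sspeed\le1$) that makes $a_2$ and the off-diagonal and lower-right entries of $Q$ decay at infinity, while the hypothesis $\sspeed\ge\frac13$ is used only to keep $a_3$ and the constant $2(3\sspeed-1)/(1-\sspeed)^2$ nonnegative.
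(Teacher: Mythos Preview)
Your proof is correct and follows essentially the same approach as the paper: reduce to bounds in the variable $u$ via the diffeomorphism $\phi$ from Lemma~\ref{phi-lem}, and then verify the bounds by inspecting the explicit formulas \eqref{a1-def}--\eqref{a3-def} and \eqref{DwAc1-rep-A}. The paper states the uniform bounds $|a_1|\lesssim(1+u^2)^{-1}$ and $|a_2|+|Q|_{\op}\lesssim(1+u^2)^{-(1-\sspeed)/2}$ directly, whereas you phrase the same degree count as ``continuous on $\Rbb$ and decaying at infinity, hence bounded''; your derivation of the upper bound for $a_3$ via $1+(1-\sspeed)u^2\ge(1-\sspeed)(1+u^2)$ is in fact more explicit than the paper's, which simply asserts the bound.
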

\begin{proof}
From \eqref{a1-def}, \eqref{a2-def} and \eqref{DwAc1-rep-A}, we observe that, for some $C>0$, the functions $a_1$ and $a_2$, and  the matrix $Q$ can be bounded by 
\begin{equation*}
|a_1|\leq \frac{C}{1+u^2} \AND |a_2|+|Q|_{\op} \leq \frac{1}{(1+u^2)^{\frac{1-\sspeed}{2}}}
\end{equation*}
for all $u\in \Rbb$. Since $3\sspeed-1\geq 0$, we further observe from  \eqref{a3-def} that $a_3$ can be bounded above and below
by
\begin{equation*}
0\leq a_3 \leq \frac{2(3\sspeed-1)}{(1-\sspeed)^2}
\end{equation*}
for all $u\in \Rbb$.
The stated bounds are then a consequence of the above inequalities and the change of variables defined by \eqref{cov} and \eqref{phi-def}; see also Lemma \ref{phi-lem}.
\end{proof}

\begin{lem}\label{lem-A0B}
Suppose $\sspeed \in [0,1)$. Then there exist constants $C(\ell)$, $\ell\in \Zbb_{\geq 0}$, such that
\begin{equation*}
\sup_{w\in \Rbb}\biggr|\frac{d^\ell B}{dw^\ell}(w)\biggr|_{\op} \leq C(\ell), \quad \ell\geq 0,
\end{equation*}
and
\begin{equation*}
\sup_{w\in\Rbb} \biggl|A^0(w) \frac{d^\ell B}{dw^\ell}(w)\biggr|_{\op} \leq C(\ell), \quad \ell>0.
\end{equation*}
\end{lem}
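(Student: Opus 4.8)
The plan is to reduce everything to boundedness statements about explicit rational-type functions of the single variable $u$, and then transfer these to functions of $w$ via the diffeomorphism $\phi$ of Lemma \ref{phi-lem}. Recall from Remark \ref{B-rem} that $B(w) = B\bigl|_{u=\phi^{-1}(w)}$, so by the chain rule $\frac{dB}{dw} = \bigl(\frac{d\phi}{du}\bigr)^{-1}\frac{dB}{du}$, and inductively $\frac{d^\ell B}{dw^\ell}$ is a polynomial (with rational-function-of-$u$ coefficients) in the quantities $\bigl(\frac{d\phi}{du}\bigr)^{-1}$ and $\frac{d^m B}{du^m}$ for $m\le \ell$. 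Since all entries of $B$ in \eqref{B-def} are rational functions of $u$ and $\sqrt{1+u^2}$ with denominators that are strictly positive powers of $(1+u^2)$ and $(1+(1-\sspeed)u^2)$ (both bounded below by $1$ and, for $\sspeed\in[0,1)$, with the latter growing like $u^2$), each entry of $B$ is a bounded smooth function of $u$ on all of $\Rbb$ whose decay improves under differentiation; and from \eqref{dphi}, $\bigl(\frac{d\phi}{du}\bigr)^{-1} = \frac{(1+u^2)^{(2+\sspeed)/2}}{1+(1-\sspeed)u^2}$ grows only like $u^{\sspeed}$, i.e. sublinearly.

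First I would make the key bookkeeping observation precise: differentiating any entry of $B$ once with respect to $u$ raises the total power of the denominator by at least one relative to the numerator's degree increase, so $\bigl|\frac{d^m B}{du^m}(u)\bigr|_{\op} \lesssim (1+|u|)^{-m-\epsilon_0}$ for some fixed $\epsilon_0>0$ depending only on $\sspeed$ — in fact $\bigl|\frac{d^m B}{du^m}\bigr|_{\op}\lesssim (1+u^2)^{-1/2}$ already suffices for $m\ge 0$, and the extra decay is a bonus. Combined with $\bigl(\frac{d\phi}{du}\bigr)^{-1}\lesssim (1+|u|)^{\sspeed}$ and the Faà di Bruno structure of $\frac{d^\ell B}{dw^\ell}$ as a sum of terms each of the form (rational coefficient in $u$) $\times \prod \bigl(\frac{d\phi}{du}\bigr)^{-1} \times \frac{d^{m_j}B}{du^{m_j}}$, one checks by an easy induction on $\ell$ that each such term is a bounded function of $u$ on $\Rbb$: the at most $\ell$ factors of $\bigl(\frac{d\phi}{du}\bigr)^{-1}$ contribute growth $O((1+|u|)^{\ell\sspeed})$, while the product of the $B$-derivatives and the rational prefactor (which itself comes from differentiating powers of $1+(1-\sspeed)u^2$ in the denominator) contributes decay that dominates this, since $\sspeed<1$ and each differentiation of $B$ beats a full power of $u$. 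This gives the first bound with $C(\ell)$ depending only on $\ell$ and $\sspeed$. For the second bound, observe from \eqref{A0-def} that $A^0(u)$ has $\op$-norm $\lesssim (1+(1-\sspeed)u^2)(1+u^2)^{-\sspeed} \lesssim (1+u^2)^{1-\sspeed}$, and for $\ell>0$ the factor $\frac{d^\ell B}{dw^\ell}$ carries at least one differentiation, hence (as just argued) decay strictly better than $(1+u^2)^{-(1-\sspeed)}$; a direct inspection of the product $A^0\frac{dB}{dw}$ (and its higher analogues), or simply multiplying the pointwise bounds, shows $\bigl|A^0\frac{d^\ell B}{dw^\ell}\bigr|_{\op}\le C(\ell)$. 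Finally, since $\phi:\Rbb\to\Rbb$ is a bijection, $\sup_{w\in\Rbb}(\,\cdot\,) = \sup_{u\in\Rbb}(\,\cdot\,)$ for any of these composite functions, so the $u$-side bounds immediately yield the claimed $w$-side bounds.

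The only genuine obstacle is the second estimate: one must confirm that the single power of $(1+u^2)^{1-\sspeed}$ of growth contributed by $A^0$ is indeed killed by the differentiation of $B$ — i.e. that the borderline case $\ell=1$ really works rather than merely staying bounded because of the $(1+u^2)^{-(1-\sspeed)/2}$-type denominators that showed up in the analogous matrix $Q$ in Lemma \ref{lem-a1a2a3Q}. I would handle this by computing $\frac{dB}{du}$ entrywise once (a short calculation, entirely parallel to the computation of $Q$ in \eqref{DwAc1-rep-A} and bounded just as in the proof of Lemma \ref{lem-a1a2a3Q}), multiplying by $A^0$ and by $\bigl(\frac{d\phi}{du}\bigr)^{-1}$, and reading off that every entry of $A^0\frac{dB}{dw}$ has denominator dominating numerator; the higher $\ell$ cases then follow a fortiori from the induction above since each additional $w$-derivative only improves decay net of the $O((1+|u|)^{\sspeed})$ cost. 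No step requires anything beyond elementary estimates on rational functions together with Lemma \ref{phi-lem}.
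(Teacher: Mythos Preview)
Your approach is essentially the paper's: pass to the $u$-variable, bound $\frac{d^m B}{du^m}$ entrywise, bound $\frac{d^m}{du^m}\bigl(\frac{d\phi}{du}\bigr)^{-1}$, and combine via the chain rule $\frac{d}{dw}=\bigl(\frac{d\phi}{du}\bigr)^{-1}\frac{d}{du}$; the paper simply records the explicit decay rates (e.g.\ $|B^{\ell}_{21}|\lesssim (1+u^2)^{-(1+\sspeed+\ell)/2}$ and $\bigl|\frac{d^m}{du^m}(\frac{d\phi}{du})^{-1}\bigr|\lesssim (1+u^2)^{(\sspeed-m)/2}$) and reads off boundedness, while you phrase the same combination through Fa\`a di Bruno and, for the second estimate, use submultiplicativity of $|\cdot|_{\op}$ rather than entrywise multiplication by the diagonal $A^0$. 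One small slip to clean up: your claim $|\frac{d^m B}{du^m}|_{\op}\lesssim (1+u^2)^{-1/2}$ for \emph{all} $m\ge 0$ is false at $m=0$ (e.g.\ $B_{11}=-u/\sqrt{1+u^2}$ does not decay), but this is harmless since every term in $\frac{d^\ell B}{dw^\ell}$ for $\ell\ge1$ contains exactly one factor $\frac{d^{j}B}{du^{j}}$ with $j\ge 1$, and the $\ell=0$ bound is immediate from the boundedness of $B$.
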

\begin{proof}
Suppose  $\sspeed \in [0,1)$ and $\ell \in \Zbb_{\geq 0}$. Then denoting the components of the $\ell$-th derivative of the matrix $B$ with respect to $u$ by
\begin{equation*}
\frac{d^\ell B}{du^\ell} = \begin{pmatrix} B^\ell_{11} & B^\ell_{12} \\[2mm] B^{\ell}_{21} & B^{\ell}_{22} \end{pmatrix},
\end{equation*}
it is not difficult to verify from \eqref{B-def} that there exist constants $C(\ell)>0$ such that we can bound the components  $B^\ell_{\alpha\beta}$ by
\begin{equation} \label{dBdu-bnds-A}
|B_{11}^0|+|B_{12}^0|+|B_{21}^0|+|B_{22}^0| \leq C(0)
\end{equation}
and
\begin{equation} \label{dBdu-bnds}
|B_{11}^\ell|+|B_{22}^\ell| \leq C(\ell)(1+u^2)^{-\frac{2+\ell}{2}}, \quad |B_{12}^\ell|\leq C(\ell)(1+u^2)^{-\frac{3-\sspeed + \ell}{2}}, \quad |B_{21}^\ell|\leq C(\ell)(1+u^2)^{-\frac{1+\sspeed+\ell}{2}},
\end{equation}
for all $u\in \Rbb$ and $\ell>0$. 
Then letting
\begin{equation*}
A^0\frac{d^\ell B}{du^\ell}=  \begin{pmatrix} A^\ell_{11} & A^\ell_{12} \\[2mm] A^{\ell}_{21} & A^{\ell}_{22} \end{pmatrix},
\end{equation*}
it follows immediately from  \eqref{A0-def} and the bounds \eqref{dBdu-bnds} that we can, increasing $C(\ell)$ if necessary,  bound the components of
$A^0\frac{d^\ell B}{du^\ell}$ by
 \begin{equation}  \label{A0dBdu-bnds}
|A_{12}^\ell|+|A_{21}^\ell| \leq C(\ell)(1+u^2)^{-\frac{1+\sspeed+\ell}{2}}, \quad |A_{11}^\ell|\leq C(\ell)(1+u^2)^{-\frac{2\sspeed + \ell}{2}}, \quad |A_{22}^\ell|\leq C(\ell)(1+u^2)^{-\frac{2+\ell}{2}}, 
\end{equation}
for all $u\in \Rbb$ and $\ell>0$.
Next, by the chain rule, we have that
\begin{equation}\label{dw-2-du}
\frac{d\;}{dw} = \biggl(\frac{d\phi}{du}\biggr)^{-1}\frac{d\;}{du}
\end{equation} 
where we note from \eqref{dphi} that, for each $m\in \Zbb_{\geq 0}$, the $m$-th derivative with respect to $u$ of $\bigl(\frac{d\phi}{du})^{-1}$ is bounded by
\begin{equation} \label{dphidu-bnds}
\biggl|\frac{d^m}{du^m}\biggl(\frac{d\phi}{du}\biggr)^{-1}\biggr| \leq C(m) (1+u^2)^{\frac{\sspeed-m}{2}}
\end{equation}
for all $u\in \Rbb$. The stated bounds for $\frac{d^\ell B}{dw^\ell}$ and $A^0\frac{d^\ell B}{dw^\ell}$  now follow easily
from the inequalities \eqref{dBdu-bnds-A}, \eqref{dBdu-bnds}, \eqref{A0dBdu-bnds} and \eqref{dphidu-bnds}, the relation \eqref{dw-2-du}, and Lemma \ref{phi-lem}.
\end{proof}

\begin{cor} \label{cor-Ac0Bc}
Suppose $\sspeed \in [0,1)$. Then there exist constants $C(\ell)$, $\ell\in \Zbb_{\geq 0}$, such that
\begin{equation*}
\sup_{(t,\wtt)\in (0,1]\times\Rbb} t^{\ell(3\sspeed-1)}\bigl|\del{\wtt}^\ell \Btt(t,\wtt)\bigr|_{\op} \leq C(\ell) , \quad \ell \geq 0,
\end{equation*}
and
\begin{equation*}
\sup_{(t,\wtt)\in (0,1]\times\Rbb} t^{\ell(3\sspeed-1)} \bigl|\Att^0(t,\wtt) \del{\wtt}^\ell \Btt(t,\wtt)\bigr|_{\op} \leq C(\ell), \quad  \ell > 0.
\end{equation*}
\end{cor}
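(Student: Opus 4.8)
The plan is to derive the claimed bounds for $\Btt$ and $\Att^0 \del{\wtt}^\ell \Btt$ directly from the corresponding $w$-derivative bounds in Lemma~\ref{lem-A0B} by unwinding the definitions in \eqref{AcBc-def} and carefully tracking the powers of $t$ produced by the chain rule. Recall from \eqref{AcBc-def} that $\Btt(t,\wtt) = B(t^{1-3\sspeed}\wtt)$ and $\Att^0(t,\wtt) = A^0(t^{1-3\sspeed}\wtt)$, where on the right-hand side $B$ and $A^0$ are the $w$-dependent maps of Remark~\ref{B-rem}. The key observation is that each $\del{\wtt}$ applied to $\Btt(t,\wtt)$ brings down one factor of $t^{1-3\sspeed}$, i.e.
\begin{equation*}
\del{\wtt}^\ell \Btt(t,\wtt) = t^{\ell(1-3\sspeed)}\frac{d^\ell B}{dw^\ell}\bigl(t^{1-3\sspeed}\wtt\bigr),
\end{equation*}
and similarly for $\Att^0\del{\wtt}^\ell\Btt$, since $\Att^0$ carries no $\wtt$-derivative. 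Multiplying by $t^{\ell(3\sspeed-1)}$ exactly cancels this factor.

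First I would fix $\sspeed\in[0,1)$ and $\ell\in\Zbb_{\geq 0}$, and compute $\del{\wtt}^\ell\Btt(t,\wtt)$ via the chain rule, noting that $\del{\wtt}\bigl(t^{1-3\sspeed}\wtt\bigr) = t^{1-3\sspeed}$ is independent of $\wtt$, so no lower-order terms are generated and the displayed identity above holds exactly. Second, I would take the operator norm of both sides and use Lemma~\ref{lem-A0B} to bound $\bigl|\frac{d^\ell B}{dw^\ell}(v)\bigr|_{\op}\leq C(\ell)$ uniformly in the argument $v = t^{1-3\sspeed}\wtt \in \Rbb$; the point is that the bound in Lemma~\ref{lem-A0B} is a supremum over all of $\Rbb$, so it applies regardless of the value of $t^{1-3\sspeed}\wtt$. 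This immediately yields
\begin{equation*}
t^{\ell(3\sspeed-1)}\bigl|\del{\wtt}^\ell\Btt(t,\wtt)\bigr|_{\op} = \biggl|\frac{d^\ell B}{dw^\ell}\bigl(t^{1-3\sspeed}\wtt\bigr)\biggr|_{\op} \leq C(\ell)
\end{equation*}
for all $(t,\wtt)\in(0,1]\times\Rbb$, which is the first claimed bound. Third, for the second bound I would write $\Att^0(t,\wtt)\del{\wtt}^\ell\Btt(t,\wtt) = t^{\ell(1-3\sspeed)}\bigl(A^0\frac{d^\ell B}{dw^\ell}\bigr)\bigl(t^{1-3\sspeed}\wtt\bigr)$ and invoke the second estimate of Lemma~\ref{lem-A0B}, valid for $\ell>0$, in exactly the same way.

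There is essentially no obstacle here: the statement is a routine reformulation of Lemma~\ref{lem-A0B} after accounting for the scaling $w = t^{1-3\sspeed}\wtt$, and the only thing that requires a moment's care is checking that the chain rule produces precisely the factor $t^{\ell(1-3\sspeed)}$ with no additional $t$- or $\wtt$-dependence, which is true because the map $\wtt\mapsto t^{1-3\sspeed}\wtt$ is linear. The uniformity in $t$ down to $t=0$ is automatic since the right-hand sides are $t$-independent after the cancellation. One could also phrase the argument slightly more pedantically by induction on $\ell$, but the direct chain-rule computation is cleanest.
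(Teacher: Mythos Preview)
Your proposal is correct and matches the paper's own proof essentially line for line: the paper also writes $\del{\wtt}^\ell \Btt(t,\wtt) = t^{\ell(1-3\sspeed)}\frac{d^\ell B}{dw^\ell}(t^{1-3\sspeed}\wtt)$ and $\Att^0(t,\wtt)\del{\wtt}^\ell \Btt(t,\wtt) = t^{\ell(1-3\sspeed)}A^0(t^{1-3\sspeed}\wtt)\frac{d^\ell B}{dw^\ell}(t^{1-3\sspeed}\wtt)$ and then invokes Lemma~\ref{lem-A0B}. Your explicit remark that the inner map is linear in $\wtt$, so no lower-order chain-rule terms arise, is the only elaboration beyond what the paper states.
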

\begin{proof}
By \eqref{Wtt-def} and \eqref{AcBc-def}, we have
\begin{equation*}
\del{\wtt}^\ell \Btt(t,\wtt) = t^{\ell(1-3\sspeed)}\frac{d^\ell B}{dw^\ell}(t^{(1-3\sspeed)}\wtt) \AND
\Att^0(t,\wtt)\del{\wtt}^\ell \Btt(t,\wtt) = t^{\ell(1-3\sspeed)}A^0(t^{(1-3\sspeed)}\wtt)\frac{d^\ell B}{dw^\ell}(t^{(1-3\sspeed)}\wtt).
\end{equation*}
The proof then follows from an application of Lemma \ref{lem-A0B}
\end{proof}

\begin{lem}\label{lem-Ac0-lbnd}
Suppose $\sspeed \in (0,1)$ and let
\begin{equation*}
\gamma = \begin{cases} \sspeed & \text{if  $0<\sspeed \leq 1/2$} \\
\sspeed^{2(1-\sspeed)}(1-\sspeed)^{2\sspeed -1} & \text{if $1/2 < \sspeed <1$} \end{cases}.
\end{equation*}
Then there exists a constant $C\geq 1$ such that
\begin{equation*}
\gamma \id \leq \Att^0(t,\wtt)
\leq C\bigl(1+t^{2(1-3\sspeed)}|\wtt|^2\bigr)\id
\end{equation*}
for all $(t,\wtt) \in (0,1]\times \Rbb$.
\end{lem}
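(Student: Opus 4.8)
The plan is to reduce the two matrix inequalities to scalar estimates by exploiting that $A^0$, and hence $\Att^0(t,\wtt)=A^0(t^{1-3\sspeed}\wtt)$, is diagonal. By \eqref{A0-def} its eigenvalues are $1$ and
\[
f(u) := \frac{\sspeed\bigl(1+(1-\sspeed)u^2\bigr)}{(1+u^2)^{\sspeed}}, \qquad u=\phi^{-1}\bigl(t^{1-3\sspeed}\wtt\bigr),
\]
and since $|\phi(u)|^2 = u^2/(1+u^2)^{\sspeed} = t^{2(1-3\sspeed)}|\wtt|^2$ for this value of $u$, it suffices to show that $\gamma \leq f(u)$ and $f(u)\leq C\bigl(1+|\phi(u)|^2\bigr)$ hold for all $u\in\Rbb$ with some $C\geq 1$; the eigenvalue $1$ is then automatically pinched between $\gamma$ (which turns out to be $<1$) and $C\bigl(1+|\phi(u)|^2\bigr)\geq C\geq 1$. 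Because $\phi^{-1}\colon\Rbb\to\Rbb$ is a bijection by Lemma \ref{phi-lem}, extremising over $w\in\Rbb$ is the same as extremising over $u\in\Rbb$, so the bound obtained this way is automatically uniform in $(t,\wtt)\in(0,1]\times\Rbb$.

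For the lower bound I would regard $f$ as a function of $s=u^2\in[0,\infty)$ and compute its derivative: a short calculation shows $f'(s)$ vanishes only at $s_\ast=\frac{2\sspeed-1}{(1-\sspeed)^2}$, which lies in $[0,\infty)$ exactly when $\sspeed\geq 1/2$. Hence for $0<\sspeed\leq 1/2$ the function $f$ is nondecreasing on $[0,\infty)$, so $\min_u f = f(0)=\sspeed$; while for $1/2<\sspeed<1$ one checks that $f'$ changes sign from $-$ to $+$ at $s_\ast$ and that $f\to+\infty$ as $s\to\infty$ (because $1-\sspeed>0$), so $s_\ast$ is the global minimum, and using $1+s_\ast=\sspeed^2/(1-\sspeed)^2$ and $1+(1-\sspeed)s_\ast=\sspeed/(1-\sspeed)$ one obtains $f(s_\ast)=\sspeed^{2(1-\sspeed)}(1-\sspeed)^{2\sspeed-1}$. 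In both cases $\min_u f = \gamma$, and since $\gamma\leq\sspeed<1$ we conclude $\gamma\,\id\leq A^0(w)$ for every $w\in\Rbb$, which gives the left-hand inequality after setting $w=t^{1-3\sspeed}\wtt$.

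For the upper bound I would observe that
\[
\frac{f(u)}{1+|\phi(u)|^2} = \frac{\sspeed\bigl(1+(1-\sspeed)u^2\bigr)}{(1+u^2)^{\sspeed}+u^2}
\]
is a continuous, even function of $u$ whose numerator grows like $\sspeed(1-\sspeed)u^2$ and whose denominator grows like $u^2$ as $|u|\to\infty$ (since $\sspeed<1$ makes $u^2$ dominate $(1+u^2)^{\sspeed}$); hence this ratio has a finite limit at infinity and is therefore bounded on $\Rbb$ by some constant $C'$. Together with $1\leq 1+|\phi(u)|^2$, this yields $A^0(w)\leq C\bigl(1+|w|^2\bigr)\id$ with $C=\max\{1,C'\}$, and substituting $w=t^{1-3\sspeed}\wtt$ gives the right-hand inequality.

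The one step requiring genuine care is the explicit minimisation in the regime $1/2<\sspeed<1$: one must confirm that $s_\ast$ is a global minimum rather than a local extremum of the wrong type (this follows from the sign change of $f'$ and the fact that $f$ blows up at infinity), and then carry out the elementary but slightly fiddly algebra identifying $f(s_\ast)$ with the stated value $\gamma=\sspeed^{2(1-\sspeed)}(1-\sspeed)^{2\sspeed-1}$. Everything else — the value $f(0)=\sspeed$, the asymptotic comparison in the upper bound, and the reduction from $\Att^0$ to $A^0$ via \eqref{AcBc-def} — is routine.
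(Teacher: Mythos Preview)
Your argument is correct and follows the same overall strategy as the paper: reduce to bounding the nontrivial eigenvalue $\alpha(u)=\sspeed(1+(1-\sspeed)u^2)/(1+u^2)^{\sspeed}$, minimise it by elementary calculus to obtain the lower bound $\gamma$, and compare growth rates for the upper bound. The lower-bound computation is essentially identical (your substitution $s=u^2$ and the identities $1+s_\ast=\sspeed^2/(1-\sspeed)^2$, $1+(1-\sspeed)s_\ast=\sspeed/(1-\sspeed)$ reproduce the paper's critical points $u=\pm\sqrt{2\sspeed-1}/(1-\sspeed)$ and the stated minimum value).

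For the upper bound you take a slightly different, and in fact more elementary, route: you stay in the $u$-variable and bound the explicit ratio
\[
\frac{f(u)}{1+|\phi(u)|^2}=\frac{\sspeed\bigl(1+(1-\sspeed)u^2\bigr)}{(1+u^2)^{\sspeed}+u^2}
\]
directly by continuity and its finite limit $\sspeed(1-\sspeed)$ at infinity. The paper instead passes to the $w$-variable and invokes the asymptotic expansion of $\phi^{-1}$ from Lemma~\ref{phi-lem} to show $\alpha(w)=|w|^2\bigl(\sspeed(1-\sspeed)+\Ord(|w|^{2/(\sspeed-1)})\bigr)$. Both arrive at the same conclusion $\alpha\leq C(1+|w|^2)$; your version simply avoids the detour through the inverse-function asymptotics.
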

\begin{proof}
Setting
\begin{equation*}
\alpha(u) = \frac{\sspeed (1+ (1-\sspeed)u^2)}{(1+u^2)^{\sspeed}},
\end{equation*}
we observe, since $\sspeed \in (0,1)$, that $\lim_{u  \rightarrow \pm\infty} \alpha(u)= \infty$. Moreover, differentiating $\alpha$, we find that
\begin{equation*}
\alpha'(u) = 0 \quad \Longleftrightarrow \quad \begin{cases} u=0 & \text{if  $0<\sspeed \leq 1/2$} \\
u=0, \, \pm \frac{\sqrt{2\sspeed-1}}{1-\sspeed} & \text{if $1/2 < \sspeed < 1$} \end{cases}.
\end{equation*}
From this, we conclude that the absolute minimum of $\alpha(u)$ on $\Rbb$ is given by the constant $\gamma$ defined in the statement of the lemma, and in particular that 
$\alpha(u)\geq \gamma >0$ for all $u\in \Rbb$.
Since $0< \gamma <1$, the inequality
\begin{equation*}
\gamma \id \leq \Att^0(t,\wtt), \quad (t,\wtt)\in (0,1]\times \Rbb,
\end{equation*}
is then a direct consequence of the change of variables defined by  \eqref{cov} and \eqref{phi-def}, and the definitions \eqref{A0-def},
\eqref{Wtt-def} and \eqref{AcBc-def}.

Next, expressing $\alpha$ as a function of $w$
via 
\begin{equation*}
\alpha(w) = \frac{\sspeed (1+ (1-\sspeed)
(\phi^{-1}(w))^2)}{(1+(\phi^{-1}(w))^2)^{\sspeed}},
\end{equation*}
it follows from Lemma \ref{phi-lem} 
that, for any $w_0>0$, $\alpha$ satisfies 
\begin{equation*}
\alpha(w) = |w|^2\Bigl(\sspeed(1-\sspeed) + \Ord\bigl(|w|^{\frac{2}{\sspeed-1}}\bigr)\Bigr), \quad |w|>w_0,
\end{equation*}
and consequently, there exists a constant $C\geq 1$
such that $\alpha$ can be bounded above by
$\alpha(w) \leq C(1+|w|^2)$ for all $w\in \Rbb$. 
From the definitions \eqref{A0-def},
\eqref{Wtt-def} and \eqref{AcBc-def}, it
then follows that
\begin{equation*}
\Att^0(t,\wtt) \leq C\bigl(1+t^{2(1-3\sspeed)}|\wtt|^2\bigr)\id, \quad
(t,\wtt) \in (0,1]\times \Rbb.
\end{equation*}
\end{proof}

In the next lemma, we establish a variation of a standard commutator estimate, e.g. see Theorem \ref{Product}.(i) below. The reason we need to consider this variation is because the matrix $\Att^0(t,\wtt)$
that appears in the $\Att^0(t,\wtt)[\del{x}^\ell,\Btt(t,\wtt)]V$
is not bounded and so it cannot be estimated separately from the commutator $[\del{x}^\ell,\Btt(t,\wtt)]V$. It is only when the product  $\Att^0(t,\wtt)[\del{x}^\ell,\Btt(t,\wtt)]V$ is estimated together that we can establish an effective bound.

\begin{lem}\label{comm-lem}
Suppose $\frac{1}{3}<\sspeed < 1$ and $\ell \in \Zbb_{>0}$. Then
\begin{equation*}
\bigl\|\Att^0(t,\wtt)[\del{x}^\ell,\Btt(t,\wtt)]V\bigr\|_{L^2} \lesssim  t^{\ell(1-3\sspeed)}\bigl( 1+ \norm{\wtt}_{W^{1,\infty}}^{\ell-1}\bigr)\bigl(\norm{\wtt}_{W^{1,\infty}}\norm{V}_{H^{\ell-1}}+\norm{\wtt}_{H^\ell}\norm{V}_{L^\infty}\bigr)
\end{equation*} 
for all $t\in (0,1]$, $\wtt\in W^{1,\infty}(\Tbb)\cap H^{\ell}(\Tbb)$ and $V\in L^\infty(\Tbb,\Rbb^2)\cap H^{\ell-1}(\Tbb,\Rbb^2)$.
\end{lem}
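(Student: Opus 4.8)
The plan is to reduce the claimed estimate to the standard commutator/product estimate (Theorem~\ref{Product}.(i)) by absorbing the singular, unbounded behaviour of $\Att^0$ into the $t$-weights and the coefficient bounds from Corollary~\ref{cor-Ac0Bc}. First I would write out the commutator $[\del{x}^\ell,\Btt(t,\wtt)]V$ using the Leibniz rule, so that
\begin{equation*}
[\del{x}^\ell,\Btt(t,\wtt)]V = \sum_{m=1}^{\ell}\binom{\ell}{m}\bigl(\del{x}^m \Btt(t,\wtt)\bigr)\del{x}^{\ell-m}V,
\end{equation*}
and then multiply on the left by $\Att^0(t,\wtt)$. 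The crucial observation is that the dangerous factor is always $\Att^0(t,\wtt)\del{x}^m\Btt(t,\wtt)$ with $m\geq 1$, and this is \emph{not} estimated by bounding $\Att^0$ and $\del{x}^m\Btt$ separately (indeed $\Att^0$ is unbounded by Lemma~\ref{lem-Ac0-lbnd}), but rather by keeping the product together and invoking the second bound of Corollary~\ref{cor-Ac0Bc}, which gives $|\Att^0(t,\wtt)\del{\wtt}^\ell\Btt(t,\wtt)|_{\op}\lesssim t^{\ell(1-3\sspeed)}$ for $\ell>0$.

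Next I would express $\del{x}^m\Btt(t,\wtt)$ in terms of $\wtt$-derivatives of $\Btt$ via the chain rule (the Faà di Bruno formula), so that $\del{x}^m\Btt$ is a sum of terms of the form $(\del{\wtt}^j\Btt(t,\wtt))\,\del{x}^{m_1}\wtt\cdots\del{x}^{m_j}\wtt$ with $m_1+\cdots+m_j=m$ and $1\leq j\leq m$. Hitting this with $\Att^0$, the leading $\wtt$-derivative term carries the $\Att^0$, so $\Att^0\del{\wtt}^j\Btt$ contributes $t^{j(1-3\sspeed)}$, while the remaining factors $\del{\wtt}^{j'}\Btt$ ($j'<j$) that appear when differentiating further are uniformly $\Ord(t^{j'(1-3\sspeed)})$ by the first bound of Corollary~\ref{cor-Ac0Bc}; collecting the $t$-powers produces exactly $t^{m(1-3\sspeed)}$, and since $m\leq\ell$ and $1-3\sspeed<0$ we have $t^{m(1-3\sspeed)}\leq t^{\ell(1-3\sspeed)}$ on $(0,1]$, so the worst $t$-weight $t^{\ell(1-3\sspeed)}$ can be factored out of every term. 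What remains after extracting the $t$-weight is a purely $x$-derivative multilinear expression in $\wtt$ and $V$ that is structurally identical to the integrand in the classical Moser-type commutator estimate, and so the $L^2$ norm is controlled by $(1+\norm{\wtt}_{W^{1,\infty}}^{\ell-1})(\norm{\wtt}_{W^{1,\infty}}\norm{V}_{H^{\ell-1}}+\norm{\wtt}_{H^\ell}\norm{V}_{L^\infty})$ via Theorem~\ref{Product}.(i) (together with Gagliardo--Nirenberg interpolation to distribute derivatives among the factors).

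The main obstacle, and the reason this is not merely a citation, is bookkeeping the $t$-weights so that the coupling between $\Att^0$ and the \emph{multiple} factors produced by Faà di Bruno is handled correctly: each application of $\del{x}$ to $\Btt(t,\wtt)$ brings down a factor $t^{1-3\sspeed}$ together with a $\del{\wtt}\Btt$, and one must verify that exactly one of these $\del{\wtt}$-factors can be paired with $\Att^0$ while the rest are bounded by the unweighted-in-$\Att^0$ bound, with the arithmetic of the exponents closing to give precisely $t^{\ell(1-3\sspeed)}$ and not something worse. Once the weights are pinned down, the functional-analytic part is routine: one splits into the top-order term $m=\ell$ (where $V$ is undifferentiated, giving the $\norm{\wtt}_{H^\ell}\norm{V}_{L^\infty}$ contribution) and the lower-order terms $1\leq m\leq\ell-1$ (where the $(1+\norm{\wtt}_{W^{1,\infty}}^{\ell-1})\norm{\wtt}_{W^{1,\infty}}\norm{V}_{H^{\ell-1}}$ contribution arises), applies Hölder and Gagliardo--Nirenberg, and sums. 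I would also note that the hypothesis $\wtt\in W^{1,\infty}\cap H^\ell$ is exactly what is needed to run the standard estimate, and that no new smallness or sign condition on $\sspeed$ beyond $\frac{1}{3}<\sspeed<1$ enters—the restriction $\sspeed>\frac{1}{3}$ is used only through $1-3\sspeed<0$, which makes $t^{\ell(1-3\sspeed)}$ the largest (rather than smallest) of the weights $t^{m(1-3\sspeed)}$ on $(0,1]$.
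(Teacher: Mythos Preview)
Your plan is correct and mirrors the paper's proof closely: Leibniz expansion of the commutator, Fa\`{a} di Bruno on $\del{x}^m\Btt(t,\wtt)$, the key step of keeping the product $\Att^0\del{\wtt}^j\Btt$ together so that Corollary~\ref{cor-Ac0Bc} applies, and then H\"older plus Gagliardo--Nirenberg interpolation to close. One point to correct in your write-up: in the Fa\`{a} di Bruno expansion of $\del{x}^m\Btt(t,\wtt)$, each term contains exactly \emph{one} factor $\del{\wtt}^j\Btt(t,\wtt)$ (with $1\le j\le m$) multiplied by a product of derivatives $\del{x}^{m_1}\wtt\cdots\del{x}^{m_j}\wtt$ of $\wtt$ alone---there are no ``remaining factors $\del{\wtt}^{j'}\Btt$.'' Consequently the $t$-weight of each such term is $t^{j(1-3\sspeed)}$, not a product of several such powers; since $1\le j\le m$ and $1-3\sspeed<0$, one has $t^{j(1-3\sspeed)}\le t^{m(1-3\sspeed)}\le t^{\ell(1-3\sspeed)}$ on $(0,1]$, which is the bound you want, just for a slightly different reason than you stated. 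The paper's case split (isolating the term with a single $x$-derivative on $\Btt$, rather than your top-order term) is a cosmetically different but equivalent way to organise the same estimate.
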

\begin{proof}
Suppose that $\ell \in \Zbb_{>0}$, $t\in (0,1]$,
$\wtt\in W^{1,\infty}(\Tbb)\cap H^{\ell}(\Tbb)$ and $V\in L^\infty(\Tbb,\Rbb^2)\cap H^{\ell-1}(\Tbb,\Rbb^2)$. Then
expanding out the commutator $\Att^0(t,\wtt)[\del{x}^\ell,\Btt(t,\wtt)]V$ yields
\begin{align}
\Att^0(t,\wtt)[\del{x}^\ell,\Btt(t,\wtt)]V &= \Att(t,\wtt)\Bigl(\del{x}^\ell\bigl(\Btt(t,\wtt)V\bigr)-\Btt(t,\wtt)\del{x}^\ell V\Bigr) \notag \\
&= \Att(t,\wtt)\Biggl(\sum_{m=0}^\ell \binom{\ell}{m} \del{x}^{\ell-m}(\Btt(t,\wtt))\del{x}^{m}V-\Btt(t,\wtt)\del{x}^\ell V\Biggr)
\notag \\
&= \sum_{m=0}^{\ell-1} \binom{\ell}{m} \Sc_m \label{comm-sum}
\end{align}
where
\begin{equation*}
\Sc_m = \Att(t,\wtt)\del{x}^{\ell-m}(\Btt(t,\wtt))\del{x}^{m}V.
\end{equation*}
To estimate the terms in the sum \eqref{comm-sum}, we consider two cases: 

\bigskip

\noindent\underline{Case 1: $m=\ell-1$}:
Setting $m=\ell-1$, we note that
\begin{equation*}
\Sc_{\ell-1} =  \Att(t,\wtt)\del{\wtt}\Btt(t,\wtt)\del{x}\wtt \del{x}^{\ell-1}V.
\end{equation*} 
With the help of H\"{o}lder's inequality, we can estimate the $L^2$-norm of 
$\Sc_{\ell-1}$ by
\begin{equation}\label{Sc-est-1}
\norm{\Sc_{\ell-1}}_{L^2}\leq \norm{\Att(t,\wtt)\del{\wtt}\Btt(t,\wtt)}_{L^\infty}\norm{\del{x}\wtt}_{L^\infty}\norm{\del{x}^{\ell-1}V}_{L^2} \lesssim  t^{1-3\sspeed}\norm{\wtt}_{W^{1,\infty}}\norm{V}_{H^{\ell-1}}
\end{equation}
where the second inequality is a consequence of Corollary \ref{cor-Ac0Bc}.

\bigskip

\noindent\underline{Case 2: $0\leq m<\ell-1$}:
Noting, in this case, that $\big(\frac{2(\ell-1)}{\ell-m-1}\bigr)^{-1}+ \bigl(\frac{2(\ell-1)}{m}\bigr)^{-1} = \frac{1}{2}$
and $1<\frac{2(\ell-1)}{\ell-m-1}, \frac{2(\ell-1)}{m}\leq \infty$, we find from an application of  H\"{o}lder's inequality that
the $L^2$-norm of $\Sc_m$ is bounded by
\begin{equation}\label{Sc-est-2}
\norm{\Sc_m}_{L^2} \leq \norm{\Att(t,\wtt)\del{x}^{\ell-m}(\Btt(t,\wtt))}_{L^{\frac{2(\ell-1)}{\ell-m-1}}}\norm{\del{x}^{m}V}_{L^{\frac{2(\ell-1)}{m}}}.
\end{equation}
We then proceed by estimating the right hand side of the above inequality using the following Gagliardo-Nirenberg interpolation inequality (Theorem \ref{GNII}):
\begin{equation} \label{GN-inq}
\norm{\del{x}^k f}_{L^{\frac{2 s}{k}}} \lesssim \norm{f}_{L^\infty}^{1-\frac{k}{s}}\norm{f}_{H^s}^{\frac{k}{s}},\quad 0\leq k < s.
\end{equation} 
With the help of this inequality, we can immediately estimate the second term on the right hand side of  \eqref{Sc-est-2} by
\begin{equation}\label{Sc-est-3}
\norm{\del{x}^{m}V}_{L^{\frac{2(\ell-1)}{m}}} \lesssim \norm{V}_{L^\infty}^{1-\frac{m}{\ell-1}} \norm{V}_{H^{\ell-1}}^{\frac{m}{\ell-1}}.
\end{equation}
To estimate the first term,
we adapt the proof of the Moser estimate from Proposition 3.9 of  \cite[Ch.~13]{TaylorIII:1996} by expanding
the derivative $\del{x}^{\ell-m}(\Btt(t,\wtt))$ using Fa\`{a} di Bruno's formula to get
\begin{equation*}
\del{x}^{\ell-m}(\Btt(t,\wtt))=C_{1\ldots 1}\del{\wtt}^{\ell-m}\Btt(t,\wtt)(\del{x}\wtt)^{\ell-m}
+\sum C_{k_1\ldots k_j}\del{\wtt}^j\Bc(t,\wtt)\del{x}^{k_1}\wtt\cdots \del{x}^{k_j}\wtt
\end{equation*}
where the sum is over the set of integers $k_i$, $1\leq i\leq j$, satisfying $k_1+\cdots+k_j=\ell-m$, $k_i\geq 1$, and $k_{i'}\geq 2$ for some $i'\in \{1,\ldots,j\}$.
Noting that
\begin{equation*}
\frac{1}{\frac{2(\ell-1)}{k_1}}+\ldots+  \frac{1}{\frac{2(\ell-1)}{k_{i'-1}}} +\frac{1}{\frac{2(\ell-1)}{k_{i'}-1}}+ \frac{1}{\frac{2(\ell-1)}{k_{i'+1}}}+\ldots +\frac{1}{\frac{2(\ell-1)}{k_j}}= \frac{1}{\frac{2(\ell-1)}{\ell-m-1}},
\end{equation*}
we use the triangle and generalised H\"{o}lder inequalities to bound $\del{x}^{\ell-m}(\Btt(t,\wtt))$ by
\begin{align}
 \norm{\Att(t,\wtt)\del{x}^{\ell-m}(\Btt(t,\wtt))}_{L^{\frac{2(\ell-1)}{\ell-m-1}}}
&\lesssim \bigl\|\Att(t,\wtt)\del{\wtt}^{\ell-m}\Btt(t,\wtt)(\del{x}\wtt)^{\ell-m}\bigr\|_{L^{\frac{2(\ell-1)}{\ell-m-1}}} \notag 
\\
 +&\sum \bigl\|\Att(t,\wtt)\del{\wtt}^j\Bc(t,\wtt)\del{x}^{k_1}\wtt\cdots \del{x}^{k_j}\wtt\bigr\|_{L^{\frac{2(\ell-1)}{\ell-m-1}}}\notag\\
\lesssim&\bigl\|\Att(t,\wtt)\del{\wtt}^{\ell-m}\Btt(t,\wtt)\bigr\|_{L^\infty}\norm{\del{x}\wtt}_{L^\infty} \bigl\|(\del{x}\wtt)^{\ell-m-1}\bigr\|_{L^{\frac{2(\ell-1)}{\ell-m-1}}} \notag 
\\
+&\sum\bigl\|\Att(t,\wtt)\del{\wtt}^j\Bc(t,\wtt)\bigr\|_{L^\infty} \bigl\|\del{x}^{k_1}\wtt\cdots \del{x}^{k_j}\wtt\bigr\|_{L^{\frac{2(\ell-1)}{\ell-m-1}}} \notag \\
\lesssim&\bigl\|\Att(t,\wtt)\del{\wtt}^{\ell-m}\Btt(t,\wtt)\bigr\|_{L^\infty}\norm{\del{x}\wtt}_{L^\infty} 
\norm{\del{x}\wtt}^{\ell-m-1}_{L^{2(\ell-1)}} \notag 
\\
+\sum\bigl\|\Att(t,\wtt)&\del{\wtt}^j\Bc(t,\wtt)\bigr\|_{L^\infty} \norm{\del{x}^{k_1}\wtt}_{L^{\frac{2(\ell-1)}{k_1}}}\cdots 
\norm{\del{x}^{k_{i'}}\wtt}_{L^{\frac{2(\ell-1)}{k_{i'}-1}}} \cdots \norm{\del{x}^{k_j}\wtt}_{L^{\frac{2(\ell-1)}{k_j}}}. \notag
\end{align}
By \eqref{GN-inq}, we then have
\begin{align*}
 \norm{\Att(t,\wtt)\del{x}^{\ell-m}(\Btt(t,\wtt))}_{L^{\frac{2(\ell-1)}{\ell-m-1}}} \lesssim&\bigl\|\Att(t,\wtt)\del{\wtt}^{\ell-m}\Btt(t,\wtt)\bigr\|_{L^\infty}\norm{\del{x}\wtt}_{L^\infty} \Bigl(\norm{\wtt}_{L^\infty}^{1-\frac{1}{\ell-1}}\norm{\wtt}_{H^{\ell-1}}^{\frac{1}{\ell-1}}\Bigr)^{\ell-m-1}
 \notag 
\\
+\sum\bigl\|\Att(t,\wtt)\del{\wtt}^j\Bc(t,\wtt)\bigr\|_{L^\infty} &\norm{\wtt}_{L^{\infty}}^{1-\frac{k_1}{\ell-1}}\norm{\wtt}_{H^{\ell-1}}^{\frac{k_1}{\ell-1}}\cdots 
\norm{\del{x}\wtt}_{L^{\infty}}^{1-\frac{k_{i'}-1}{\ell-1}}\norm{\del{x}\wtt}_{H^{\ell-1}}^{\frac{k_{i'}-1}{\ell-1}} \cdots \norm{\wtt}_{L^{\infty}}^{1-\frac{k_j}{\ell-1}}\norm{\wtt}_{H^{\ell-1}}^{\frac{k_j}{\ell-1}} \notag\\
\lesssim& \sum_{j=1}^{\ell-m}\bigl\|\Att(t,\wtt)\del{\wtt}^j\Bc(t,\wtt)\bigr\|_{L^\infty}
\norm{\wtt}_{W^{1,\infty}}^{j-\frac{\ell-m-1}{\ell-1}}\norm{\wtt}_{H^{\ell}}^{\frac{\ell-m-1}{\ell-1}}.
\end{align*}
Since $1-3\sspeed > 0$ and $t\in (0,1]$ by assumption, we conclude from the above inequality
and Corollary \ref{cor-Ac0Bc} that
\begin{equation}\label{Sc-est-4}
 \norm{\Att(t,\wtt)\del{x}^{\ell-m}(\Btt(t,\wtt))}_{L^{\frac{2(\ell-1)}{\ell-m-1}}} \lesssim t^{(\ell-m)(1-3\sspeed)}\bigl( 1+ \norm{\wtt}_{W^{1,\infty}}^{\ell-m-1}\bigr)\norm{\wtt}_{W^{1,\infty}}^{\frac{m}{\ell-1}}\norm{\wtt}_{H^{\ell}}^{\frac{\ell-m-1}{\ell-1}}, \quad 0\leq m <\ell-1.
\end{equation}
Together \eqref{Sc-est-2}, \eqref{Sc-est-3} and \eqref{Sc-est-4}
show that
\begin{equation}\label{Sc-est-5}
\norm{\Sc_m}\lesssim  t^{(\ell-m)(1-3\sspeed)}\bigl( 1+ \norm{\wtt}_{W^{1,\infty}}^{\ell-m-1}\bigr)
\norm{\wtt}_{W^{1,\infty}}^{\frac{m}{\ell-1}}\norm{\wtt}_{H^{\ell}}^{\frac{\ell-m-1}{\ell-1}}\norm{V}_{L^\infty}^{\frac{\ell-m-1}{\ell-1}} \norm{V}_{H^{\ell-1}}^{\frac{m}{\ell-1}}
\end{equation}
for $0\leq m <\ell-1$. 

\bigskip

We now complete the proof by applying the $L^2$-norm to \eqref{comm-sum} and using the triangle inequality in conjunction with the estimates \eqref{Sc-est-1}, \eqref{Sc-est-5},  and Young's inequality (i.e. $ab \leq p^{-1}a^p + q^{-1}b^q$, $a,b\geq0$ with $p=\frac{\ell-1}{m}$ and $q=\frac{\ell-1}{\ell-m-1}$) to deduce that
\begin{equation*}
\bigl\|\Att^0(t,\wtt)[\del{x}^\ell,\Btt(t,\wtt)]V\bigr\|_{L^2} \lesssim  t^{\ell(1-3\sspeed)}\bigl( 1+ \norm{\wtt}_{W^{1,\infty}}^{\ell-1}\bigr)\bigl(\norm{\wtt}_{W^{1,\infty}}\norm{V}_{H^{\ell-1}}+\norm{\wtt}_{H^\ell}\norm{V}_{L^\infty}\bigr).
\end{equation*}
\end{proof}

\section{Existence and asymptotics\label{sec:exist-asymp}}
We are now in a position to establish the existence of $\Tbb^2$-symmetric solutions to the relativistic Euler equations \eqref{relEulA} globally to the future by solving the Fuchsian global initial value problem (GIVP)
\begin{align}
\del{t}\Wtt + \Btt(t,\wtt)\del{x}\Wtt &= \frac{3\sspeed-1}{t}\pip \Wtt \hspace{0.5cm} \text{in $(0,1]\times\Tbb$,}
\label{T2-Eul-H.1}\\
\Wtt&= (\ztt_0,\wtt_0)^{\tr} \hspace{0.95cm} \text{in $\{1\}\times \Tbb$,} 
\label{T2-Eul-H.2}
\end{align}
under a suitable small initial data assumption and for sound speeds satisfying $1/3 < \sspeed<1$.

\begin{rem}\label{rem-idata}
The trivial initial data 
\begin{equation*}
\Wtt|_{t=1}=(\ztt_0,\wtt_0)=(0,0)
\end{equation*}
corresponds, by \eqref{conformal}, \eqref{vi-def}-\eqref{v0-def}, \eqref{zttt-def}-\eqref{vttt-def}, \eqref{cov}, \eqref{phi-def} and \eqref{Wtt-def}, to
the initial data
\begin{equation*}
(\rho,\vt^i)|_{t=1} = \bigl(\rho_c,-\delta^i_0\bigr)
\end{equation*}
for the relativistic Euler equations, which is the initial data that uniquely generates the family \eqref{Hom-A} of spatially homogeneous and isotropic solutions of the relativistic Euler equations. 
Since solutions $\Wtt=(\ztt,\wtt)$ of the GIVP \eqref{T2-Eul-H.1}-\eqref{T2-Eul-H.2} uniquely determine solutions $(\rho,\vt^i)$ of the relativistic Euler equations via
\eqref{conformal}, \eqref{vi-def}-\eqref{v0-def}, \eqref{zttt-def}-\eqref{vttt-def}, \eqref{cov}, \eqref{phi-def} and \eqref{Wtt-def}, it follows that solutions to the Fuchsian GIVP \eqref{T2-Eul-H.1}-\eqref{T2-Eul-H.2} that are generated by small initial data correspond to nonlinear $\Tbb^2$-symmetric perturbations of the homogeneous and isotropic solutions \eqref{Hom-A} that exist globally to the future. The existence of such perturbed solutions is established in the following theorem.     
\end{rem}

\begin{thm}\label{thm-exist}
Suppose $k\in \Zbb_{>\frac{3}{2}}$ and  $\frac{1}{3}<\sspeed <\frac{k+1}{3k}$.
Then there exists a $\delta_0>0$ such that for all $\delta\in (0,\delta_0)$, and initial data $\Wtt(1)=(\ztt_0,\wtt_0)^{\tr}\in H^{k}(\Tbb,\Rbb^2)$ satisfying
\begin{equation*}
\norm{\Wtt(1)}_{H^k} = \sqrt{\norm{\ztt_0}_{H^k}^2 + \norm{\wtt_0}_{H^k}^2} \leq \delta, 
\end{equation*} 
there exists a unique solution 
\begin{equation*}
\Wtt=(\ztt, \wtt )^{\tr} \in C^0\bigl((0,1],H^{k}(\Tbb,\Rbb^2)\bigr)\cap C^1 \bigl((0,1],H^{k-1}(\Tbb,\Rbb^2)\bigr)
\end{equation*}
to the Fuchsian GIVP \eqref{T2-Eul-H.1}-\eqref{T2-Eul-H.2} on $(0,1]\times \Tbb$ and a constant $C=C(\delta_0)$ such that $\Wtt$ is bounded by
\begin{equation*}
\norm{\Wtt(t)}_{C^{k-1,\frac{1}{2}}} \lesssim \norm{\Wtt(t)}_{H^k} \leq C\delta
\end{equation*}
for all $t\in (0,1]$.
Moreover, the solution $\Wtt$ satisfies the following properties:
\begin{enumerate}[(a)]
\item There exist functions $z_*,\wtt_* \in H^{k-1}(\Tbb)\subset C^{k-2,\frac{1}{2}}(\Tbb)$ such that $z=t^{1-3\sspeed}\ztt$ and $\wtt$ converge in
$H^{k-1}(\Tbb)$ as $t\searrow 0$ to  $z_*$ and $\wtt_*$, respectively, and there exists a constant $C=C(\delta_0)$ such that decay estimates
\begin{align*}
\norm{\wtt(t)-\wtt_*}_{ C^{k-2,\frac{1}{2}}} \lesssim \norm{\wtt(t)-\wtt_*}_{H^{k-1}}  &\leq C\delta t^{(k-1)(1-3\sspeed)+1}
\intertext{and}
\norm{z(t)-z_*}_{C^{k-2,\frac{1}{2}}}\lesssim \norm{z(t)-z_*}_{H^{k-1}}  &\leq C\delta t^{k(1-3\sspeed)+1}
\end{align*}
hold for all $t\in (0,1]$. 
\item The pair $(z,\wtt)$,  where $z=t^{1-3\sspeed}\ztt$, determines a unique, $\Tbb^2$-symmetric $C^1$-solution $(\rho,\vt^i)$ of the relativistic Euler equations \eqref{relEulA} on $(0,1]\times \Tbb^3$ via 
\begin{equation*}
(\rho,\vt^i) = \left( \frac{\dsp t^{3(1+\sspeed)}\rho_c e^{(1+\sspeed) z}}{ \dsp\Bigl(1+\Bigl[\phi^{-1}\bigl(t^{1-3\sspeed}\wtt
\bigr)\Bigr]^2\Bigr)^{\frac{1+\sspeed}{2}}},
 t\biggl(-\delta^i_0  \sqrt{1+\Bigl[\phi^{-1}\bigl(t^{1-3\sspeed}\wtt
\bigr)\Bigr]^2}+ \delta^i_1 \phi^{-1}\bigl(t^{1-3\sspeed}\wtt
\bigr) \biggr)\right)
\end{equation*}
where $\phi\in C^\infty(\Rbb,\Rbb)$ is the diffeomorphism defined by \eqref{phi-def}.
\item Let
\begin{equation*}
\Wsc_*^+ = \wtt_*^{-1}((0,\infty)) \AND \Wsc_*^- = \wtt_*^{-1}((-\infty,0)).
\end{equation*}
Then for every $x_{\pm}\in \Wsc^*_{\pm}$, there exists a time $t_{\pm} \in (0,1]$ such that
\begin{align*}
\rho(t,x_{\pm}) & = t^{\frac{2(1+\sspeed)}{1-\sspeed}} \biggl(\frac{\dsp\rho_c e^{(1+\sspeed)z_*(x_\pm)}}{|w_*(x_\pm)|^{\frac{1+\sspeed}{1-\sspeed}} }+\Ord\bigl(t^{\lambdat}\bigr)\biggr),   
\\
\vt^i(t,x_{\pm}) &=t^{\frac{2(1-2\sspeed)}{1-\sspeed}}\Bigl(|w_*(x_\pm)|^{\frac{1}{1-\sspeed}}\bigl(-\delta^i_0 \pm \delta^i_1\bigr) +\Ord\bigl(t^\lambda\bigr)\Bigr), 
\end{align*}
for all $t\in (0,t_{\pm}]$, where
\begin{equation*}
\lambdat = \min\biggl\{ k(1-3\sspeed)+1, \frac{2(3\sspeed-1)}{1-\sspeed}\biggr\} \AND
\lambda = \min\biggl\{ (k-1)(1-3\sspeed)+1, \frac{2(3\sspeed-1)}{1-\sspeed}\biggr\}, 
\end{equation*}
and moreover,
\begin{align*}
 \lim_{t\searrow 0} t^{\frac{2(1+\sspeed)}{\sspeed-1}} \rho(t,x_{\pm}) & = \rho_c |w_*(x_\pm)|^{\frac{1+\sspeed}{\sspeed-1}} e^{(1+\sspeed)z_*(x_\pm)},
\\
\lim_{t\searrow 0} t^{\frac{2(1-2\sspeed)}{\sspeed-1}}\vt^i(t,x_{\pm}) &=|w_*(x_\pm)|^{\frac{1}{1-\sspeed}} \bigl(-\delta^i_0 \pm \delta^i_1 \bigr). 
\end{align*}
Furthermore, if $k\in \Zbb_{>\frac{5}{2}}$, then 
\begin{equation*}
\frac{\del{x}\rho(t,x_{\pm})}{\rho(t,x_\pm)} =(1+\sspeed)\Biggl(\mp\frac{\del{x}\wtt_*(x_\pm)}{(1-\sspeed)|\wtt_*(x_\pm)|} + \del{x}z_*(x_\pm)+\Ord\bigl(t^{\lambdat}\bigr)\Biggr) 
\end{equation*}
for all $t\in (0,t_{\pm}]$, and moreover,
\begin{equation*} 
\lim_{t\searrow 0} \frac{\del{x}\rho(t,x_{\pm})}{\rho(t,x_\pm)} =(1+\sspeed)\biggl(\mp\frac{\del{x}\wtt_*(x_\pm)}{(1-\sspeed)|\wtt_*(x_\pm)|} + \del{x}z_*(x_\pm)\biggr).
\end{equation*} 
\end{enumerate}
\end{thm}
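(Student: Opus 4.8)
The plan is to obtain the solution of the Fuchsian GIVP \eqref{T2-Eul-H.1}--\eqref{T2-Eul-H.2} as a limit of solutions to a truncated family of problems and then to pass to the limit, using the coefficient bounds of Section \ref{sec:coeff-bounds} to control everything uniformly down to $t=0$. Concretely, I would first replace the singular initial time $t=1$ with a sequence $t_n\searrow 0$ — or, equivalently, run the standard symmetric-hyperbolic local existence theory for \eqref{T2-Eul-D} backwards from $t=1$ — to get solutions $\Wtt_n$ on $(t_n,1]$, and then derive an energy inequality that is uniform in $n$ and valid on the \emph{whole} interval $(t_n,1]$. The energy should be built from $A^0$: set $E_\ell(t)=\sum_{m\le \ell}\ip{\del{x}^m\Wtt}{\Att^0(t,\wtt)\del{x}^m\Wtt}$. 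Differentiating in $t$, commuting $\del{x}^\ell$ past $\Btt$, and using \eqref{T2-Eul-F} produces the three dangerous terms: the $\Div\!\Att$ term from \eqref{DivAc-def}, the singular source $\frac{3\sspeed-1}{t}\Att^0\pip\Wtt$, and the commutator $\Att^0[\del{x}^\ell,\Btt]\del{x}^{?}\Wtt$. The first is handled because $\Att^0\pip$ appears with the coefficient $-2(3\sspeed-1)+a_3$ and Lemma \ref{lem-a1a2a3Q} gives $0\le a_3\le \frac{2(3\sspeed-1)}{(1-\sspeed)^2}$, so the net singular coefficient has a definite (favorable) sign after combining with the source term; the commutator term is controlled by Lemma \ref{comm-lem}, whose $t^{\ell(1-3\sspeed)}$ prefactor is harmless since $1-3\sspeed>0$ and $t\le 1$; and the remaining pieces of $\Div\!\Att$ are bounded using the uniform bounds on $a_1,a_2,Q$ from Lemma \ref{lem-a1a2a3Q} together with Corollary \ref{cor-Ac0Bc}. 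This is the step where the hypothesis $\sspeed<\frac{k+1}{3k}$ enters: in the top-order estimate the singular source contributes a factor like $t^{-1}\cdot t^{(k)(1-3\sspeed)}$-type growth balanced against $(3\sspeed-1)/t$, and one needs $(k)(1-3\sspeed)+1>0$, i.e.\ exactly $\sspeed<\frac{k+1}{3k}$, for the resulting ODE for $E_k$ to have a bounded (indeed small) solution on $(0,1]$.

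Granting the uniform energy bound $E_k(t)\le C\delta^2$ on $(t_n,1]$, the next step is to extract a limit. Since $\Att^0\ge\gamma\,\id$ by Lemma \ref{lem-Ac0-lbnd}, the energy controls $\norm{\Wtt(t)}_{H^k}$ uniformly, and the upper bound in Lemma \ref{lem-Ac0-lbnd} together with the equation gives equicontinuity in $t$ with values in $H^{k-1}$; a diagonal/compactness argument (Aubin–Lions on each $(\epsilon,1]$) then produces $\Wtt\in C^0((0,1],H^k)\cap C^1((0,1],H^{k-1})$ solving the GIVP, with the stated bound $\norm{\Wtt(t)}_{C^{k-1,1/2}}\lesssim\norm{\Wtt(t)}_{H^k}\le C\delta$ following from Sobolev embedding $H^k(\Tbb)\hookrightarrow C^{k-1,1/2}(\Tbb)$ (valid since $k>3/2$ means $k-\tfrac12>1$). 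Uniqueness is a standard $L^2$-type estimate for the difference of two solutions, again using that the net singular coefficient has the good sign.

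For part (a), the decay estimates come from revisiting the equation for $\del{x}^m\Wtt$ componentwise. Write \eqref{T2-Eul-E} in components: the $\wtt$-equation has \emph{no} singular source (the $\pip$ projects onto the $\ztt$-slot), so $\del{t}\wtt=-(\Btt\del{x}\Wtt)_2$, and using Corollary \ref{cor-Ac0Bc} the right side is $O(t^{1-3\sspeed})$ in $H^{k-1}$ after one loses a derivative; integrating from $0$ shows $\wtt(t)$ is Cauchy as $t\searrow0$ with limit $\wtt_*\in H^{k-1}$ and $\norm{\wtt(t)-\wtt_*}_{H^{k-1}}\lesssim\int_0^t s^{1-3\sspeed}\,ds\cdot\delta= C\delta\,t^{(1-3\sspeed)+1}$ at the lowest order, and $\lesssim t^{(k-1)(1-3\sspeed)+1}$ at order $k-1$ after tracking the $t^{m(1-3\sspeed)}$ factors. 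For $z=t^{1-3\sspeed}\ztt$, from \eqref{T2-Eul-C} one has $\del{t}z=-(B\del{x}W)_1$, and again the singular term is absent in the $z$-slot of \eqref{T2-Eul-C}, so the same integration argument gives $z(t)\to z_*$ with the sharper rate $t^{k(1-3\sspeed)+1}$ (one fewer $w$-derivative in the relevant coefficient). Part (b) is then purely a matter of unwinding the chain of substitutions \eqref{conformal}, \eqref{vi-def}--\eqref{v0-def}, \eqref{zttt-def}--\eqref{vttt-def}, \eqref{cov}, \eqref{phi-def}, \eqref{Wtt-def}: $u=\phi^{-1}(t^{1-3\sspeed}\wtt)$, $v_0=\sqrt{1+u^2}$, and $\rho=t^{3(1+\sspeed)}\rho_c v_0^{-(1+\sspeed)}e^{(1+\sspeed)\zetat}$ with $\zetat=z$; the $C^1$ regularity of $(\rho,\vt^i)$ follows from $\Wtt\in C^1((0,1],H^{k-1})$, $k-1>1/2$, and smoothness of $\phi^{-1}$.

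Part (c) is where the asymptotic expansions of $\phi^{-1}$ from Lemma \ref{phi-lem} do the work, and it is the part I expect to be the most delicate to write carefully. Fix $x_+\in\Wsc_*^+$, so $\wtt_*(x_+)>0$; by the uniform convergence $\wtt(t,\cdot)\to\wtt_*$ in $C^{k-2,1/2}$ and continuity there is $t_+>0$ with $\wtt(t,x_+)>\tfrac12\wtt_*(x_+)>0$ for $t\le t_+$, and then the argument $t^{1-3\sspeed}\wtt(t,x_+)\to+\infty$ as $t\searrow0$ (since $1-3\sspeed<0$), so we are in the regime $w>w_1$ of Lemma \ref{phi-lem}. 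Substituting $\phi^{-1}(w)=w^{1/(1-\sspeed)}(1+O(|w|^{2/(\sspeed-1)}))$ with $w=t^{1-3\sspeed}\wtt(t,x_+)$ gives $u(t,x_+)=t^{(1-3\sspeed)/(1-\sspeed)}\wtt(t,x_+)^{1/(1-\sspeed)}(1+O(t^{2(3\sspeed-1)/(1-\sspeed)}))$; squaring, $1+u^2 = u^2(1+u^{-2})$ and the $v_0^{-(1+\sspeed)}=(1+u^2)^{-(1+\sspeed)/2}$ factor produces the power $t^{-(1+\sspeed)(1-3\sspeed)/(1-\sspeed)}$, which combines with the explicit $t^{3(1+\sspeed)}$ to give the stated exponent $\frac{2(1+\sspeed)}{1-\sspeed}$ on $\rho$; similarly the $t$ in $\vt^i=t(\ldots)$ combined with $u\sim t^{(1-3\sspeed)/(1-\sspeed)}$ gives $\frac{2(1-2\sspeed)}{1-\sspeed}$. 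The error exponents $\lambdat,\lambda$ are then the minimum of the rate at which $\wtt(t,x_+)$ (resp.\ $z(t,x_+)$) approaches its limit — namely $k(1-3\sspeed)+1$ (resp.\ $(k-1)(1-3\sspeed)+1$) from part (a) — and the $O(|w|^{2/(\sspeed-1)})=O(t^{2(3\sspeed-1)/(1-\sspeed)})$ correction intrinsic to the expansion of $\phi^{-1}$. The $x_-$ case uses the odd symmetry $\phi^{-1}(-w)=-\phi^{-1}(w)$ recorded in Lemma \ref{phi-lem}, producing the sign $\pm\delta^i_1$ and the absolute values $|w_*(x_\pm)|$. For the fractional density gradient, when $k>5/2$ we have $k-1>3/2$ so $H^{k-1}(\Tbb)\hookrightarrow C^{1,1/2}$, hence $\del{x}\wtt(t,\cdot)\to\del{x}\wtt_*$ and $\del{x}z(t,\cdot)\to\del{x}z_*$ uniformly with the rates from part (a); differentiating $\ln\rho = 3(1+\sspeed)\ln t + \ln\rho_c + (1+\sspeed)z - \tfrac{1+\sspeed}{2}\ln(1+u^2)$ in $x$ gives $\frac{\del{x}\rho}{\rho}=(1+\sspeed)\big(\del{x}z - \tfrac{u\,\del{x}u}{1+u^2}\big)$, and since $\del{x}u = (\phi^{-1})'(w)\,t^{1-3\sspeed}\del{x}\wtt$ with $(\phi^{-1})'(w)=w^{\sspeed/(1-\sspeed)}(\tfrac{1}{1-\sspeed}+O(|w|^{2/(\sspeed-1)}))$, a short computation shows $\tfrac{u\,\del{x}u}{1+u^2}\to \tfrac{1}{1-\sspeed}\tfrac{\del{x}\wtt_*(x_+)}{\wtt_*(x_+)}$ (the $+$ case; $\mp$ overall after the sign bookkeeping for $x_\pm$), with the error controlled by $t^{\lambdat}$. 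Taking $t\searrow0$ yields the final limits. The main obstacle throughout is the bookkeeping of the competing $t$-powers — the loss-of-derivative factors $t^{m(1-3\sspeed)}$ (which grow as $t\searrow0$) against the gains $t$ from integration and against the intrinsic $t^{2(3\sspeed-1)/(1-\sspeed)}$ corrections — and verifying at each order that the combination is bounded precisely under $\sspeed<\frac{k+1}{3k}$; the blow-up of $\Att^0$ recorded in Lemma \ref{lem-Ac0-lbnd} is what forces the commutator estimate of Lemma \ref{comm-lem} to be used in the product form $\Att^0[\del{x}^\ell,\Btt]V$ rather than term-by-term, and keeping that structure intact in the energy argument is the technical crux.
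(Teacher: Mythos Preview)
Your outline matches the paper's proof in its essentials, but there is one sign slip that sits at the heart of the argument: in this regime $\sspeed>\tfrac13$, so $1-3\sspeed<0$, not $>0$. The commutator prefactor $t^{\ell(1-3\sspeed)}$ from Lemma~\ref{comm-lem} therefore \emph{grows} as $t\searrow 0$, and this growth is exactly why the hypothesis $\sspeed<\frac{k+1}{3k}$, i.e.\ $k(1-3\sspeed)+1>0$, is needed: it is the integrability condition $\int_0^1 \tau^{k(1-3\sspeed)}\,d\tau<\infty$ that allows Gr\"onwall on $-\del{t}\nnorm{\Wtt}_k\lesssim t^{k(1-3\sspeed)}(1+\nnorm{\Wtt}_k^{k-1})\nnorm{\Wtt}_k^2$ to close. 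Your subsequent remarks show you understand this, so it is a slip, but it is one that obscures the actual mechanism.

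On methodology, the truncation and Aubin--Lions extraction are unnecessary detours. The paper simply invokes local existence for symmetric hyperbolic systems to obtain a solution on $(t_*,1]$ and then the continuation principle: once the energy argument gives $\sup_{(t_*,1]}\norm{\Wtt}_{W^{1,\infty}}<\infty$ (via Lemma~\ref{lem-Ac0-lbnd} and Sobolev), the solution extends, forcing $t_*=0$ directly, with no compactness step. One further small correction: the sharper rate $t^{k(1-3\sspeed)+1}$ for $z$ does not come from ``one fewer $w$-derivative in the coefficient'' but from the explicit extra factor $t^{1-3\sspeed}$ picked up when you multiply \eqref{T2-Eul-H.1} by $t^{1-3\sspeed}\pip$ to get $\del{t}(t^{1-3\sspeed}\pip\Wtt)+t^{1-3\sspeed}\pip\Btt\,\del{x}\Wtt=0$; the $H^{k-1}$ bound on $\Btt\,\del{x}\Wtt$ is still $t^{(k-1)(1-3\sspeed)}$ via Moser, and the product gives $t^{k(1-3\sspeed)}$. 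With these adjustments your sketch coincides with the paper's proof.
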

\begin{proof}
$\;$

\noindent\underline{Existence:}
Suppose $k\in \Zbb_{>\frac{3}{2}}$, $\delta>0$, $\frac{1}{3}<\sspeed <\frac{k+1}{3k}$ and choose initial data 
initial data $\ztt_0,\wtt_0\in H^{k}(\Tbb)$ satisfying 
\begin{equation} \label{exist-idata}
\norm{\Wtt(1)}_{H^k} = \sqrt{\norm{\ztt_0}_{H^k}^2 + \norm{\wtt_0}_{H^k}^2} < \delta.
\end{equation} 
By the results of Sections \ref{sec:cov} and \ref{sec:rescale}, we know that the matrices  $\Att^0(t,\wtt)$ and $\Btt(t,\wtt)$ defined by
\eqref{AcBc-def} are smooth for $(t,\wtt)\in (0,1]\times \Rbb$,  $\Att^0(t,\wtt)$ and
$\Att^1(t,\wtt)=\Att^0(t,\wtt)\Btt(t,\wtt)$ are symmetric for all $(t,\wtt)\in (0,1]\times \Rbb$, and
$\Att^0(t,\wtt)$ is positive definite. As a consequence, the system
\eqref{T2-Eul-H.1} is symmetrisable and multiplying it on the left by $\Att^0(t,\wtt)$ yields the symmetric hyperbolic system \eqref{T2-Eul-F}.
We therefore can appeal to
standard local-in-time existence and uniqueness theorems for symmetric hyperbolic systems, e.g. \cite[Thm.~10.1]{BenzoniSerre:2007} or \cite[Thm.~2.1]{Majda:1984},  to conclude the existence of
a unique solution 
\begin{equation}\label{exist-sol-A}
\Wtt=(\ztt, \wtt )^{\tr} \in C^0\bigl((t_*,1],H^{k}(\Tbb,\Rbb^2)\bigr)\cap C^1 \bigl((t^*,1],H^{k-1}(\Tbb,\Rbb^2)\bigr)
\end{equation}
to the IVP \eqref{T2-Eul-H.1}-\eqref{T2-Eul-H.2} on a time interval $(t_*,1]$ for some $t_*\in [0,1)$. Moreover,
by the continuation principle for symmetric hyperbolic equations, e.g. \cite[Thm.~10.3]{BenzoniSerre:2007} or  \cite[Thm.~2.2]{Majda:1984}, if $t_* \in (0,1]$ and 
$\sup_{t\in (t_*,1]}\norm{\Wtt(t)}_{W^{1,\infty}} < \infty$, 
then there exists a $t^*\in [0,t_*)$ such that the solution \eqref{exist-sol-A} can be uniquely continued, as a solution of the same regularity, to the larger time interval $(t^*,1]$. 

In light of the continuation principle,  the solution \eqref{exist-sol-A} will exist on the time
interval $(0,1]$ provided we can show, by choosing $\delta_0>0$ sufficiently small, that the solution
satisfies a uniform bound of the form
\begin{equation}\label{exist-cont-A}
\sup_{t\in (t_*,1]}\norm{\Wtt(t)}_{W^{1,\infty}} \leq C
\end{equation}
for some constant $C>0$ independent of $t_*$. We will show that such a bound exists using energy estimates. To derive these estimates, we assume that $\ell \in \Zbb$ satisfies $0\leq \ell \leq k$ and apply $\Att^0(t,\wtt)\del{x}^\ell$
to \eqref{T2-Eul-H.1} to get
\begin{equation*}
\Att^0(t,\wtt)\del{t}\del{x}^\ell \Wtt + \Att^1(t,\wtt)\del{x}\del{x}^\ell\Wtt  = \frac{3\sspeed-1}{t}\Att^0(t,\wtt)\pip \del{x}^\ell\Wtt - \Ftt_\ell
\end{equation*} 
where 
\begin{equation}\label{Ftt-ell-def}
\Ftt_\ell = \Att^0(t,\wtt)[\del{x}^\ell,\Btt(t,\wtt)]\del{x}\Wtt.
\end{equation}
Multiplying the above equations on the left by $\del{x}^\ell \Wtt^{\tr}$ and integrating by parts,
we obtain the energy identity
\begin{equation}\label{energy-est-B}
-\frac{1}{2}\del{t}\nnorm{\del{x}^\ell \Wtt}_0^2 =-\frac{1}{2}\ip{\del{x}^\ell\Wtt}{\Div\!\Att(t,\wtt)\,\del{x}^\ell \Wtt} -\frac{3\sspeed-1}{t}\nnorm{\pip \del{x}^\ell\Wtt}_0^2+\ip{\Wtt}{\Ftt_\ell}
\end{equation}
where $\Div\!\Att(t,\wtt)$ is the matrix divergence defined previously by \eqref{DivAc-def}, c.f.
\eqref{DivAc-on-shell},
\begin{equation}\label{energy-norm}
\nnorm{(\cdot)}_0^2 := \ip{(\cdot)}{\Att^0(t,\wtt)(\cdot)}
\end{equation}
is the \textit{energy norm}, and in deriving the middle term on the right hand side of \eqref{energy-est-B}, we
have employed the identity
\begin{equation}\label{linear-id}
\ip{\del{x}^\ell \Wtt}{\Att^0(t,\wtt)\pip \del{x}^\ell \Wtt} = \nnorm{\pip\del{x}^\ell\Wtt}_0^2,
\end{equation}
which follows easily from \eqref{pip-Ac0-com}. We also define a higher order energy norm via
\begin{equation}\label{hi-energy-norm}
\nnorm{\Wtt}_k^2= \sum_{\ell=0}^k\nnorm{\del{x}^\ell \Wtt}_0^2,
\end{equation}
and note, by Lemma \ref{lem-Ac0-lbnd} and Sobolev's inequality (Theorem \ref{Sobolev}), that
\begin{equation} \label{norm-bnd}
\norm{\Wtt}_{W^{1,\infty}} \lesssim \sqrt{\gamma} \norm{\Wtt}_{H^k} \leq \nnorm{\Wtt}_k.
\end{equation}

To proceed, we estimate the terms on the right hand side of \eqref{energy-est-B} by noting from \eqref{DivAc-def}, \eqref{DwAc1-rep} and \eqref{linear-id} that 
 $\ip{\del{x}^\ell\Wtt}{\Div\!\Att(t,\wtt)\,\del{x}^\ell \Wtt}$ can be expressed as
\begin{align*}
\ip{\del{x}^\ell\Wtt}{\Div\!\Att(t,\wtt)\,\del{x}^\ell \Wtt}
=&\Bigl( t^{1-3\sspeed}(a_1\del{x}\ztt+a_2\del{x}\wtt)+t^{-1}(-2(3\sspeed-1)+a_3)\Bigr) \nnorm{\pip\del{x}^\ell \Wtt}_0^2
\\
&\qquad + t^{1-3\sspeed}\del{x}\wtt 
\bigl\langle
(\Att^0(t,\wtt))^{\frac{1}{2}}\del{x}^\ell \Wtt \bigl|
Q(\Att^0(t,\wtt))^{\frac{1}{2}}\del{x}^\ell \Wtt\bigr\rangle.
\end{align*}
From this, we obtain
\begin{align*}
-\frac{1}{2}\ip{\del{x}^\ell\Wtt}{\Div\!\Att(t,\wtt)\,\del{x}^\ell \Wtt} -\frac{3\sspeed-1}{t}\nnorm{\pip \del{x}^\ell\Wtt}_0^2 =&-\frac{1}{2}\Bigl( t^{1-3\sspeed}(a_1\del{x}\ztt+a_2\del{x}\wtt)+t^{-1}a_3\Bigr) \nnorm{\pip\del{x}^\ell \Wtt}_0^2
\\
&\quad -\frac{1}{2}t^{1-3\sspeed}\del{x}\wtt 
\bigl\langle
(\Att^0(t,\wtt))^{\frac{1}{2}}\del{x}^\ell \Wtt \bigl|
Q(\Att^0(t,\wtt))^{\frac{1}{2}}\del{x}^\ell \Wtt\bigr\rangle,
\end{align*}
which we observe, with the help of Lemma \ref{lem-a1a2a3Q}, is bounded above by
\begin{equation} \label{energy-est-C}
-\frac{1}{2}\ip{\del{x}^\ell\Wtt}{\Div\!\Att(t,\wtt)\,\del{x}^\ell \Wtt} -\frac{3\sspeed-1}{t}\nnorm{\pip \del{x}^\ell\Wtt}_0^2 \lesssim t^{1-3\sspeed}\norm{\Wtt}_{W^{1,\infty}}\nnorm{\del{x}^\ell \Wtt}_0^2
\lesssim  t^{1-3\sspeed}\nnorm{\Wtt}_k^3,
\end{equation}
where in deriving the second inequality we employed \eqref{norm-bnd}. We further observe from
the definition \eqref{Ftt-ell-def} of $\Ftt_\ell$, the positivity of $3\sspeed-1$, the Cauchy-Schwartz inequality, the commutator estimate from Lemma \ref{comm-lem}, and \eqref{norm-bnd} that the term $\ip{\Wtt}{\Ftt_\ell}$ can be bounded above by
\begin{align}
\ip{\Wtt}{\Ftt_\ell} \leq \norm{\Wtt}_{L^2}\norm{\Ftt_\ell}_{L^2} &\lesssim t^{\ell(1-3\sspeed)}
\norm{\Wtt}_{L^2}\bigl(1+\norm{\wtt}^{\ell-1}_{W^{1,\infty}}\bigr)\bigl(\norm{\wtt}_{W^{1,\infty}}\norm{\del{x}\Wtt}_{H^{\ell-1}}+ \norm{\wtt}_{H^\ell}\norm{\del{x}\Wtt}_{L^\infty}\bigr)\notag \\
& \lesssim t^{k(1-3\sspeed)}\bigl(1+\nnorm{\Wtt}^{k-1}_{k}\bigr)\nnorm{\Wtt}^3_k. 
\label{energy-est-D}
\end{align}
Then summing \eqref{energy-est-B} over $\ell$ from $\ell=0$ to $\ell=k$, while using \eqref{energy-est-C}
and \eqref{energy-est-D} to estimate the resulting terms on the right hand side, yields the energy inequality
\begin{equation}\label{energy-est-E}
-\frac{1}{2}\del{t}\nnorm{\Wtt}_k^2 \lesssim  t^{k(1-3\sspeed)}\bigl(1+\nnorm{\Wtt}^{k-1}_{k}\bigr)\nnorm{\Wtt}^3_k,
\end{equation}
which we can express as
\begin{equation}\label{energy-est-F}
-\del{t}\nnorm{\Wtt}_k \lesssim  t^{k(1-3\sspeed)}\bigl(1+\nnorm{\Wtt}^{k-1}_{k}\bigr)\nnorm{\Wtt}_k^2
\end{equation}
by dividing through by $\nnorm{\Wtt}_{k}$.

Now, at $t=1$, it is clear from \eqref{energy-norm}, \eqref{hi-energy-norm}, Sobolev's inequality (Theorem \ref{Sobolev}), the 
initial data bound \eqref{exist-idata}, and Lemma \ref{lem-Ac0-lbnd} that there exists
a constant $C_0(\delta)>0$ such that
\begin{equation} \label{exist-idata-B}
\nnorm{\Wtt(1)}_k \leq C_0(\delta)\norm{\Wtt(1)}_{H^k}< C_0(\delta)\delta.
\end{equation}
Next, we fix $\delta_0>0$ and assume that $\delta\in (0,\delta_0/2]$. We then let $\ttld \in (t_*,1]$ be the first time for which 
$\nnorm{\Wtt(\ttld)}_k =C_0(\delta_0)\delta_0$
and if there is no such time then we set $\ttld=t_*$. Then, by \eqref{energy-est-F}, we have
\begin{equation*}
-\del{t}\nnorm{\Wtt}_k \leq C_1(\delta_0)\delta_0 t^{k(1-3\sspeed)}\nnorm{\Wtt}_k, \quad 0<t_*\leq \ttld <t\leq 1,
\end{equation*}
for some constant $C_1(\delta_0)>0$.  Since $k(1-3\sspeed)>-1$ by assumption, the above energy inequality and \eqref{exist-idata-B} allow us to conclude via an application of Gron\"{o}wall's inequality that 
\begin{equation} \label{energy-est-G}
\nnorm{\Wtt(t)}_k \leq e^{C_1(\delta_0)\delta_0\beta(t)}\nnorm{\Wtt(1)}_k < e^{C_1(\delta_0)\delta_0\bar{\beta}}C_0(\delta)\delta,  \quad 0<t_*\leq \ttld <t\leq 1,
\end{equation}
where
\begin{equation*}
\beta(t) = \frac{1-t^{k(1-3\sspeed)+1}}{k(1-3\sspeed)+1} \leq \bar{\beta}:= \frac{1}{k(1-3\sspeed)+1}.
\end{equation*}
But $\lim_{\delta\searrow 0} e^{C_1(\delta_0)\delta_0\bar{\beta}}C_0(\delta)\delta=0$, and so we can choose $\delta \in (0,\delta_0/2]$ small enough to ensure that
$e^{C_1(\delta_0)\delta_0\bar{\beta}}C_0(\delta)\delta< C_0(\delta_0)\delta_0$.
Doing so, we conclude, by \eqref{norm-bnd} and \eqref{energy-est-G},  that
$\nnorm{\Wtt(t)}_k \leq   C_0(\delta_0)\delta_0$ for $0<t_*\leq \ttld <t\leq 1$, and hence, that 
$\ttld=t_*$. By \eqref{norm-bnd},  we therefore have established that the bound \eqref{exist-cont-A}
holds for some constant $C>0$ independent of $t_*$ provided $\delta>0$ is chosen small enough. 
Thus, by the continuation principle, we have that 
\begin{equation} \label{t*=0}
t_*=0, 
\end{equation}
and consequently, the solution \eqref{exist-sol-A}
exists on the time interval $(0,1]$ and determines a unique solution of the Fuchsian GIVP \eqref{T2-Eul-H.1}-\eqref{T2-Eul-H.2}. Moreover, the above arguments show that this solution is bounded by
\begin{equation} \label{sol-bnd}
\norm{\Wtt(t)}_{W^{1,\infty}}\lesssim \norm{\Wtt(t)}_{H^k} \leq C(\delta_0)\norm{\Wtt(1)}_{H^k},
\quad 0<t\leq 1.
\end{equation}

\bigskip

\noindent\underline{Decay estimates:}
Since $\pi \pi^\perp =0$ by \eqref{pip-Ac0-com}, applying the projection $\pi$ to \eqref{T2-Eul-H.1} yields
\begin{equation} \label{pi-Eul}
\del{t}\pi\Wtt + \pi\Btt(t,\wtt)\del{x}\Wtt = 0,
\end{equation}
and so integrating this expression between $t_1$ and $t$, where $0<t_1<t\leq 1$, yields
\begin{equation} \label{asymp-A}
(0,\wtt(t))^{\tr}-(0,\wtt(t_1))^{\tr}\oset{\eqref{Wtt-def}}{=}\pi\Wtt(t_1)-\pi \Wtt(t) =- \int_{t_1}^t  \pi\Btt(\tau,\wtt(\tau))\del{x}\Wtt(\tau)\, d\tau.
\end{equation}
Because $k-1>1/2$, the product estimate from Theorem \ref{Product}.(ii)  allows us to estimate the $\Btt(t,\wtt(t))\del{x}\Wtt(t)$ by
\begin{equation}\label{Btt-dWtt-est-A}
\norm{\Btt(t,\wtt(t))\del{x}\Wtt(t) }_{H^{k-1}}\lesssim  \norm{\Btt(t,\wtt(t))}_{H^{k-1}}
\norm{\del{x}\Wtt(t)}_{H^{k-1}}.
\end{equation}
On the other hand, by Corollary \ref{cor-Ac0Bc} and Moser's estimate (Theorem \ref{Moser}), we have that
\begin{equation}\label{Btt-dWtt-est-B}
 \norm{\Btt(t,\wtt(t))}_{H^{k-1}} \lesssim t^{(k-1)(1-3 \sspeed)} (1+\norm{\wtt(t)}^{k-2}_{L^\infty})\bigl(1+\norm{\wtt(t)}_{H^{k-1}}\bigr).
\end{equation}
Employing these estimates in conjuntion with \eqref{sol-bnd} gives
\begin{equation} \label{asymp-B}
\norm{\Btt(t,\wtt(\tau))\del{x}\Wtt(t) }_{H^{k-1}}\leq  t^{(k-1)(1-3\sspeed)}C(\delta_0)\norm{\Wtt(1)}_{H^k}.
\end{equation}
With the help of this estimate, we can then use \eqref{asymp-A} to bound the difference
$\wtt(t)-\wtt(t_1)$ by  
\begin{equation} \label{wtt-asymp-est}
\norm{\wtt(t)-\wtt(t_1)}_{H^{k-1}} \leq C(\delta_0)\int^t_{t_1} 
 \tau^{(k-1)(1-3\sspeed)}\, d\tau \leq C(\delta_0)\norm{\Wtt(1)}_{H^k}\bigl( t^{(k-1)(1-3\sspeed)+1}- t_1^{(k-1)(1-3\sspeed)+1} \bigr)
\end{equation} 
for  $0<t_1\leq t \leq 1$. Since $(k-1)(1-3\sspeed)+1>0$ by assumption, we conclude from this estimate and Sobolev's inequality (Theorem \ref{Sobolev})  that the limit $\lim_{t\searrow 0}\wtt(t)$, denoted $\wtt_*$, exists in $H^{k-1}(\Tbb)\subset C^{k-2,\frac{1}{2}}(\Tbb)$, and by
sending $t_1\searrow 0$, that
\begin{equation}\label{asymp-C}
\norm{\wtt(t)-\wtt_*}_{ C^{k-2,\frac{1}{2}}}\lesssim \norm{\wtt(t)-\wtt_*}_{H^{k-1}}  \leq C(\delta_0)\norm{\Wtt(1)}_{H^k} t^{(k-1)(1-3\sspeed)+1}, \quad 0<t\leq 1.
\end{equation}

Next, multiplying \eqref{T2-Eul-H.1} by $t^{1-3\sigma}\pip$, we find after a short calculation that
\begin{equation} \label{pip-Eul}
\del{t}(t^{1-3\sspeed}\pip\Wtt) + t^{1-3\sspeed}\pip\Btt(t,\wtt)\del{x}\Wtt = 0.
\end{equation}
Integrating this from $t_1$ to $t$ gives
\begin{equation*}
(z(t),0)^{\tr}-(z(t_1),0)^{\tr} \oset{\eqref{Wtt-def}}{=} t^{1-3\sspeed}\pip\Wtt(t)-  
t_1^{1-3\sspeed}\pip\Wtt(t_1) = -\int_{t_1}^t  \tau^{1-3\sspeed}\pip\Btt(\tau,\wtt(\tau))\del{x}\Wtt(\tau)\, d\tau,
\end{equation*}
which by \eqref{asymp-B}, we use to bound the difference $z(t)-z(t_1)$  by 
\begin{equation} \label{z-asymp-est}
\norm{z(t)-z(t_1)}_{H^{k-1}}\leq C(\delta_0)\int^t_{t_1} 
 \tau^{k(1-3\sspeed)}\, d\tau \leq C(\delta_0)\norm{\Wtt(1)}_{H^k}\bigl( t^{k(1-3\sspeed)+1}- t_1^{k(1-3\sspeed)+1} \bigr)
\end{equation}
for  $0<t_1\leq t \leq 1$.  Since $k(1-3\sspeed)+1>0$ by assumption, the same argument that lead to \eqref{asymp-C} implies the existence of a $z_* \in H^{k-1}(\Tbb)\subset C^{k-1,\frac{1}{2}}(\Tbb)$ such that
\begin{equation}\label{asymp-D}
\norm{z(t)-z_*}_{ C^{k-2,\frac{1}{2}}}\lesssim\norm{z(t)-z_*}_{H^{k-1}}  \leq C(\delta_0)\norm{\Wtt(1)}_{H^k} t^{k(1-3\sspeed)+1}, \quad 0<t\leq 1.
\end{equation}

\bigskip

\noindent\underline{$\Tbb^2$-symmetric solutions of the relativistic Euler equations:}
From the definitions \eqref{vi-def}-\eqref{zetat-def}, \eqref{zttt-def}-\eqref{vttt-def}, \eqref{cov}, and \eqref{Wtt-def}, it is straightforward, with the help of \eqref{conformal}, \eqref{v0-def}, \eqref{exist-sol-A},  \eqref{t*=0} and Sobolev's inequality, to 
verify that the pair  $(z,\wtt)$ determines a $\Tbb^2$-symmetric, $C^1$ solution $(\rho,\vt^i)$ of the relativistic Euler equations \eqref{relEulA} on $(0,1]\times \Tbb^3$ via
\begin{equation} \label{T2-sol}
(\rho,\vt^i) = \left( \frac{\dsp t^{3(1+\sspeed)}\rho_c e^{(1+\sspeed) z}}{ \dsp\Bigl(1+\Bigl[\phi^{-1}\bigl(t^{1-3\sspeed}\wtt
\bigr)\Bigr]^2\Bigr)^{\frac{1+\sspeed}{2}}},
 t\biggl(-\delta^i_0  \sqrt{1+\Bigl[\phi^{-1}\bigl(t^{1-3\sspeed}\wtt
\bigr)\Bigr]^2}+ \delta^i_1 \phi^{-1}\bigl(t^{1-3\sspeed}\wtt
\bigr) \biggr)\right),
\end{equation}
where $\phi\in C^\infty(\Rbb,\Rbb)$ is the diffeomorphism (see Lemma \ref{phi-lem}) defined by \eqref{phi-def}.

\bigskip
\noindent\underline{Solution asymptotics:}
For a given continuous limit function $\wtt_*\in C^{k-2,\frac{1}{2}}(\Tbb)\subset H^{k-1}(\Tbb)$ (recall $k\geq 2$ by assumption), we define open sets in $\Tbb$ by
\begin{equation*}
\Wsc_*^+ = \wtt_*^{-1}((0,\infty)) \AND \Wsc_*^- = \wtt_*^{-1}((-\infty,0)).
\end{equation*}
Then for any $x_{\pm} \in \Wsc_*^{\pm}$, it follows from the decay estimate \eqref{asymp-C} that
\begin{equation}
t^{1-3\sspeed}\wtt(t,x_{\pm})= t^{1-3\sspeed}\bigl(\wtt_*(x_\pm) + \Ord\bigl(t^{(k-1)(1-3\sspeed)+1}\bigr)\bigr),
\label{asymp-exp-A}
\end{equation}
which allows us to deduce from the assumption  $k(1-3\sspeed)+1>0$ and the asymptotic behaviour of $\phi^{-1}$, see Lemma \ref{phi-lem}, that 
\begin{equation*}
\phi^{-1}\bigl(t^{1-3\sspeed}\wtt(t,x_{\pm})\bigr) = \pm t^{\frac{1-3\sspeed}{1-\sspeed}}\Bigl(|w_*(x_\pm)|^\frac{1}{1-\sspeed}+ \Ord\Bigl(t^{(k-1)(1-3\sspeed)+1}\Bigr)\Bigr)\Bigl(1+\Ord\Bigl(t^{\frac{2(3\sspeed-1)}{1-\sspeed}}\Bigr)\Bigr),  \quad 0<t<t_{\pm},
\end{equation*}
for $t_{\pm}=t_{\pm}(x_{\pm})$ sufficiently small, which letting
\begin{equation*}
\lambda = \min\biggl\{ (k-1)(1-3\sspeed)+1, \frac{2(3\sspeed-1)}{1-\sspeed}\biggr\}>0, 
\end{equation*}
we can express more compactly as 
\begin{equation}
\phi^{-1}\bigl(t^{1-3\sspeed}\wtt(t,x_{\pm})\bigr) = \pm t^{\frac{1-3\sspeed}{1-\sspeed}}\bigl(|w_*(x_\pm)|^\frac{1}{1-\sspeed}+ \Ord(t^\lambda)\bigr),  \quad 0<t<t_{\pm}. \label{asymp-exp-B} 
\end{equation}
Plugging this into \eqref{T2-sol} yields, with the help of the decay estimate \eqref{asymp-D}, the asymptotic expansions
\begin{align*}
\rho(t,x_{\pm}) & = \frac{\dsp t^{3(1+\sspeed)}\rho_c e^{(1+\sspeed)\bigl(z_*(x_\pm)+
\Ord (t^{k(1-3\sspeed)+1})  \bigr)}}{\dsp \Bigl(1+ t^{\frac{2(1-3\sspeed)}{1-\sspeed}}\Bigl[|w_*(x_\pm)|^{\frac{1}{1-\sspeed}} +\Ord(t^\lambda)\Bigr]^2 \Bigr)^{\frac{1+\sspeed}{2}}},  
\\
\vt^i(t,x_{\pm}) &=t\biggl(-\delta^i_0  \sqrt{1+t^{\frac{2(1-3\sspeed)}{1-\sspeed}}\Bigl[|w_*(x_\pm)|^{\frac{1}{1-\sspeed}} +\Ord(t^\lambda)\Bigr]^2 }+ \delta^i_1 t^{\frac{(1-3\sspeed)}{1-\sspeed}}\Bigl(\pm |w_*(x_\pm)|^{\frac{1}{1-\sspeed}} +\Ord(t^\lambda) \Bigr) \biggr),
\end{align*}
which hold for $t\in (0,t_\pm)$ and we note can be simplified to
\begin{align*}
\rho(t,x_{\pm}) & = t^{\frac{2(1+\sspeed)}{1-\sspeed}} \biggl(\frac{\dsp\rho_c e^{(1+\sspeed)z_*(x_\pm)}}{|w_*(x_\pm)|^{\frac{1+\sspeed}{1-\sspeed}} }+\Ord(t^{\lambdat})\biggr),   
\\
\vt^i(t,x_{\pm}) &=t^{\frac{2(1-2\sspeed)}{1-\sspeed}}\Bigl(|w_*(x_\pm)|^{\frac{1}{1-\sspeed}}\bigl(-\delta^i_0 \pm \delta^i_1\bigr) +\Ord(t^\lambda)\Bigr), 
\end{align*}
where
\begin{equation*}
\lambdat = \min\biggl\{ k(1-3\sspeed)+1, \frac{2(3\sspeed-1)}{1-\sspeed}\biggr\}>0. 
\end{equation*}
Letting $t\searrow 0$ in these expressions yields
\begin{align*}
 \lim_{t\searrow 0} t^{\frac{2(1+\sspeed)}{\sspeed-1}} \rho(t,x_{\pm}) & = \rho_c |w_*(x_\pm)|^{\frac{1+\sspeed}{\sspeed-1}} e^{(1+\sspeed)z_*(x_\pm)},
\\
\lim_{t\searrow 0} t^{\frac{2(1-2\sspeed)}{\sspeed-1}}\vt^i(t,x_{\pm}) &=|w_*(x_\pm)|^{\frac{1}{1-\sspeed}} \bigl(-\delta^i_0 \pm \delta^i_1 \bigr).
\end{align*}

\bigskip

\noindent \underline{Fractional density gradient asymptotics:}
A short calculation using \eqref{T2-sol} shows that the fractional density gradient is given by
\begin{equation} \label{asymp-E}
\frac{\del{x}\rho}{\rho} =  \frac{\dsp (1+\sspeed)\biggl(- \phi^{-1}\bigl(t^{1-3\sspeed}\wtt
\bigr) \bigl(\phi^{-1}\bigr)'\bigl(t^{1-3\sspeed}\wtt
\bigr)t^{1-3\sspeed}\del{x}\wtt + 
\Bigl(1+\Bigl[\phi^{-1}\bigl(t^{1-3\sspeed}\wtt
\bigr)\Bigr]^2\Bigl)\del{x}z \biggr)}{\dsp 1+\Bigl[\phi^{-1}\bigl(t^{1-3\sspeed}\wtt
\bigr)\Bigr]^2}.
\end{equation}
Also, from the asymptotic equation \eqref{asymp-exp-A},  and the  asymptotic behaviour of the map $\phi^{-1}(w)$ and its derivative $(\phi^{-1})'(w)$,  see Lemma  \ref{phi-lem}, we note that
\begin{align*}
\phi^{-1}\bigl(t^{1-3\sspeed}\wtt(t,x_\pm)
\bigr) \bigl(\phi^{-1}\bigr)'\bigl(t^{1-3\sspeed}\wtt(t,x_\pm)
\bigr) &= \pm\bigl|t^{1-3\sspeed}\wtt(t,x_\pm)\bigr|^{\frac{1+\sspeed}{1-\sspeed}}
\biggl(\frac{1}{1-\sspeed}+ \Ord\Bigl(\bigl|t^{1-3\sspeed}\wtt(t,x_\pm)\Bigr|^{\frac{2}{\sspeed-1}}\Bigr)\biggr)\\
= \pm t^{\frac{(1-3\sspeed)(1+\sspeed)}{1-\sspeed}}&\Bigl|\wtt_*(x_\pm)+ \Ord\bigl(t^{(k-1)(1-3\sspeed)+1}\bigr)\Bigr|^{\frac{1+\sspeed}{1-\sspeed}}
\biggl(\frac{1}{1-\sspeed}+ \Ord\Bigl(t^{\frac{2(3\sspeed -1)}{1-\sspeed}}\Bigr)\biggr)
\end{align*}
for $t\in (0,t_\pm]$, which can be simplified to
\begin{equation} \label{asymp-exp-D}
\phi^{-1}\bigl(t^{1-3\sspeed}\wtt(t,x_\pm)
\bigr) \bigl(\phi^{-1}\bigr)'\bigl(t^{1-3\sspeed}\wtt(t,x_\pm)
\bigr) = \pm t^{\frac{(1-3\sspeed)(1+\sspeed)}{1-\sspeed}}\biggl( \frac{|\wtt_*(x_\pm)|^{\frac{1+\sspeed}{1-\sspeed}}}{1-\sspeed}+ \Ord (t^\lambda)\biggr), \quad 0<t\leq t_\pm.
\end{equation}

Assuming now that $k\geq 3$ so that the derivatives $\del{x}\wtt$ and $\del{x}z$ converge pointwise uniformly on $\Tbb$ to
$\del{x}\wtt_*$ and $\del{x}z_*$ by the decay estimates  \eqref{asymp-C} and \eqref{asymp-D}, we see from
plugging in the asymptotic expansions \eqref{asymp-exp-B} and \eqref{asymp-exp-D} into \eqref{asymp-E} that
the asymptotic behaviour of the gradient density is determined by 
\begin{equation*}
\frac{\del{x}\rho(t,x_{\pm})}{\rho(t,x_\pm)}
= \frac{(1+\sspeed) \Delta_{\pm}}{ 1+ t^{\frac{2(1-3\sspeed)}{1-\sspeed} } \Bigl(|\wtt_*(x_\pm)|^\frac{2}{1-\sspeed}+ \Ord(t^\lambda)\Bigr)} 
\end{equation*}
where
\begin{align*}
\Delta_{\pm} =& \mp t^{\frac{2(1-3\sspeed)}{1-\sspeed}}\biggl( \frac{|\wtt_*(x_\pm)|^{\frac{1+\sspeed}{1-\sspeed}}}{1-\sspeed}+ \Ord (t^\lambda)\biggr)\Bigl(\del{x}\wtt_*(x_\pm)+\Ord\bigl(t^{(k-1)(1-3\sspeed)+1}\bigr)\Bigr)\\
&+ \Bigl(1+ t^{\frac{2(1-3\sspeed)}{1-\sspeed} } \Bigl(|\wtt_*(x_\pm)|^\frac{2}{1-\sspeed}+ \Ord(t^\lambda)\Bigr)\Bigr) \Bigl(\del{x}z_*(x_\pm)+\Ord\bigl(t^{k(1-3\sspeed)+1}\bigr)\Bigr).
\end{align*}
Simplifying, we arrive at the asymptotic formula
\begin{equation*}
\frac{\del{x}\rho(t,x_{\pm})}{\rho(t,x_\pm)} =(1+\sspeed)\Biggl(\mp\frac{\del{x}\wtt_*(x_\pm)}{(1-\sspeed)|\wtt_*(x_\pm)|} + \del{x}z_*(x_\pm)+\Ord\bigl(t^{\lambdat}\bigr)\Biggr),
\end{equation*}
which, after letting $t\searrow 0$, yields
\begin{equation*}
\lim_{t\searrow 0} \frac{\del{x}\rho(t,x_{\pm})}{\rho(t,x_\pm)} =(1+\sspeed)\biggl(\mp\frac{\del{x}\wtt_*(x_\pm)}{(1-\sspeed)|\wtt_*(x_\pm)|} + \del{x}z_*(x_\pm)\biggr).
\end{equation*} 

\end{proof}

\begin{rem} \label{rem-exist}
\begin{enumerate}[(i)]
$\;$

\item  If $\Wsc_*^{+}\neq \emptyset$ or $\Wsc^-_* \neq \emptyset$, then by Theorem \ref{thm-exist}.(c) the rescaled density   $t^{\frac{2(1+\sspeed)}{\sspeed-1}} \rho(t,x)$
will blow up somewhere at future timelike infinity, that is,
\begin{equation} \label{rho-vanish}
\sup_{x\in \Tbb} \limsup_{t\searrow 0}  t^{\frac{2(1+\sspeed)}{\sspeed-1}} \rho(t,x) = \infty,
\end{equation}
and at each spatial point $x_{\pm}\in \Wsc_*^{\pm}$, the rescaled four-velocity  $t^{\frac{2(1-2\sspeed)}{\sspeed-1}}\vt^i(t,x)$
will asymptote as $t\searrow 0$  to  the null-vector 
\begin{equation} \label{vel-asympt}
|w_*(x_\pm)|^{\frac{1}{1-\sspeed}} \bigl(-\delta^i_0 \pm \delta^i_1 \bigr).
\end{equation}

The asymptotic limit \eqref{vel-asympt} shows that the solutions from Theorem \ref{thm-exist} exhibit extreme tilt at spatial points at future timelike infinity where $\wtt_*$ does not vanish. It is worthwhile noting that \eqref{rho-vanish} is not consistent with the expected behavior of a fluid on an exponentially expanding cosmological spacetime. 
Indeed, this behaviour should be compared with that of small perturbations to the tilted homogeneous solutions of the relativistic Euler equations where it has been rigorously shown that the rescaled density $t^{\frac{2(1+\sspeed)}{\sspeed-1}} \rho$ converges to a \emph{bounded, strictly positive function} at timelike infinity
\cite{MarshallOliynyk:2022,Oliynyk:2021}. 
\item 
Letting
\begin{equation*}
\Wsc'_* = \{ x\in \Tbb \, | \, \del{x}\wtt_*(x)\neq 0\},
\end{equation*}
and assuming that $\del{}\Wsc_*^\pm \cap \Wsc'_* \neq 0$ and $k\geq 3$, there will, by the continuity of $\wtt_*$ and $\del{x}\wtt_*$, exist a sequences of points $x_{\pm}^j \in \Wsc_*^\pm$ such
that $\lim_{j\rightarrow \infty}\wtt_*(x_{\pm}^j)=0$
and $\lim_{j\rightarrow \infty}\del{x}\wtt_*(x_{\pm}^j) \neq 0$. In this case, the fractional density will, by Theorem \ref{thm-exist}.(c), blow up at future timelike infinity, that is,
\begin{equation*}
\sup_{x\in \Tbb} \limsup_{t\searrow 0} \frac{\del{x}\rho(t,x)}{\rho(t,x)} = \infty \AND \inf_{x\in \Tbb} \liminf_{t\searrow 0} \frac{\del{x}\rho(t,x)}{\rho(t,x)}=-\infty.
\end{equation*}
As discussed in the introduction, this behavior is of interest because it is not consistent with the expected behavior of the fractional density gradient on cosmological spacetimes. As noted in the introduction, this behavior was anticipated by Rendall in \cite{Rendall:2004}, and from the proof of Theorem \ref{thm-exist}, it is clear that it is due to anisotropies that are generated from the solution limiting as $t\searrow 0$ to different homogeneous solutions at spatial points $x_{\pm}$ where $\wtt_*(x_+)>0$ and $\wtt_*(x_-)<0$ for which there is no smooth transition between them as can be seen from the asymptotic limit \eqref{vel-asympt} of the rescaled fluid velocity.
\item The existence of open sets of initial data that yields solutions for which $\Wsc_*^{\pm}\neq \emptyset$ and $\del{}\Wsc_*^{\pm}\cap \Wsc_*'\neq \emptyset$ is established below in Theorem \ref{thm-non-empty}.
\item The initial value problem for the  \textit{asymptotic equation} associated to \eqref{T2-Eul-H.1} is given by
\begin{align}
\del{t}\Wtth &= \frac{3\sspeed-1}{t}\pip \Wtth \hspace{2.25cm} \text{in $(0,t_1]\times\Tbb$,}\label{T2-Eul-asymp.1}\\
\Wtth &= \Wtt(t_1)=(\ztt(t_1),\ztt(t_1))^{\tr} \hspace{0.5cm} \text{in $\{t_1\}\times \Tbb$,} 
\label{T2-Eul-asymp.2}
\end{align}
where we note that the asympotic equation
\eqref{T2-Eul-asymp.1} is obtained from \eqref{T2-Eul-H.1}
by dropping terms involving spatial derivatives. Due to the simplicity of the asymptotic equation, we can integrate it exactly to obtain the unique solution
\begin{equation} \label{asymp-IVP-sol}
\Wtth(t) = (\ztth(t),\wtth(t))^{\tr}=\biggl(\Bigl(\frac{t}{t_1}\Bigr)^{3\sigma-1}\ztt(t_1),\wtt(t_1)\biggr)^{\tr}.
\end{equation}
The estimates \eqref{wtt-asymp-est} and \eqref{z-asymp-est} from the proof of Theorem \ref{thm-exist} imply that the difference between the full solution $\Wtt(t)=(\ztt(t),\wtt(t))^{\tr}$ to the Fuchsian GIVP \eqref{T2-Eul-H.1}-\eqref{T2-Eul-H.2} from Theorem \ref{thm-exist} and the solution \eqref{asymp-IVP-sol} of the IVP for the asymptotic equation is bounded by
\begin{align*}
\norm{\wtt(t)-\wtth(t)}_{H^{k-1}} &\lesssim t_1^{(k-1)(1-3\sspeed)+1}- t^{(k-1)(1-3\sspeed)+1},
\\
\norm{\ztt(t)-\ztth(t)}_{H^{k-1}}&\lesssim t^{3\sigma-1}\bigl( t_1^{k(1-3\sspeed)+1}- t^{k(1-3\sspeed)+1} \bigr),
\end{align*}
for all $t\in (0,t_1]$. It is clear from these bounds and the assumption $k(1-3\sigma)+1>0$ that the error, as measured in the $H^{k-1}$ norm, between the full and asymptotic solutions can be made arbitrarily small, uniformly on the time interval $(0,t_1]$, by choosing $t_1$ sufficiently close to zero.  
This verifies that solutions from Theorem \ref{thm-exist} of the Fuchsian GIVP \eqref{T2-Eul-H.1}-\eqref{T2-Eul-H.2}  are \textit{asymptotically ODE dominated}, which was originally observed numerically in \cite{MarshallOliynyk:2022} using a different but equivalent Fuchsian formulation of the $\Tbb^2$-symmetric Euler equations; see also \cite{BMO:2023} where similar ODE dominated behavior was observed to occur numerically in solutions of the Einstein-Euler equations with Gowdy symmetry and a positive cosmological constant.
\item By Corollary  \ref{cor-Ac0Bc}, we know that the $\ell$-th spatial derivative of the matrix $\Btt(t,\wtt)$ blows up like $t^{\ell(1-3\sspeed)}$ near $t=0$. Because of this, the convergence rates in the decay estimates from Theorem \ref{thm-exist}.(a) can be improved by using a norm of lower regularity. Indeed, if $k-\ell-1>\frac{1}{2}$ for some $\ell \in \Zbb_{\geq 0}$, then we can replace the estimate \eqref{Btt-dWtt-est-A} in the proof of Theorem \ref{thm-exist} by
\begin{equation*}
\norm{\Btt(t,\wtt(t))\del{x}\Wtt(t) }_{H^{k-\ell-1}}\lesssim  \norm{\Btt(t,\wtt(t))}_{H^{k-\ell-1}}
\norm{\del{x}\Wtt(t)}_{H^{k-1}},
\end{equation*}
which yields the following replacement for \eqref{Btt-dWtt-est-B}:
\begin{equation*}
 \norm{\Btt(t,\wtt(t))}_{H^{k-\ell-1}} \lesssim t^{(k-\ell-1)(1-3 \sspeed)} (1+\norm{\wtt(t)}^{k-\ell-2}_{L^\infty})\bigl(1+\norm{\wtt(t)}_{H^{k-\ell-1}}\bigr).
\end{equation*}
Using this, it is then straightforward to modify the derivation of the  decay estimates from proof of Theorem \ref{thm-exist} to show that the estimates
\begin{align*}
\norm{\wtt(t)-\wtt_*}_{ C^{k-\ell-2,\frac{1}{2}}} \lesssim \norm{\wtt(t)-\wtt_*}_{H^{k-\ell-1}}  &\leq C\delta t^{(k-\ell-1)(1-3\sspeed)+1}
\intertext{and}
\norm{z(t)-z_*}_{C^{k-\ell-2,\frac{1}{2}}}\lesssim \norm{z(t)-z_*}_{H^{k-\ell-1}}  &\leq C\delta t^{(k-\ell)(1-3\sspeed)+1}
\end{align*}
hold for all $t\in (0,1]$. 
\end{enumerate}
\end{rem}

\section{Fractional density gradient blow-up at future timelike infinity\label{sec:frac-den-grad-blow-up}}
By Remark \ref{rem-exist}.(ii), we can verify that blow-up of the fractional density contrast at future timelike infinity actually occurs by showing that there exists initial data that yield asymptotic functions $\wtt_*$ for which the sets $\Wsc_*^{\pm}$ and $\del{}\Wsc_*^{\pm}\cap \Wsc_*'$ are non-empty. In this section, we establish the existence of such initial data; see Theorem \ref{thm-non-empty} for the precise statement. In particular, Theorem \ref{thm-non-empty}  confirms, in the restricted setting of relativistic fluids on exponentially expanding FLRW spacetimes, the conjecture by Rendall from \cite{Rendall:2004} regarding the blow-up of the fractional density gradient for sound speeds satisfying $1<\sspeed<1/3$.

In order to establish the existence of initial data sets that yields asymptotic functions $\wtt_*$ for which  $\Wsc_*^{\pm}$ and $\del{}\Wsc_*^{\pm}\cap \Wsc_*'$ are non-empty, we will solve a simpler system whose solutions we can accurately track over long time intervals\footnote{Recall that due to our compactification, the time interval $(0,1]$ corresponds to an infinitely long time interval in physical time.}, and at the same time, yield very accurate approximations to solutions of the full equations. 
The simpler system that we will solve is obtained by \textit{linearising} the full system \eqref{T2-Eul-H.1}. This results in the following initial value problem:  
\begin{align}
\del{t}\Wttt + \Bttt \del{x}\Wttt &= \frac{3\sspeed-1}{t}\pip \Wttt \hspace{0.5cm} \text{in $(0,1]\times\Tbb$,}\label{T2-Eul-linA.1}\\
\Wttt &=(\zttt_0,\wttt_0)^{\tr} \hspace{0.9cm} \text{in $\{1\}\times \Tbb$,}
\label{T2-Eul-linA.2}
\end{align}
where
\begin{align}
\Wttt&=(\zttt,\wttt)^{\tr}, \label{Wttt-def}\\
\Bttt &:= \Btt(t,0) = \begin{pmatrix}
0 & -1\\
-\sspeed & 0
\end{pmatrix}, \label{Bttt-def}
\end{align}
and in deriving the second equality in the above expression for $\Bttt$,
we used \eqref{cov}, \eqref{phi-def}, \eqref{Wtt-def} and \eqref{AcBc-def}.
We further note that this IVP for the  \textit{linearised equation} \eqref{T2-Eul-linA.1} can be expressed in second order form as
\begin{align}
\del{t}^2 \zttt-\sspeed \del{x}^2 \zttt
&= \frac{3\sspeed-1}{t}\del{t}\zttt-\frac{3\sspeed-1}{t^2}\zttt \hspace{0.7cm} \text{in $(0,1]\times \Tbb$}, \label{T2-Eul-linB.1}\\
\del{t}\wttt &= \sspeed \del{x}\zttt
 \hspace{3.45cm} \text{in $(0,1]\times \Tbb$}, 
 \label{T2-Eul-linB.2}\\
 (\zttt,\del{t}\zttt)&= \Bigl(\zttt_0,\del{x}\wttt_0+\frac{3\sspeed-1}{t}\zttt_0\Bigr)  \hspace{0.5cm}\text{in $\{1\}\times \Tbb$}, 
 \label{T2-Eul-linB.3}\\
 \wttt &= \wttt_0   \hspace{3.9cm}\text{in $\{1\}\times \Tbb$}.
 \label{T2-Eul-linB.4}
\end{align}

\subsection{Existence of solutions to the linearised IVP and error bounds}
We now turn to establishing the existence and uniqueness of solutions to the linearised IVP \eqref{T2-Eul-linA.1}-\eqref{T2-Eul-linA.2} and establishing an error estimate that measures the difference between solutions of the linearised IVP and the Fuchsian GIVP \eqref{T2-Eul-H.1}-\eqref{T2-Eul-H.2}.

\begin{prop} \label{prop-lin-approx}
Suppose $k\in \Zbb_{>\frac{3}{2}}$,  $\frac{1}{3}<\sspeed <\frac{k+1}{3k}$,  $0<\delta < \delta_0$ where $\delta_0>0$ is the constant from Theorem \ref{thm-exist}, $\Wttt(1)=(\zttt_0,\wttt_0)^{\tr}\in H^{k}(\Tbb,\Rbb^2)$, $\Wtt(1)=(\ztt_0,\wtt_0)^{\tr}\in H^{k}(\Tbb,\Rbb^2)$ satisfies
\begin{equation*}
\norm{\Wtt(1)}_{H^k} = \sqrt{\norm{\ztt_0}_{H^k}^2 + \norm{\wtt_0}_{H^k}^2} \leq \delta,
\end{equation*}
and 
\begin{equation*}
\Wtt=(\ztt, \wtt )^{\tr} \in C^0\bigl((0,1],H^{k}(\Tbb,\Rbb^2)\bigr)\cap C^1 \bigl((0,1],H^{k-1}(\Tbb,\Rbb^2)\bigr)
\end{equation*}
is the unique solution to the Fuchsian GIVP \eqref{T2-Eul-H.1}-\eqref{T2-Eul-H.2} from Theorem \ref{thm-exist}. Then
there exists a unique solution
\begin{equation*}
\Wttt=(\zttt, \wttt )^{\tr} \in C^0\bigl((0,1],H^{k}(\Tbb,\Rbb^2)\bigr)\cap C^1 \bigl((0,1],H^{k-1}(\Tbb,\Rbb^2)\bigr)
\end{equation*}
to the IVP \eqref{T2-Eul-linA.1}-\eqref{T2-Eul-linA.2}
for the linearised equation, and there exists a constant $C_0>0$ such that
\begin{equation*}
\norm{\Wttt(t)}_{C^{k-1},\frac{1}{2}}\lesssim \norm{\Wttt(t)}_{H^k} \leq C_0 \norm{\Wttt(1)}_{H^k}
\end{equation*}
for all $t\in (0,1]$. Moreover, $\Wttt(t)$ converges in  $\in H^{k-1}(\Tbb,\Rbb^2)$ as $t\searrow 0$ to a limit, denoted $\Wttt(0)$, and there exists a constant 
$C_1=C_1(\delta_0)$ such that
\begin{gather*}
\norm{\Wtt(t)-\Wttt(t)}_{C^{k-2},\frac{1}{2}}
\lesssim \norm{\Wtt(t)-\Wttt(t)}_{H^{k-1}} \leq C_1 \bigl(\norm{\Wtt(1)-\Wttt(1)}_{H^{k-1}}+\delta^2\bigr)
\end{gather*}
for all $t\in [0,1]$.
\end{prop}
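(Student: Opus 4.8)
We would establish the three assertions --- existence together with the uniform bound, convergence as $t\searrow 0$, and the comparison with $\Wtt$ --- by exploiting the following structural point about \eqref{T2-Eul-linA.1}: its principal coefficient $\Bttt=\Btt(t,0)=B(0)$ is the \emph{constant} matrix of \eqref{Bttt-def}, which is symmetrised by the constant positive definite matrix $A^0(0)$ obtained by setting $u=0$ in \eqref{A0-def}, since $A^0(0)\Bttt$ is symmetric. Thus the energy method applies to \eqref{T2-Eul-linA.1} with \emph{neither} commutator terms \emph{nor} the unbounded symmetriser $\Att^0(t,\wtt)$. Concretely, multiplying \eqref{T2-Eul-linA.1} on the left by $A^0(0)$ gives a linear symmetric hyperbolic system with smooth coefficients on $(t_*,1]\times\Tbb$ for every $t_*\in(0,1)$, so \cite[Thm.~10.1]{BenzoniSerre:2007} yields a unique $\Wttt\in C^0((t_*,1],H^k(\Tbb,\Rbb^2))\cap C^1((t_*,1],H^{k-1}(\Tbb,\Rbb^2))$ on a maximal interval; applying $\del{x}^\ell$ for $0\le\ell\le k$, pairing with $(\del{x}^\ell\Wttt)^{\tr}A^0(0)$ and integrating by parts removes the transport term and leaves
\begin{equation*}
\tfrac12\del{t}\ip{\del{x}^\ell\Wttt}{A^0(0)\del{x}^\ell\Wttt}=\frac{3\sspeed-1}{t}\ip{\del{x}^\ell\Wttt}{A^0(0)\pip\del{x}^\ell\Wttt}=\frac{(3\sspeed-1)\sspeed}{t}\norm{\del{x}^\ell\zttt}_{L^2}^2\ge 0,
\end{equation*}
because $3\sspeed-1>0$. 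Hence $t\mapsto\sum_{\ell\le k}\ip{\del{x}^\ell\Wttt}{A^0(0)\del{x}^\ell\Wttt}$ is non-decreasing, so $\norm{\Wttt(t)}_{H^k}\lesssim\norm{\Wttt(1)}_{H^k}$ on the existence interval; the continuation principle then forces $t_*=0$, the Hölder bound $\norm{\Wttt(t)}_{C^{k-1,1/2}}\lesssim\norm{\Wttt(t)}_{H^k}$ is Sobolev embedding (Theorem \ref{Sobolev}), and uniqueness is the $\ell=0$ identity applied to a difference of two solutions.

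For the convergence, from the components of \eqref{T2-Eul-linA.1} one finds $\del{t}(t^{1-3\sspeed}\zttt)=t^{1-3\sspeed}\del{x}\wttt$ and, as in \eqref{T2-Eul-linB.2}, $\del{t}\wttt=\sspeed\del{x}\zttt$. Since $\sspeed<\tfrac{k+1}{3k}\le\tfrac23$, the weight $t^{1-3\sspeed}$ is integrable on $(0,1]$; because $\norm{\del{x}\wttt}_{H^{k-1}}\le\norm{\wttt}_{H^k}\lesssim\norm{\Wttt(1)}_{H^k}$ is bounded, $t^{1-3\sspeed}\zttt$ is Cauchy in $H^{k-1}$ as $t\searrow0$, whence $\zttt=t^{3\sspeed-1}(t^{1-3\sspeed}\zttt)\to 0$ in $H^{k-1}$ (compare Theorem \ref{thm-exist}.(a)). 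Likewise $\norm{\del{x}\zttt}_{H^{k-1}}\le\norm{\zttt}_{H^k}\lesssim\norm{\Wttt(1)}_{H^k}$ is bounded, hence integrable on $(0,1]$, so $\wttt(t)$ converges in $H^{k-1}$ to some $\wttt_\ast$. Therefore $\Wttt(t)\to\Wttt(0):=(0,\wttt_\ast)^{\tr}$ in $H^{k-1}$.

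For the error estimate put $\Vt:=\Wtt-\Wttt$. Subtracting \eqref{T2-Eul-linA.1} from \eqref{T2-Eul-H.1} and writing $\Btt(t,\wtt)\del{x}\Wtt-\Bttt\del{x}\Wttt=\Bttt\del{x}\Vt+\bigl(\Btt(t,\wtt)-\Btt(t,0)\bigr)\del{x}\Wtt$ shows that $\Vt$ solves the \emph{linearised} equation with a source,
\begin{equation*}
\del{t}\Vt+\Bttt\del{x}\Vt=\frac{3\sspeed-1}{t}\pip\Vt-\bigl(\Btt(t,\wtt)-\Btt(t,0)\bigr)\del{x}\Wtt,\qquad\Vt(1)=\Wtt(1)-\Wttt(1).
\end{equation*}
Isolating $\Bttt\del{x}\Vt$ rather than $\Btt(t,\wtt)\del{x}\Vt$ is the crux: the energy argument above now applies verbatim at the lower level $k-1$ --- no commutator, no $\Div\!\Att$, and the $\tfrac{3\sspeed-1}{t}$--term again of the good sign --- so that summation over $0\le\ell\le k-1$ gives $-\del{t}\norm{\Vt}_{H^{k-1}}\lesssim\norm{\bigl(\Btt(t,\wtt)-\Btt(t,0)\bigr)\del{x}\Wtt}_{H^{k-1}}$. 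Since $k-1>\tfrac12$, Theorem \ref{Product}.(ii) bounds the right side by $\norm{\Btt(t,\wtt)-\Btt(t,0)}_{H^{k-1}}\norm{\del{x}\Wtt}_{H^{k-1}}$, with $\norm{\del{x}\Wtt}_{H^{k-1}}\le\norm{\Wtt}_{H^k}\lesssim\delta$ by Theorem \ref{thm-exist}. A Fa\`{a} di Bruno expansion combined with Moser's estimate (Theorem \ref{Moser}) --- using that $\Btt(t,0)=B(0)$ is constant, that the $w$-derivatives of $B$ are bounded (Lemma \ref{lem-A0B}), that each factor $\del{x}^{k_p}(t^{1-3\sspeed}\wtt)$ carries a weight $t^{1-3\sspeed}$, and that $\norm{\wtt}_{W^{1,\infty}}\lesssim\delta\le\delta_0$ --- then gives $\norm{\Btt(t,\wtt)-\Btt(t,0)}_{H^{k-1}}\lesssim C(\delta_0)\,\delta\,t^{(k-1)(1-3\sspeed)}$. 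The hypothesis $\sspeed<\tfrac{k+1}{3k}$ is exactly $k(1-3\sspeed)>-1$, hence $(k-1)(1-3\sspeed)>-1$, so $t^{(k-1)(1-3\sspeed)}$ is integrable on $(0,1]$; integrating the differential inequality from $t$ to $1$ yields
\begin{equation*}
\norm{\Vt(t)}_{H^{k-1}}\le\norm{\Wtt(1)-\Wttt(1)}_{H^{k-1}}+C\int_0^1\norm{\bigl(\Btt(\tau,\wtt(\tau))-\Btt(\tau,0)\bigr)\del{x}\Wtt(\tau)}_{H^{k-1}}\,d\tau\le C_1\bigl(\norm{\Wtt(1)-\Wttt(1)}_{H^{k-1}}+\delta^2\bigr),
\end{equation*}
which extends to $t=0$ by the convergence of $\Wttt$ and the decay estimates of Theorem \ref{thm-exist}; the Hölder form of the bound is then $H^{k-1}(\Tbb)\hookrightarrow C^{k-2,1/2}(\Tbb)$.

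The only genuinely new step is the reformulation of the equation for $\Vt=\Wtt-\Wttt$ as the linearised equation with source $-(\Btt(t,\wtt)-\Btt(t,0))\del{x}\Wtt$: this simultaneously restores the constant bounded symmetriser $A^0(0)$ and renders the source $O(\delta^2)$. Everything else is energy estimates of the type already carried out in Sections \ref{sec:coeff-bounds}--\ref{sec:exist-asymp}, and the one piece of arithmetic to watch is that the $t$-weight $t^{(k-1)(1-3\sspeed)}$ produced by differentiating $\Btt(t,\wtt)-\Btt(t,0)$ is integrable near $t=0$, which the standing hypothesis on $\sspeed$ provides.
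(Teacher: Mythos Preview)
Your proof is correct and follows essentially the same route as the paper: both use the constant symmetriser $A^0(0)=\Attt^0=\operatorname{diag}(\sspeed,1)$ to obtain a sign-favourable energy identity for $\Wttt$ (and for $\Vt=\Wtt-\Wttt$), both bound the source $(\Btt(t,\wtt)-\Btt(t,0))\del{x}\Wtt$ in $H^{k-1}$ via the product and Moser estimates to get $C(\delta_0)\delta^2 t^{(k-1)(1-3\sspeed)}$, and both establish convergence of $\Wttt$ by integrating the projected equations and using the integrability of $t^{1-3\sspeed}$ (from $\sspeed<\tfrac{k+1}{3k}\le\tfrac{2}{3}$). The only cosmetic differences are that the paper cites linear symmetric-hyperbolic existence directly (rather than local existence plus continuation) and writes the convergence argument in terms of the projections $\pi,\pip$ rather than the component equations you use.
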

\begin{proof}
The linearised equation \eqref{T2-Eul-linA.1} is symmetrisable as is easily verified by multiplying it on the left by the matrix
\begin{equation*}
\Attt^0 = \begin{pmatrix} \sspeed & 0\\
0 & 1 
\end{pmatrix}
\end{equation*}
to get
\begin{equation} \label{approx-1}
\Attt^0 \delt{t}\Wttt + \Attt^1 \del{x}\Wttt= \frac{3\sspeed-1}{t}\Attt^0 \pip\Wttt
\end{equation}
where 
\begin{equation*}
\Attt^1 = \begin{pmatrix} 0 & -\sspeed\\
-\sspeed & 0 
\end{pmatrix}.
\end{equation*}
Standard existence theory for linear symmetric hyperbolic equations, e.g. \cite[Thm.~2.6]{BenzoniSerre:2007}, then guarantees the existence of unique solution  
\begin{equation*}
\Wttt=(\zttt, \wttt )^{\tr} \in C^0\bigl((0,1],H^{k}(\Tbb,\Rbb^2)\bigr)\cap C^1 \bigl((0,1],H^{k-1}(\Tbb,\Rbb^2)\bigr)
\end{equation*}
to the IVP \eqref{T2-Eul-linA.1}-\eqref{T2-Eul-linA.2}. Moreover, differentiating \eqref{approx-1} with respect to $x$ yields
\begin{equation} \label{approx-2}
\Attt^0 \del{t}\del{x}^\ell \Wttt + \Attt^1 \del{x} \del{x}^\ell \Wttt= \frac{3\sspeed-1}{t}\Attt^0 \pip \del{x}^\ell\Wttt
\end{equation}
for all $\ell \in \Zbb$ satisfying $0\leq \ell \leq k$. Then, letting
\begin{equation*}
\nnorm{\Wttt}_0^2 := \ip{\Wttt}{\Attt^0\Wtt}
\end{equation*}
denote the $L^2$-energy norm and
\begin{equation*}
\nnorm{\Wttt}_s^2= \sum_{\ell=0}^s\nnorm{\del{x}^\ell \Wttt}_0^2
\end{equation*}
denote the higher order energy norms, which are clearly equivalent to the standard $H^s$-norms since $\sspeed>0$ by assumption, we obtain, by multiplying \eqref{approx-2} on the left by $\del{x}^\ell \Wttt^{\tr}$ and integrating by parts, the energy inequality
\begin{equation*}
-\frac{1}{2}\del{t}\nnorm{\del{x}^\ell \Wttt}_{0}^2= -\frac{3\sspeed-1}{t}\nnorm{ \pip \del{x}^\ell\Wttt}_0^2 \leq 0,
\end{equation*}
where in deriving this we employed the identities $(\pip)^2=\pip$, $(\pip)^{\tr}=\pip$, $(\Attt^0)^{\tr}=\Attt^0$,  $(\Attt^1)^{\tr}=\Attt^1$ and $[\pip,\Attt^0]=0$. Summing the above inequalities from $\ell=0$ to $\ell=k$
yields
\begin{equation*}
-\frac{1}{2}\del{t}\nnorm{\Wttt}_{k}^2 \leq 0
\end{equation*}
or equivalently
\begin{equation*}
-\del{t}\nnorm{\Wttt}_{k} \leq 0.
\end{equation*}
Integrating in time yields the uniform bound
\begin{equation*}
\nnorm{\Wttt(t)}_{k}  \leq \nnorm{\Wtt(1)}_{k}, \quad 0<t\leq 1,
\end{equation*} 
which, 
due to the norm equivalency 
$\nnorm{\cdot}_{k} \approx \norm{\cdot}_{H^k}$ and Sobolev's inequality (Theorem \ref{Sobolev}),
implies
\begin{equation} \label{approx-3}
\norm{\Wttt(t)}_{C^{k-1},\frac{1}{2}}\lesssim \norm{\Wttt(t)}_{k}  \lesssim \norm{\Wttt(1)}_{k}, \quad 0<t\leq 1.
\end{equation}

Next, we set
\begin{equation*}
\Vtt = \Wtt-\Wttt,
\end{equation*}
and observe from \eqref{T2-Eul-H.1} and
\eqref{T2-Eul-linA.1} that $\Vtt$ satisfies
\begin{equation} \label{approx-4}
\del{t}\Vtt + \Bttt\del{x}\Vtt = \frac{3\sspeed-1}{t}\pip \Vtt - F
\end{equation}
where
\begin{equation} \label{F-def}
F:= (\Btt(t,\wtt)-\Bttt)\del{x}\Wtt\oset{\eqref{Bttt-def}}{=}(\Btt(t,\wtt)-\Btt(t,0))\del{x}\Wtt.
\end{equation}
Multiplying \eqref{approx-4} on the left by $\Attt^0$ yields
\begin{equation} \label{approx-5}
\Attt^0\del{t}\Vtt + \Attt^1\del{x}\Vtt = \frac{3\sspeed-1}{t}\Attt^0 \pip \Vtt - \Attt^0 F.
\end{equation}
Applying $\del{x}^\ell$ to \eqref{approx-5} and then multiplying the resulting equation on the left by $\del{x}^\ell \Vtt^{\tr}$, we obtain, after integrating by parts, the energy inequality
\begin{equation*}
-\frac{1}{2}\del{t}\nnorm{\del{x}^\ell \Vtt}_{0}^2= -\frac{3\sspeed-1}{t}\nnorm{ \pip \del{x}^\ell\Vtt}_0^2 +\ip{\del{x}^\ell\Vtt}{\Attt^0\del{x}^\ell F}
\lesssim\norm{\del{x}^\ell\Vtt}_{L^2}\norm{\del{x}^\ell F}_{L^2},
\end{equation*}
where in deriving this we employed the Cauchy-Schwartz inequality and the identities $(\pip)^2=\pip$, $(\pip)^{\tr}=\pip$, $(\Attt^0)^{\tr}=\Attt^0$,  $(\Attt^1)^{\tr}=\Attt^1$ and $[\pip,\Attt^0]=0$. 
Summing the above inequalities from $\ell=0$ to $\ell=k-1$ and employing the norm equivalence $\nnorm{\cdot}_{k-1} \approx \norm{\cdot}_{H^{k-1}}$ yields the inequality
\begin{equation*}
-\frac{1}{2}\del{t}\nnorm{\Vtt}_{k-1}^2\lesssim \norm{F}_{H^{k-1}}\nnorm{\Vtt}_{k-1},
\end{equation*}
which after dividing through by $\nnorm{\Vtt}_{k-1}$ becomes
\begin{equation} \label{approx-6}
-\del{t}\nnorm{\Vtt}_{k-1}\lesssim \norm{F}_{H^{k-1}}.
\end{equation}
By Corollary \ref{cor-Ac0Bc}, the uniform bound $\norm{\Wtt(t)}_{C^{k-1,\frac{1}{2}}} \lesssim \norm{\Wtt(t)}_{H^k} \leq C(\delta_0)\delta$ from Theorem \ref{thm-exist}, and the calculus inequalities from Appendix \ref{calc}, in particular, Theorems \ref{Sobolev}, \ref{Product}.(ii) and \ref{Moser}, we observe that the source term \eqref{F-def} is bounded by 
\begin{align*}
\norm{F}_{H^{k-1}}\lesssim  \norm{\Btt(t,\wtt)-\Btt(t,0)}_{H^{k-1}}\norm{\del{x}\Wtt}_{H^{k-1}} 
& \leq t^{(k-1)(1-3\sspeed)}C(\delta_0)\delta^2.
\end{align*}
Substituting this into the energy inequality \eqref{approx-6} and integrating the resulting inequality in time, we find that
\begin{equation*}
\nnorm{V(t)}_{k-1} \leq C(\delta_0)\bigl(\nnorm{V(1)}_{k-1} + \delta^2\bigr), \quad 0<t\leq 1,
\end{equation*}
where in deriving this inequality we used the fact that
$(k-1)(1-3\sspeed)+1>0$. The estimate
\begin{equation} \label{Wttt-error}
\norm{\Wtt(t)-\Wttt(t)}_{C^{k-2},\frac{1}{2}}
\lesssim \norm{\Wtt(t)-\Wttt(t)}_{H^{k-1}} \leq C(\delta_0)\bigl( \norm{\Wtt(1)-\Wttt(1)}_{H^{k-1}}+ \delta^2\bigr), \quad 0<t\leq 1, 
\end{equation}
is then a direct consequence of the norm equivalence $\nnorm{\cdot}_{k-1} \approx \norm{\cdot}_{H^{k-1}}$ and Sobolev's inequality (Theorem \ref{Sobolev}).

To complete the proof, we need to show that the solution $\Wttt(t)$ to the linearised IVP converges in $H^{k-1}(\Tbb)$ as $t\searrow 0$. To establish convergence, we first apply the projection operator $\pi$ to \eqref{T2-Eul-linA.1} to obtain
\begin{equation*}
\del{t}\pi \Wttt +\pi \Bttt \del{x}\Wttt=0.
\end{equation*}
Integrating this is time gives
\begin{equation*}
\pi \Wttt(t_1)-\pi\Wttt(t_0)= -\int^{t_0}_{t_1} \pi \Bttt \del{x}\Wttt(s)\,ds,\quad 0<t_1 < t_0\leq 1.
\end{equation*}
Applying the $H^{k-1}$ norm to this expression, we find, with the help of the bound \eqref{approx-3} and the triangle inequality, that
\begin{equation*} 
\norm{\pi \Wttt(t_1)-\pi\Wttt(t_0)}_{H^{k-1}}\leq \int^{t_0}_{t_1} \norm{\pi \Bttt \del{x}\Wttt(s)}_{H^{k-1}}\,ds \lesssim \norm{\Wttt(1)}(t_0-t_1),\quad  0<t_1 < t_0\leq 1.
\end{equation*}
From this estimate, it then follows that $\pi \Wttt(t)$ converges in $H^{k-1}(\Tbb)$ as $t\searrow 0$.

Next, applying the projection operator $\pip$ to \eqref{T2-Eul-linA.1}, we have
\begin{equation*}
\del{t}\pip (t^{1-3\sspeed}\Wttt) + \pip \Bttt\del{x} (t^{1-3\sspeed}\Wttt) = 0. 
\end{equation*}
Integrating this in time and arguing as above yields
\begin{equation*} 
\norm{\pip t_1^{1-3\sspeed}\Wttt(t_1)-\pip t^{1-3\sspeed}_0\Wttt(t_0)}_{H^{k-1}}\lesssim \norm{\Wttt(1)}(t_0^{2-3\sspeed}-t_1^{2-3\sspeed}),\quad  0<t_1 < t_0\leq 1.
\end{equation*}
Since $\frac{1}{3}<\sspeed < \frac{k+1}{3k}$ and $k\geq 2$, we have that $2-3\sspeed>0$, and hence, it follows from the above estimate that $\pip t^{1-3\sspeed}\Wttt(t)$ converges in $H^{k-1}(\Tbb)$ as $t\searrow 0$. From this, we deduce that $ \pip\Wttt(t)$ converges to $0$ in $H^{k-1}(\Tbb)$ as $t\searrow 0$. Now, because $\text{id}_{\Rbb^2} = \pi + \pip$, it follows immediately that $\Wttt(t)$ converges to a limit, which we denote by $\Wttt(0)$, in $H^{k-1}(\Tbb)$ as $t\searrow 0$. Due to this convergence, we note that the estimate \eqref{Wttt-error} must also hold for $t=0$. This completes the proof.
\end{proof}

\subsubsection{The general solution to the linear wave equation} Proposition \ref{prop-lin-approx} shows that the solutions $\Wtt$ of the Fuchsian GIVP \eqref{T2-Eul-H.1}-\eqref{T2-Eul-H.2} can be accurately approximated by a solution $\Wttt$ of the linearised IVP \eqref{T2-Eul-linA.1}-\eqref{T2-Eul-linA.2}, where by accurate we mean the remainder in the error estimate from Proposition \ref{prop-lin-approx} is of size $\delta^2$ compared to the size of the initial data for the Fuchsian GIVP, which is of order $\delta$. However, for this error estimate to be useful for analysing the asymptotic function $\wtt_*$, c.f.~Theorem \ref{thm-exist}.(a), we need to be able to determine the precise behaviour of the linearised solution $\Wttt(t)$ over long time intervals, in particular, all the way to future timelike infinity at $t=0$. 
Here, we will accomplish this by using separation of variables to find the general solution of the linear wave equation \eqref{T2-Eul-linB.1} from which we can obtain the general solution of the linearised equation \eqref{T2-Eul-linA.1}. From the general solution, we then select particular solutions with properties that we need in order to establish the existence of open sets of initial data for the Fushsian GIVP that generate solutions that  yield asymptotic functions $\wtt_*$ for which $\Wsc_*^{\pm}$ and $\del{}\Wsc_*^{\pm}\cap \Wsc_*'$ are non-empty.

Without loss of generality, we will assume that the period of the torus $\Tbb$ is $1$ so that we can interpret functions on $\Tbb$ as functions on $[0,1]$ that satisfy periodic boundary conditions. Then expanding a solution $\zttt(t,x)$ of the linear wave equation \eqref{T2-Eul-linB.1} as a Fourier series in the variable $x$, we find after a short calculation that $\zttt$ can be represented as 
\begin{equation} \label{sep-1}
\zttt(t,x) = T_0(t) + \sum_{n=1}^\infty T_n(t)\bigl(a_n \cos(2\pi n x)+b_n\sin(2\pi n x)\bigr)
\end{equation}
where the $T_n(t)$ satisfy
\begin{equation} \label{sep-2}
t^2 T_n''(t) - (3\sspeed -1) t T_n'(t) + \bigl(3\sspeed -1 + 4\sspeed \pi^2 n^2 t^2\bigr)T_n(t) = 0
\end{equation}
and $a_n$, $b_n\in \Rbb$ are arbitrary constants. 
The general solution of the ODE \eqref{sep-2} can be represented in terms of Bessel functions of the first and second kind, $J_\alpha (t)$ and $Y_\alpha(t)$, respectively, as
\begin{equation} \label{sep-3}
T_n(t)=t^{\frac{3\sspeed}{2}}\Bigl(c_n J_{\frac{3\sspeed-2}{2}}\bigl(2\sqrt{\sspeed} \pi n t\bigr)+d_n Y_{\frac{3\sspeed-2}{2}}\bigl(2\sqrt{\sspeed} \pi n t\bigr)\Bigr)    
\end{equation}
where $c_n,d_n\in \Rbb$ are arbitrary constants. Substituting this into \eqref{sep-1}  yields the general solution of the linear wave equation \eqref{T2-Eul-linA.1}.

\subsubsection{Particular solutions of the linearised equations}

By \eqref{sep-1} and \eqref{sep-3}, we know that 
\begin{equation} \label{part-sol-1}
\zttt{}(t,x) = T_n(t)\cos(2\pi n x), \quad n\in \Zbb_{>0},
\end{equation}
are solutions of the linear wave equation \eqref{T2-Eul-linB.1}.  By \eqref{T2-Eul-linB.3}, this solution leads to the following initial data for the linearised equation \eqref{T2-Eul-linA.2}:
\begin{equation} \label{part-sol-2}
\Wttt(1,x)=(\zttt_0(x),\wttt_0(x))^{\tr} = \biggl( T_{n}(1)\cos(2\pi n x),\frac{1}{2\pi n} \bigl(T_{n}'(1)-(3\sspeed-1)T_{n}(1)\bigr)\sin(2\pi n  x) +b \biggr)^{\tr}
\end{equation}
where $b\in \Rbb$ is an arbitrary constant. Then setting $b=0$, it follows from \eqref{T2-Eul-linB.2}, \eqref{part-sol-1} and \eqref{part-sol-2} that  
\begin{equation} \label{part-sol-4}
\Wttt(t,x) = \begin{pmatrix} \zttt(t,x)
\\
\wttt(t,x)
\end{pmatrix}= 
 \begin{pmatrix}
 T_{n}(t)\cos(2\pi n x) \\
S_{n}(t)\sin(2\pi n x)
\end{pmatrix},  \quad n\in \Zbb_{>0},
\end{equation}
where
\begin{equation*}
S_n(t)= \frac{1}{2\pi n}\bigl(T_n'(1)-(3\sspeed-1)T_n(1)\bigr) + 2\pi n \sspeed \int^1_t T_n(\tau)\, d\tau,
\end{equation*}
is the unique solution to the linearised equation \eqref{T2-Eul-linA.2} that is generated by the initial data \eqref{part-sol-2} with $b=0$.
Using the Bessel recurrence relation from \cite[Eqns.(6.5)-(6.6)]{Korenev:2002}, it is not difficult to verify that the functions $S_n(t)$ can be expressed as
\begin{align*}
S_n(t) =& -t^{\frac{ 3\sspeed}{2}} \sqrt{\sspeed} \Bigl( c_n J_{\frac{3 \sspeed}{2}}\bigl(2 \sqrt{\sspeed} n \pi 
   t\bigr)+ d_n Y_{\frac{3
   \sspeed}{2}}\bigl(2 \sqrt{\sspeed} n \pi  
   t\bigr)\Bigr). 
\end{align*}
Then with the help of the power series representations of the Bessel functions, see equations (1.7) and (2.1) from \cite{Korenev:2002}, we can compute the limits of $S_n(t)$ and $T_n(t)$ as $t\searrow 0$ to get
\begin{align}
S_n(0)=& d_n \pi ^{-\frac{3 \sigma }{2}-1} \sigma ^{\frac{1}{2}-\frac{3
   \sigma }{4}} n^{-\frac{3 \sigma}{2}} \Gamma \Bigl(\frac{3 \sigma
   }{2}\Bigr), \label{Sn-lim}
\end{align}
and
\begin{equation} \label {Tn-lim}
T_n(0)  = 0, 
\end{equation}
respectively.

\subsubsection{Special initial data\label{Ckn-class-idata}}
For use below, we first define, for $k,n\in \Zbb_{>0}$, the following smooth pairs of functions: 
\begin{equation} \label{ztth-wtth-def}
(\ztth_0^{n,k}(x),\wtth_0^{n,k}(x)) = q_{n,k}\Bigl( T_{n}(1)\cos(2\pi n x),\frac{1}{2\pi n} \bigl(T_{n}'(1)-(3\sspeed-1)T_{n}(1)\bigr)\sin(2\pi n  x)\Bigr)
\end{equation}
where the constants $c_n,d_n\in \Rbb$ in $T_n(t)$
can be freely chosen as long as
\begin{equation} \label{dn-fix}
d_n \neq 0
\end{equation}
and
the constants $q_{k,n}$ are chosen to ensure that
\begin{equation}\label{ckn-def}
\sqrt{\norm{\ztth_0}_{H^k}^2+ \norm{\wtth_0}_{H^k}^2}=1.
\end{equation}
Then for any $\delta>0$, if we let $\Wttt(t,x)=(\zttt(t,x),\wttt(t,x))^{\tr}$
denote the solution of the linearised IVP  \eqref{T2-Eul-linA.1}-\eqref{T2-Eul-linA.2} generated from the initial data
$\Wttt(1) = \frac{\delta}{2} (\ztth_0^{n,k},\wtth_0^{n,k})^{\tr}$, we have by \eqref{part-sol-4},  \eqref{Sn-lim}-\eqref{Tn-lim}, \eqref{ztth-wtth-def}, and \eqref{dn-fix} 
that
\begin{equation} \label{wttt-lim}
\Wttt(0,x) =\biggl(0, \frac{\delta}{2}q_{k,n} S_n(0)\sin(2\pi n x)\biggr)^{\tr}, \quad k,n\in \Zbb_{>0},
\end{equation}
where
\begin{equation} \label{Sn-lbnd}
q_{k,n}\neq 0 \AND S_n(0) \neq 0.
\end{equation}

\subsection{Non-emptiness of the sets $\Wsc_*^\pm$ and  $\del{}\Wsc_*^\pm\cap \Wsc'_*$}
For a given asymptotic function $\wtt_*$, the sets $\Wsc_*^\pm$ were defined above in Theorem \ref{thm-exist}.(b) and the set $\Wsc'_*$ was defined in Remark \ref{rem-exist}.(ii). In the following theorem, we establish the existence of open sets of initial data that generate asymptotic functions $\wtt_*$ for which the sets $\Wsc_*^\pm$ and $\del{}\Wsc_*^\pm \cap \Wsc'_*$ are non-empty, which leads, among other things, to vanishing of the rescaled density and the blow-up of the fractional density gradient at future timelike infinity. More precisely, for given $n\in \Zbb_{>0}$,
$k\in \Zbb_{>\frac{3}{2}}$ and $\delta \in (0,\min\{\delta_0,1\})$ where $\delta_0>0$ is the constant from Theorem \ref{thm-exist}, we will consider the open set of initial data in $H^{k}(\Tbb,\Rbb^2)$
that consists of $\Wtt(1)=(\ztt_0,\wtt_0)^{\tr}$
satisfying
\begin{equation} \label{open-idata-A}
\norm{\Wtt(1)-2^{-1}\delta(\ztth^{k,n}_0,\wtth^{k,n}_0)^{\tr}}_{H^k} <\frac{\delta^2}{2}.
\end{equation}
By the triangle inequality 
and \eqref{ckn-def}, we observe that 
\begin{equation}\label{open-idata-B}
\norm{\Wtt(1)}_{H^k} \leq \norm{\Wtt(1)-2^{-1}\delta(\ztth^{k,n}_0,\wtth^{k,n}_0)^{\tr}}_{H^k} + \frac{1}{2}\delta\norm{(\ztth^{k,n}_0,\wtth^{k,n}_0)^{\tr}}_{H^k} < \frac{\delta(1+\delta)}{2} \leq \delta<\delta_0.
\end{equation}
The importance of this bound is that it implies that $\Wtt(1)$ satisfies the assumptions from Theorem \ref{thm-exist}, and consequently, provided $\frac{1}{3}<\sspeed <\frac{k+1}{3k}$, this initial data will, by Theorem \ref{thm-exist}, generate a unique solution of the Fuchsian GIVP \eqref{T2-Eul-H.1}-\eqref{T2-Eul-H.2}.

\begin{thm}\label{thm-non-empty}
Suppose $n\in \Zbb_{>0}$, $k\in \Zbb_{>\frac{3}{2}}$,  $\frac{1}{3}<\sspeed <\frac{k+1}{3k}$,  $\delta \in (0,\min\{\delta_0,1\})$ where $\delta_0>0$ is the constant from Theorem \ref{thm-exist},  
$\Wtt(1)=(\ztt_0,\wtt_0)^{\tr}\in H^{k}(\Tbb,\Rbb^2)$ satisfies \eqref{open-idata-A},
\begin{equation*}
\Wtt=(\ztt, \wtt )^{\tr} \in C^0\bigl((0,1],H^{k}(\Tbb,\Rbb^2)\bigr)\cap C^1 \bigl((0,1],H^{k-1}(\Tbb,\Rbb^2)\bigr)
\end{equation*}
is the unique solution to the Fuchsian GIVP \eqref{T2-Eul-H.1}-\eqref{T2-Eul-H.2} from Theorem \ref{thm-exist} generated by the initial data $\Wtt(1)$, $\wtt_*\in H^{k-1}(\Tbb)$ is the limit function from Theorem \ref{thm-exist}.(a), and $\Wsc_*^\pm\subset \Tbb$ are the sets defined in Theorem \ref{thm-exist}.(c). Then there exists a $\delta_1\in (0,\min\{\delta_0,1\})$ such that if $0<\delta<\delta_1$, then
the sets $\Wsc_*^\pm$ are non-empty, and 
\begin{equation*}
\sup_{x\in \Tbb} \limsup_{t\searrow 0} t^{\frac{2(1+\sspeed)}{\sspeed-1}} \rho(t,x) = \infty.
\end{equation*}
Moreover, if $k\in \Zbb_{>\frac{5}{2}}$ and $\Wsc_*'$ is the set defined in Remark \ref{rem-exist}.(ii), then
$\del{}\Wsc_*^\pm \cap \Wsc'_*\neq \emptyset$
and 
\begin{equation*}
\sup_{x\in \Tbb} \limsup_{t\searrow 0} \frac{\del{x}\rho(t,x)}{\rho(t,x)} = \infty \AND \inf_{x\in \Tbb} \liminf_{t\searrow 0} \frac{\del{x}\rho(t,x)}{\rho(t,x)}=-\infty.
\end{equation*}  
\end{thm}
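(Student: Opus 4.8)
The plan is to deduce everything from the explicit linearised solution constructed in Section~\ref{Ckn-class-idata}. Fix $n\in\Zbb_{>0}$, $k\in\Zbb_{>\frac32}$ and $\frac13<\sspeed<\frac{k+1}{3k}$, and let $\Wtt=(\ztt,\wtt)^{\tr}$ be the solution of the Fuchsian GIVP \eqref{T2-Eul-H.1}-\eqref{T2-Eul-H.2} generated by initial data $\Wtt(1)$ obeying \eqref{open-idata-A}; by \eqref{open-idata-B}, $\norm{\Wtt(1)}_{H^k}<\delta<\delta_0$, so Theorem~\ref{thm-exist} applies and produces the limit function $\wtt_*\in H^{k-1}(\Tbb)$. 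Alongside it I would take the solution $\Wttt=(\zttt,\wttt)^{\tr}$ of the linearised IVP \eqref{T2-Eul-linA.1}-\eqref{T2-Eul-linA.2} with initial data $\Wttt(1)=\frac{\delta}{2}\bigl(\ztth^{k,n}_0,\wtth^{k,n}_0\bigr)^{\tr}$. By \eqref{part-sol-4}, \eqref{Sn-lim}, \eqref{Tn-lim}, \eqref{wttt-lim} and \eqref{Sn-lbnd}, $\Wttt(t)$ converges as $t\searrow0$ to
\begin{equation*}
\Wttt(0,x)=\bigl(0,\,A\sin(2\pi n x)\bigr)^{\tr},\qquad A:=\tfrac{\delta}{2}\,q_{k,n}\,S_n(0)=c_{k,n}\,\delta ,
\end{equation*}
where $c_{k,n}:=\tfrac{1}{2} q_{k,n}S_n(0)$ is nonzero and depends only on $k,n,\sspeed$. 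The decisive structural feature is that this limiting profile is a \emph{non-degenerate multiple of} $\sin(2\pi n x)$: it changes sign, and at each of its zeros its derivative $2\pi n A\cos(2\pi n x)$ is nonzero and of size $\asymp\delta$.

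The next step is to transfer this structure to the nonlinear solution $\Wtt$ through Proposition~\ref{prop-lin-approx}. Since $\norm{\,\cdot\,}_{H^{k-1}}\le\norm{\,\cdot\,}_{H^k}$, hypothesis \eqref{open-idata-A} gives $\norm{\Wtt(1)-\Wttt(1)}_{H^{k-1}}<\delta^2/2$, while \eqref{open-idata-B} guarantees the remaining hypotheses of Proposition~\ref{prop-lin-approx}. Evaluating the error bound there at $t=0$ and reading off second components, I would obtain
\begin{align*}
\norm{\wtt_*-A\sin(2\pi n\,\cdot\,)}_{H^{k-1}}
&\le\norm{\Wtt(0)-\Wttt(0)}_{H^{k-1}}\\
&\le C_1\bigl(\norm{\Wtt(1)-\Wttt(1)}_{H^{k-1}}+\delta^2\bigr)\le 2C_1\delta^2 .
\end{align*}
Because $k-1>\frac12$, Sobolev's inequality (Theorem~\ref{Sobolev}) upgrades this to $\norm{\wtt_*-A\sin(2\pi n\,\cdot\,)}_{C^0}\lesssim\delta^2$, and when $k\in\Zbb_{>\frac52}$, so that $H^{k-1}(\Tbb)\hookrightarrow C^1(\Tbb)$, also $\norm{\del{x}\wtt_*-2\pi nA\cos(2\pi n\,\cdot\,)}_{C^0}\lesssim\delta^2$. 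Thus the linear profile, of size $\asymp\delta$, dominates the nonlinear correction, of size $\Ord(\delta^2)$, once $\delta$ is small; this is the crux of the argument, and it is precisely the quadratic-in-$\delta$ remainder of Proposition~\ref{prop-lin-approx}, combined with the embeddings $H^{k-1}\hookrightarrow C^0$ (resp.~$C^1$), that makes it go through.

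With these $C^0$ and $C^1$ estimates in hand, the conclusions should follow by elementary comparison for $\delta$ below a threshold $\delta_1\in(0,\min\{\delta_0,1\})$ fixed by the constants above. Choosing $x_\pm\in\Tbb$ with $\sin(2\pi n x_\pm)=\pm\sgn(A)$ gives $\wtt_*(x_\pm)=\pm|A|+\Ord(\delta^2)$, which has sign $\pm$ once $\delta<\delta_1$, so $\Wsc_*^+=\wtt_*^{-1}((0,\infty))$ and $\Wsc_*^-=\wtt_*^{-1}((-\infty,0))$ are non-empty (indeed open and non-empty); the blow-up $\sup_{x\in\Tbb}\limsup_{t\searrow0}t^{2(1+\sspeed)/(\sspeed-1)}\rho(t,x)=\infty$ then follows from Theorem~\ref{thm-exist}.(c), equivalently Remark~\ref{rem-exist}.(i). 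For the gradient statement, assuming $k\in\Zbb_{>\frac52}$, I would localise near a zero $x_0$ of $x\mapsto\sin(2\pi n x)$ at which $\cos(2\pi n x_0)=\sgn(A)$; on the interval $I_n$ of half-width $\tfrac{1}{6n}$ about $x_0$ one has $\cos(2\pi n x)\ge\tfrac{1}{2}$, so $\del{x}\wtt_*(x)=2\pi nA\cos(2\pi n x)+\Ord(\delta^2)$ keeps the sign of $A$ throughout $I_n$ for $\delta<\delta_1$, whence $\wtt_*$ is strictly monotone on $I_n$. Since $\wtt_*$ takes values of opposite sign and magnitude $\gtrsim\delta$ at the two endpoints of $I_n$, it has a zero $x_*\in I_n$ with $\del{x}\wtt_*(x_*)\neq0$, i.e.~$x_*\in\Wsc'_*$; monotonicity also shows that every neighbourhood of $x_*$ meets both $\Wsc_*^+$ and $\Wsc_*^-$, so $x_*\in\del{}\Wsc_*^+\cap\del{}\Wsc_*^-\cap\Wsc'_*$. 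The stated blow-up of the fractional density gradient then follows from Remark~\ref{rem-exist}.(ii).

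I expect the one genuinely delicate point to be quantitative rather than conceptual: one must check that every threshold entering the comparison — the ratio $|A|/\delta=|c_{k,n}|$, the half-width $\tfrac{1}{6n}$ of $I_n$, and the implied constants in the $\Ord(\delta^2)$ remainders (ultimately $C_1$ from Proposition~\ref{prop-lin-approx} together with the Sobolev constants) — depends on $k,n,\sspeed$ but not on $\delta$, so that a single $\delta_1$ works uniformly over the open ball \eqref{open-idata-A}. The structural ingredients needed for this are already set up in Section~\ref{Ckn-class-idata}: the normalisation \eqref{ckn-def} fixes the size of the linear profile, and the requirement $d_n\neq0$ forces $S_n(0)\neq0$ through \eqref{Sn-lim}, so the linear limit genuinely oscillates with non-degenerate zeros; the translation bringing a zero of the profile to $x_0$ is harmless since the whole argument is invariant under the shifts $x\mapsto x+\tfrac{1}{2n}$ on $\Tbb$.
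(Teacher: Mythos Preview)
Your proposal is correct and follows essentially the same route as the paper: compare $\wtt_*$ with the explicit linear limit $A\sin(2\pi n\,\cdot\,)$ via the $\Ord(\delta^2)$ error estimate of Proposition~\ref{prop-lin-approx}, upgrade to $C^0$ (resp.~$C^1$) closeness by Sobolev, and then read off the sign changes (resp.~non-degenerate zeros) by elementary comparison before invoking Remark~\ref{rem-exist}.(i)--(ii). Your localisation argument on the interval $I_n$ is a more explicit rendering of what the paper handles by the single sentence ``this property will also hold for $\delta^{-1}\wtt_*$ provided $\delta$ is chosen sufficiently small''; one cosmetic slip is that when $\sgn(A)=-1$ you have $\cos(2\pi n x)\le -\tfrac12$ on $I_n$ rather than $\ge\tfrac12$, so the relevant lower bound is $|A\cos(2\pi n x)|\ge|A|/2$ and $\del{x}\wtt_*$ has a definite sign (not literally the sign of $A$) on $I_n$, but this does not affect the argument.
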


\begin{proof}
Fixing  $n\in \Zbb_{>0}$, $k\in \Zbb_{>\frac{3}{2}}$,  $\frac{1}{3}<\sspeed <\frac{k+1}{3k}$, and  $\delta \in (0,\min\{\delta_0,1\})$ where $\delta_0>0$ is the constant from Theorem \ref{thm-exist}, we
let 
\begin{equation*}
\Wtt=(\ztt, \wtt )^{\tr} \in C^0\bigl((0,1],H^{k}(\Tbb,\Rbb^2)\bigr)\cap C^1 \bigl((0,1],H^{k-1}(\Tbb,\Rbb^2)\bigr)
\end{equation*}
denote the unique solution to the Fuchsian GIVP \eqref{T2-Eul-H.1}-\eqref{T2-Eul-H.2} from Theorem \ref{thm-exist} generated by the initial data $\Wtt(1)=(\ztt_0,\wtt_0)^{\tr}\in H^{k}(\Tbb,\Rbb^2)$ satisfying \eqref{open-idata-A}, and hence also \eqref{open-idata-B}. We also let
\begin{equation*}
\Wttt=(\ztt, \wtt )^{\tr} \in C^0\bigl((0,1],H^{k}(\Tbb,\Rbb^2)\bigr)\cap C^1 \bigl((0,1],H^{k-1}(\Tbb,\Rbb^2)\bigr)
\end{equation*}
denote the unique solution to the IVP \eqref{T2-Eul-linA.1}-\eqref{T2-Eul-linA.2} for the linearised equation, see Proposition \ref{prop-lin-approx}, generated by the initial data $\Wttt(1) = 2^{-1}\delta(\ztth^{k,n}_0,\wtth^{k,n}_0)^{\tr}$.
Then
it follows from Proposition \ref{prop-lin-approx}
and \eqref{open-idata-A} that
\begin{equation*}
\norm{\delta^{-1}(\Wtt(0)-\Wttt(0))}_{C^{k-2},\frac{1}{2}}
 \leq C\delta
\end{equation*}
where the constant $C>0$ is independent of
$\delta \in (0,\min\{\delta_0,1\})$. But $\Wtt(0)=(0,\wtt_*)$ by Theorem \ref{thm-exist}.(a), and $\Wttt(0)=(0,\delta \wtth_*^{k,n})$, 
where
\begin{equation} \label{wtth-def}
\wtth^{k,n}_*(x)= 2^{-1}q_{k,n}S_n(0)\sin(2\pi n x)
\end{equation} by \eqref{wttt-lim}, and $q_{k,n}, S_n(0)\neq 0$ by \eqref{Sn-lbnd}. Thus, we have
\begin{equation} \label{wtt*-wtth*}
\norm{\delta^{-1}\wtt_*-\wtth_*^{k,n}}_{C^{k-2},\frac{1}{2}}
 \leq C\delta
\end{equation}
for all $\delta \in (0,\min\{\delta_0,1\})$. Since $k\geq 2$ by assumption, we deduce from \eqref{wtt*-wtth*} that $\delta^{-1}\wtt_*$ converges uniformly on $\Tbb$ to $\wtth_*^{k,n}$ as $\delta \searrow 0$. Noting from \eqref{wtth-def} that the graph of $\wtth^{k,n}_*$ crosses $0$ more than once, it follows from the uniform convergence that graph of $\delta^{-1}\wtt_*$ must do the same for $\delta$ chosen sufficiently small. In particular, we conclude the existence of a $\delta_1\in (0,\min\{\delta_0,1\})$ such that $\Wsc^{\pm}_*\neq \emptyset$ provided $\delta \in (0,\delta_1)$. Assuming $\delta$ is chosen so that this holds, then by Remark \ref{rem-exist}.(i), the rescaled density $t^{\frac{2(1+\sspeed)}{\sspeed-1}} \rho$ must blow up at future timelike infinity in the sense that
\begin{equation*}
\sup_{x\in \Tbb} \limsup_{t\searrow 0} t^{\frac{2(1+\sspeed)}{\sspeed-1}} \rho(t,x) = \infty.
\end{equation*}

If we now assume that $k\geq 3$, then \eqref{wtt*-wtth*} shows that $\delta^{-1}\wtt_*$ and $\delta^{-1}\del{x}\wtt_*$ converges uniformly on $\Tbb$ to $\wtth_*^{k,n}$ and $\del{x}\wtth_*^{k,n}$, respectively, as $\delta \searrow 0$. From the formula \eqref{wtth-def}, it is clear that there exist points $x_*\in \Tbb$ such that $\wtth_*^{k,n}(x_*)=0$ and $\del{x}\wtth_*^{k,n}(x_*)\neq 0$. Due to the uniform convergence of $\delta^{-1}\wtt_*$ and $\delta^{-1}\del{x}\wtt_*$ to $\wtth_*^{k,n}$ and $\del{x}\wtth_*^{k,n}$, respectively, this property will also hold for  $\delta^{-1}\wtt_*$ provided $\delta$ is chosen sufficiently small. Thus, by shrinking $\delta_1$ if necessary, there will exist at least one point $y_*\in \Tbb$ such that $\wtt_*(y_*)=0$ and $\del{x}\wtt_*(y_*)\neq 0$ provided $\delta$ is chosen so that $0<\delta<\delta_1$. Assuming this is the case, then we must have that $\del{}\Wsc^{\pm}_*\cap \Wsc_*'\neq \emptyset$ and it follows from Remark \ref{rem-exist}.(ii) that the fractional gradient density $\frac{\del{x}\rho}{\rho}$ will blow-up at future timelike infinity in the sense that
\begin{equation*}
\sup_{x\in \Tbb} \limsup_{t\searrow 0} \frac{\del{x}\rho(t,x)}{\rho(t,x)} = \infty \AND \inf_{x\in \Tbb} \liminf_{t\searrow 0} \frac{\del{x}\rho(t,x)}{\rho(t,x)}=-\infty.
\end{equation*}  
\end{proof}

\section{Beyond leading order asymptotics\label{sec:beyond}}

The leading order asymptotics near $t=0$ of the $\Tbb^2$-symmetric solutions of the relativistic Euler equations from Theorem \ref{thm-exist} are rigorously determined by the decay estimates stated there for the variables $(z,\wtt)$. In the proof of Theorem \ref{thm-exist}, we used this asymptotic information to derive asymptotic expansions near $t=0$ for the rescaled fluid density, fluid four velocity and the fractional density gradient, and calculate their limits as $t\searrow 0$ at spatial points $x\in \Tbb$ where $\wtt_*(x)\neq 0$. In order to obtain more information about the solutions near $t=0$, we need to derive beyond leading order asymptotic expansions, or in other words, improved decay estimates. 
As a first step towards this goal, we derive a \textit{refined approximate solution} in this section that yields an improvement to the decay estimates established in Theorem \ref{thm-exist}.(a) and Remark \ref{rem-exist}.(v). In principle, this process could be iterated to construct a sequence of approximations that would yield corresponding improvements to the decay estimates. However, we will not consider these higher order approximations here.  

As an application of the refined approximation, we use it to derive asymptotic expansions for the rescaled fluid density, fluid four velocity, and the fractional density gradient near $t=0$ at spatial points $x\in \Tbb$ where $\wtt_*(x)= 0$, which completes the results established in Theorem \ref{thm-exist}.(c). 
However, before we consider the refined approximation, we first establish an estimate in the following lemma that will be needed below to derive improved error estimates. 
\begin{lem} \label{Btt-diff-lem}
Suppose $k\in \Zbb_{>\frac{1}{2}}$ and $\sigma \in [0,1)$. Then there exists a constant $C>0$ such that
\begin{equation*}
\norm{\Btt(t,\wtt)-\Btt(t,\vtt)}_{H^{k}} \leq C t^{(k+1)(1-3\sspeed)}\bigl(1+\norm{\wtt}_{H^k}^{k}+\norm{\vtt}_{H^k}^{k}\bigr)\norm{\wtt-\vtt}_{H^k}
\end{equation*}
for all $\wtt,\vtt \in H^k(\Tbb)$ and $t\in (0,1]$.
\end{lem}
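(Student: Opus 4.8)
The plan is to reduce the $H^k$ bound on the difference $\Btt(t,\wtt)-\Btt(t,\vtt)$ to (i) the chain-rule identity $\del{\wtt}^\ell\Btt(t,\wtt) = t^{\ell(1-3\sspeed)}\frac{d^\ell B}{dw^\ell}(t^{1-3\sspeed}\wtt)$ coming from \eqref{AcBc-def} together with the derivative bounds on $B$ from Lemma \ref{lem-A0B}, and (ii) a standard Moser/tame product estimate. First I would write, using the fundamental theorem of calculus with respect to the $w$-variable,
\begin{equation*}
\Btt(t,\wtt)-\Btt(t,\vtt) = (\wtt-\vtt)\int_0^1 \del{\wtt}\Btt\bigl(t,s\wtt+(1-s)\vtt\bigr)\,ds =: (\wtt-\vtt)\,G(t,\wtt,\vtt),
\end{equation*}
so that the matrix-valued function $G$ is, by \eqref{AcBc-def}, equal to $t^{1-3\sspeed}\int_0^1 \frac{dB}{dw}\bigl(t^{1-3\sspeed}(s\wtt+(1-s)\vtt)\bigr)\,ds$.

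Next I would estimate $\norm{(\wtt-\vtt)G}_{H^k}$ by the tame product inequality (Theorem \ref{Product}.(ii), valid since $k>\frac12$), giving
\begin{equation*}
\norm{\Btt(t,\wtt)-\Btt(t,\vtt)}_{H^k}\lesssim \norm{\wtt-\vtt}_{H^k}\,\norm{G(t,\wtt,\vtt)}_{H^k}.
\end{equation*}
It then remains to bound $\norm{G}_{H^k}$. Pulling the time weights out of each spatial derivative via the chain rule, $\del{x}^m G$ is a sum of terms $t^{(m+1)(1-3\sspeed)}$ times $\frac{d^{j+1}B}{dw^{j+1}}$ evaluated at $t^{1-3\sspeed}(s\wtt+(1-s)\vtt)$, contracted against products $\del{x}^{k_1}(s\wtt+(1-s)\vtt)\cdots\del{x}^{k_j}(s\wtt+(1-s)\vtt)$ with $k_1+\cdots+k_j=m\le k$. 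Using the uniform bound $\sup_w |\frac{d^\ell B}{dw^\ell}(w)|_{\op}\le C(\ell)$ from Lemma \ref{lem-A0B} (which makes the evaluation point irrelevant and removes any dependence on the argument), a Moser estimate on the polynomial-in-derivatives expression, and the elementary bound $\norm{s\wtt+(1-s)\vtt}_{H^k}\le \norm{\wtt}_{H^k}+\norm{\vtt}_{H^k}$, one obtains
\begin{equation*}
\norm{G(t,\wtt,\vtt)}_{H^k}\lesssim t^{(k+1)(1-3\sspeed)}\bigl(1+\norm{\wtt}_{H^k}^{k}+\norm{\vtt}_{H^k}^{k}\bigr),
\end{equation*}
where the worst time weight $t^{(k+1)(1-3\sspeed)}$ arises from the top-order term $m=k$ (i.e. $j=1$, $k_1=k$, producing $t^{(k+1)(1-3\sspeed)}\frac{d^2B}{dw^2}\del{x}^k(\cdots)$), and since $1-3\sspeed<0$ and $t\in(0,1]$ this is the dominant (largest) factor among all $t^{(m+1)(1-3\sspeed)}$ with $m\le k$. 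Combining the two displays yields the claimed inequality.

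The main obstacle is purely bookkeeping rather than conceptual: one must keep careful track of how the time weights $t^{1-3\sspeed}$ distribute across the Faà di Bruno expansion of $\del{x}^m G$ and verify that the maximal weight is exactly $t^{(k+1)(1-3\sspeed)}$ and that no term carries a smaller (hence, for $t\le 1$, larger) power that would spoil the stated exponent — but because $1-3\sspeed<0$ forces $t^{(m+1)(1-3\sspeed)}\le t^{(k+1)(1-3\sspeed)}$ for all $0\le m\le k$, the top-order term dominates and the estimate closes cleanly. One also needs the uniform (argument-independent) derivative bounds from Lemma \ref{lem-A0B} so that the estimate does not pick up additional factors from the evaluation point $t^{1-3\sspeed}(s\wtt+(1-s)\vtt)$, which could otherwise blow up as $t\searrow 0$.
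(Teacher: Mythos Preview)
Your proposal is correct and follows essentially the same route as the paper: write the difference via the fundamental theorem of calculus as $(\wtt-\vtt)$ times an integral remainder, apply the product estimate (Theorem~\ref{Product}.(ii)), then bound the remainder in $H^k$ via Moser using the uniform derivative bounds on $\Btt$ (the paper invokes Theorem~\ref{Moser} directly on the two-variable map $\Rtt(t,\xi,\eta)=\int_0^1\del{\wtt}\Btt(t,\eta+s(\xi-\eta))\,ds$ together with \eqref{Rtt-bnds}, whereas you expand via Fa\`a di Bruno, but the content is identical). Your explicit observation that the top weight $t^{(k+1)(1-3\sspeed)}$ dominates because $1-3\sspeed<0$ is the paper's standing assumption---left implicit in its proof even though the lemma is nominally stated for $\sspeed\in[0,1)$.
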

\begin{proof} 
Recalling that the matrix $\Btt(t,\wtt)$ is defined by \eqref{AcBc-def}, we can, for $\xi,\eta\in \Rbb$, use the Fundamental Theorem of Calculus to express the difference
$\Btt(t,\xi)-\Btt(t,\eta)$ as
\begin{equation*}
\Btt(t,\xi)-\Btt(t,\eta) = \int_0^1 \frac{d}{ds} \Btt(t,\eta +s(\xi-\eta))\, ds = \int_0^1\del{\wtt}\Btt(t,\eta+s(\xi-\eta))\, ds(\xi-\eta).
\end{equation*}
Setting  
\begin{equation}\label{Rtt-def}
\Rtt(t,\xi,\eta) = \int_0^1\del{\wtt}\Btt(t,\eta+s(\xi-\eta))\, ds,
\end{equation}
we have that
\begin{equation} \label{Btt-diff-rep}
\Btt(t,\xi)-\Btt(t,\eta)=\Rtt(t,\xi,\eta)(\xi-\eta).
\end{equation}
Differentiating \eqref{Rtt-def} with respect to $\xi$ and $\eta$ yields
\begin{equation*}
\del{\xi}^\ell \del{\eta}^{m}\Rtt(t,\xi,\eta) = \int_0^1\del{\wtt}^{\ell+m+1}\Btt(t,\eta+s(\xi-\eta))s^\ell (1-s)^m\, ds.
\end{equation*}
From this formula and Corollary \ref{cor-Ac0Bc}, we deduce the existence of a constant $C=C(\ell,m)>0$ such that
\begin{equation} \label{Rtt-bnds}
|\del{\xi}^\ell \del{\eta}^m\Rtt(t,\xi,\eta)|_{\op} \leq C t^{(\ell+m+1)(1-3\sspeed)}   
\end{equation}
for all $(t,\xi,\eta)\in (0,1]\times \Rbb \times \Rbb$. 
Now, suppose $\vtt,\wtt \in H^{k}(\Tbb)$. Then we can estimate the difference $\Btt(t,\wtt)-\Btt(t,\vtt)$ as follows:
\begin{align*}
\norm{\Btt(t,\wtt)-\Btt(t,\vtt)}_{H^k} \lesssim&  
\norm{\Rtt(t,\wtt,\vtt)}_{H^k}\norm{\wtt-\vtt}_{H^k} \\
\lesssim & t^{(k+1)(1-3\sspeed)}\bigl(1+\norm{\wtt}_{L^\infty}^{k-1}+\norm{\vtt}_{L^\infty}^{k-1}\bigr)(\norm{\wtt}_{H^k}+\norm{\vtt}_{H^k})\norm{\wtt-\vtt}_{H^k}\\
\lesssim & t^{(k+1)(1-3\sspeed)}\bigl(1+\norm{\wtt}_{H^k}^{k}+\norm{\vtt}_{H^k}^{k}\bigr)\norm{\wtt-\vtt}_{H^k},
\end{align*}
where in deriving the first, second and third inequalities we employed \eqref{Btt-diff-rep} and Theorem \ref{Product}.(ii), \eqref{Rtt-bnds} and Theorem \ref{Moser}, and Theorem \ref{Sobolev}, respectively. This completes the proof.
\end{proof}

For the remainder of this section, we will assume that $k \in \Zbb_{>\frac{5}{2}+\ell}$ for some $\ell\in \Zbb_{\geq 0}$, $\frac{1}{3}<\sspeed < \frac{k+1}{3k}$, and 
\begin{equation*}
\Wtt=(\ztt, \wtt )^{\tr} \in C^0\bigl((0,1],H^{k}(\Tbb,\Rbb^2)\bigr)\cap C^1 \bigl((0,1],H^{k-1}(\Tbb,\Rbb^2)\bigr)
\end{equation*}
is a solution of the Fuchsian GIVP \eqref{T2-Eul-H.1}-\eqref{T2-Eul-H.2} from Theorem \ref{thm-exist}.
Then, letting $\Btt_{\alpha\beta}(t,\wtt)$, $1\leq \alpha,\beta \leq 2$, denote the component of the matrix $\Btt(t,\wtt)$, see \eqref{AcBc-def}, and 
\begin{equation*}
z = t^{1-3\sspeed}\ztt
\end{equation*}
be as defined previously in Theorem \ref{thm-exist}.(a), we can express the Fuchsian formulation \eqref{T2-Eul-H.1} of the $\Tbb^2$-symmetric relativistic Euler equations, see \eqref{pi-Eul} and \eqref{pip-Eul}, as
\begin{align}
\del{t}z &= F(t,z,\wtt) :=-\Btt_{11}(t,\wtt)\del{x}z- t^{1-3\sspeed}\Btt_{12}(t,\wtt) \del{x}\wtt , \label{z-ev}\\
\del{t}\wtt & = G(t,z,\wtt) := -t^{3\sspeed-1}\Btt_{21}(t,\wtt)\del{x}z- \Btt_{22}(t,\wtt) \del{x}\wtt. \label{wtt-ev}
\end{align}
We now define a \textit{refined approximate solution}, denoted $(\zb,\wttb)$, by demanding that it solve the initial value problem 
\begin{align}
\del{t}\zb &= F(t,z_*,\wtt_*)\hspace{0.5cm}\text{in $[0,1]\times \Tbb$,} \label{zb-ev}\\
\del{t}\wttb &= G(t,z_*,\wtt_*)\hspace{0.5cm}\text{in $[0,1]\times \Tbb$,} \label{wttb-ev}\\
(\zb,\wttb) &= (z_*,\wtt_*)\hspace{1.15cm}\text{in $\{0\}\times \Tbb$.} \label{zb-wttb-idata}
\end{align}
Since $\wtt_*, z_*\in H^{k-1}(\Tbb)$, $k-\ell-2\geq 1$ and $1-3\sspeed <0$, we can bound $F(t,z_*,\wtt_*)$ in $H^{k-\ell-2}(\Tbb)$ as follows:
\begin{align}
\norm{F(t,z_*,\wtt_*)}_{H^{k-\ell-2}} \lesssim& \norm{\Btt_{11}(t,\wtt_*)}_{H^{k-\ell-2}} \norm{\del{x}z_*}_{H^{k-\ell-2}}+t^{1-3\sspeed}\norm{\Btt_{12}(t,\wtt_*)}_{H^{k-\ell-2}} \norm{\del{x}\wtt_*}_{H^{k-\ell-2}}, \notag \\
\lesssim& \bigl(1+\norm{\wtt_*}_{L^\infty}^{k-\ell-3}\bigr)\bigl( t^{(k-\ell-2)(1-3\sspeed)}\norm{z_*}_{H^{k-\ell-1}} +  t^{(k-\ell-1)(1-3\sspeed)}\norm{\wtt_*}_{H^{k-\ell-1}}\bigr)\notag \\
\lesssim & \bigl(1+\norm{\wtt_*}_{H^{k-\ell-1}}^{k-\ell-3}\bigr)\bigl(\norm{z_*}_{H^{k-\ell-1}} + \norm{\wtt_*}_{H^{k-\ell-1}}\bigr) t^{(k-\ell-1)(1-3\sspeed)}, \label{F*-bnd}
\end{align}
where in deriving the first, second and third inequalities we used Theorem \ref{Product}.(ii), Corollary \ref{cor-Ac0Bc} and Theorem \ref{Moser}, and Theorem \ref{Sobolev}, respectively. Further, we can bound $G(t,z_*,\wtt_*)$ similarly by
\begin{equation} 
\norm{G(t,z_*,\wtt_*)}_{H^{k-\ell-2}} \lesssim
\bigl(1+\norm{\wtt_*}_{H^{k-\ell-1}}^{k-\ell-3}\bigr)\bigl(\norm{z_*}_{H^{k-\ell-1}} + \norm{\wtt_*}_{H^{k-\ell-1}}\bigr) t^{(k-\ell-2)(1-3\sspeed)}. \label{G*-bnd}
\end{equation}
Due to these estimates and the fact that $(k-\ell-1)(1-3\sspeed)>-1$, the IVP \eqref{zb-ev}-\eqref{zb-wttb-idata} has
a unique solution 
\begin{equation*}
(\zb,\wttb) \in C^0\bigl([0,T],H^{k-\ell-2}(\Tbb))\cap C^1((0,1],H^{k-\ell-2}(\Tbb)\bigr)
\end{equation*}
that can be expressed as 
\begin{align}
\zb(t,x) &= z_*(x) + \int_0^t F(s,z_*(x),\wtt_*(x))\, ds, \label{zb-sol} \\
\wttb(t,x) &= \wtt_*(x) + \int_0^t G(s,z_*(x),\wtt_*(x))\, ds,  \label{wttb-sol}
\end{align}
for all  $(t,x)\in [0,1]\times \Tbb$. Moreover, 
the decay estimates
\begin{align}
\norm{\zb(t)-\zb_*}_{H^{k-2}} \leq& C t^{(k-\ell-1)(1-3\sspeed)+1}, \label{zb-sol-decay}\\
\norm{\wttb(t)-\wttb_*}_{H^{k-2}} \leq& C t^{(k-\ell-2)(1-3\sspeed)+1}, \label{wttb-sol-decay}
\end{align}
hold for all $t\in (0,1]$, where $C$ is a constant of the form $C=C\bigl(\norm{z_*}_{H^{k-\ell-1}},\norm{\wtt_*}_{H^{k-\ell-1}}\bigr)$.

Next, we observe that the differences $z-\zb$ and $\wtt-\wttb$  satisfy
\begin{align}
\del{t}(z-\zb) &= F(t,z,\wtt)-F(t,z_*,\wtt_*), \label{diff-z-ev} \\ 
\del{t}(\wtt-\wb) &= F(t,z,\wtt)-G(t,z_*,\wtt_*), \label{diff-wtt-ev}
\end{align}
and that
\begin{equation}\label{diff-ev-idata}
(z-\zb,\wtt-\wttb)|_{t=0} = (z_*-z_*,\wtt_*-\wtt_*)=(0,0).
\end{equation}
Expressing the right hand side of \eqref{diff-z-ev} as
\begin{align*}
F(t,z,\wtt)-F(t,z_*,\wtt_*)=& \bigl(\Btt_{11}(t,\wtt_*)-\Btt_{11}(t,\wtt)\bigr)\del{x}z + \Btt_{11}(t,\wtt_*)\del{x}(z_*-z)\\
&+ t^{1-3\sspeed}\bigl(\Btt_{12}(t,\wtt_*)-\Btt_{12}(t,\wtt)\bigr)\del{x}w + t^{1-3\sspeed}\Btt_{12}(t,\wtt_*)\del{x}(\wtt_*-\wtt),
\end{align*}
we can, using similar arguments as above, employ the calculus inequalities from the appendix in conjunction with Corollary \ref{cor-Ac0Bc} and Lemma \ref{Btt-diff-lem} to estimate $F(t,z,\wtt)-F(t,z_*,\wtt_*)$ by
\begin{align*}
\norm{F(t,z,\wtt)-F(t,z_*,\wtt_*)}_{H^{k-\ell-2}}
\leq C(t)\Bigl(& t^{(k-\ell)(1-3\sspeed)}\norm{\wtt(t)-\wtt_*}_{H^{k-\ell-1}} \\
&+ t^{(k-\ell-2)(1-3\sspeed)}\norm{z(t)-z_*}_{H^{k-\ell-1}} \Bigr)
\end{align*}
where
\begin{equation*}
 C(t) = C_0\bigl(\norm{\wtt(t)}_{H^{k-\ell-1}},\norm{z(t)}_{H^{k-\ell-1}},\norm{\wtt_*}_{H^{k-\ell-1}},\norm{z_*}_{H^{k-\ell-1}}\bigr). 
\end{equation*}
Together, this bound and the decay estimates from Remark \ref{rem-exist}.(v) imply that
\begin{equation}
\norm{F(t,z,\wtt)-F(t,z_*,\wtt_*)}_{H^{k-\ell-2}}
\lesssim t^{(2(k-\ell)-1)(1-3\sspeed)+1}, \quad 0<t\leq 1. \label{F-diff-bnd}
\end{equation}
Using similar arguments to estimate the right hand side of \eqref{diff-wtt-ev}, we find that
\begin{equation}
\norm{G(t,z,\wtt)-G(t,z_*,\wtt_*)}_{H^{k-\ell-2}}
\lesssim t^{(2(k-\ell)-2)(1-3\sspeed)+1}, \quad 0<t\leq 1.\label{G-diff-bnd}
\end{equation}

Now, because $\frac{1}{3}<\sspeed < \frac{k+1}{3k}$ and $k-\ell \geq 3$, we have
\begin{equation*}
(2(k-\ell)-1)(1-3\sspeed)+1 > -1 + \frac{2 \ell +1}{k}>-1.
\end{equation*}
Due to this inequality and the estimates \eqref{F-diff-bnd}-\eqref{G-diff-bnd}, we can integrate \eqref{diff-z-ev}-\eqref{diff-wtt-ev} in time from $0$ to $t$ to conclude, with the help of the initial condition \eqref{diff-ev-idata}, that differences $z-\zb$ and $\wtt-\wttb$ can be expressed as 
\begin{align}
z(t,x) - \zb(t,x) &= \int_0^t  F(s,z(s,x),\wtt(s,x))- F(s,z(x),\wtt_*(x))\, ds, \label{z-diff-sol} \\
\wtt(t,x) - \wttb(t,x) &= \int_0^t G(s,z(s,x),\wtt(s,x))- G(s,z(x),\wtt_*(x))\, ds,  \label{wtt-diff-sol}
\end{align}
for $(t,x)\in [0,1]\times \Tbb$, and bounded by
\begin{align}
\norm{z(t)-\zb(t)}_{C^{k-\ell-3,\frac{1}{2}}}\lesssim \norm{z(t)-\zb(t)}_{H^{k-\ell-2}} \lesssim&  t^{(2(k-\ell)-1)(1-3\sspeed)+2}, \label{diff-z-bnd}\\
\norm{\wtt(t)-\wttb(t)}_{C^{k-\ell-3,\frac{1}{2}}}\lesssim \norm{\wtt(t)-\wttb(t)}_{H^{k-\ell-2}} \lesssim&  t^{(2(k-\ell)-2)(1-3\sspeed)+2},\label{diff-wtt-bnd}
\end{align}
for all $t\in [0,1]$. 

\begin{rem} \label{rem-refined-approx} Due to the inequalities 
\begin{gather*}
(2(k-\ell)-1)(1-3\sspeed)+2-((k-\ell)(1-3\sspeed)+1)=(k-\ell-1)(1-3\sspeed)+1>0,
\intertext{and}
(2(k-\ell)-2)(1-3\sspeed)+2-( (k-\ell-1)(1-3\sspeed)+1)=(k-\ell-1)(1-3\sspeed)+1>0,
\end{gather*}
it follows from the decay estimates \eqref{diff-z-bnd}-\eqref{diff-wtt-bnd} that the refined approximation $(\zb,\wttb)$ determines a more accurate approximation of the full solution $(z,\wtt)$ near $t=0$ compared with using just the limit functions $(z_*,\wtt_*)$, c.f.~Remark \ref{rem-exist}.(v), albeit in a norm of one order less differentiability. 
\end{rem}

We now assume that $x_0\in \Tbb$ satisfies\footnote{Recall from Theorem \ref{thm-non-empty} that there exists open sets of initial data that yield asymptotic functions $\wtt_*$ that vanish somewhere on $\Tbb$.}
\begin{equation*}
\wtt_*(x_0)=0.
\end{equation*}
Then by \eqref{phi-def}, \eqref{B-def} and \eqref{AcBc-def}, we have that
\begin{equation*}
\Btt(t,\wtt_*(x_0))= \begin{pmatrix} 0 & -1\\
-\sspeed & 0\end{pmatrix}.
\end{equation*}
By \eqref{zb-sol}-\eqref{wttb-sol}, the definitions of $F$ and $G$, see \eqref{z-ev}-\eqref{wtt-ev} and the fact that $\frac{1}{3}<\sigma <\frac{k+1}{3k} < \frac{2}{3}$, it follows that $\zb(t,x_0)$ and $\wttb(t,x_0)$ are given by
\begin{align}
\zb(t,x_0) &= z_*(x_0) + \int_{0}^t s^{1-3\sspeed}\del{x}\wtt_*(x_0) \, ds = z_*(x_0) + \frac{1}{2-3\sspeed}t^{2-3\sspeed}\del{x}\wtt_*(x_0), \label{zb-asymp} \\
\wttb(t,x_0) &= \sspeed \int_0^t s^{3\sspeed-1}\del{x}z_*(x_0)\,ds = \frac{1}{3}t^{3\sspeed}\del{x}z_*(x_0). \label{wbtt-asymp}
\end{align} 
Further assuming that $\ell=0$, we set 
\begin{equation*}
\nu := (2k-2)(1-3\sspeed)+2-3\sspeed = (2k-1)(1-3\sspeed)+1,
\end{equation*}
which we note satisfies
\begin{equation*}
\nu < 2-3\sspeed,
\end{equation*}
since $2k-2>0$ and $1-3\sspeed < 0$ by assumption. 
Because $0<\frac{2k}{3(2k-1)}< \frac{k+1}{3k}$ 
for $k\geq 3$, we can ensure that
\begin{equation} \label{nu-range}
0<\nu<1 \AND 
\nu +1 -(2-3\sspeed) = \nu + 3\sspeed-1>0
\end{equation}
by restricting  
$\sigma$ so that it lies in the interval
\begin{equation*}
\frac{1}{3} < \sspeed < \frac{2k}{3(2k-1)}\leq \frac{2}{5}.
\end{equation*}
In this case, we deduce from \eqref{zb-asymp}-\eqref{wbtt-asymp} and the decay estimates \eqref{diff-z-bnd}-\eqref{diff-wtt-bnd} that
\begin{gather}
z(t,x_0) =  z_*(x_0) + \frac{1}{2-3\sspeed}t^{2-3\sspeed}\del{x}\wtt_*(x_0) + \Ord(t^{\nu+1}) \label{z-x0-asymp}
\intertext{and}
t^{1-3\sspeed}\wtt(t,x_0) = t\biggl( \frac{1}{3}\del{x}z_*(x_0) + \Ord(t^\nu)\biggr). \label{wtt-x0-asymp}
\end{gather}

Using Taylor's Theorem to expand $\phi^{-1}(w)$ about $w=0$, we find, with the help of \eqref{cov}, \eqref{phi-def} and \eqref{dphi},   that 
\begin{equation*}
\phi^{-1}(w) = \Bigl(\frac{d\phi}{du}\Bigl|_{u=0}\Bigr)^{-1}w +\Ord(w^2) =w+\Ord(w^2),
\end{equation*}
from which we get, by  \eqref{nu-range} and \eqref{wtt-x0-asymp}, that 
\begin{equation} \label{phi-inv-asymp}
\phi^{-1}\bigl(t^{1-3\sspeed}\wtt(t,x_0)\bigr) =t \biggl( \frac{1}{3}\del{x}z_*(x_0) + \Ord(t^\nu)\biggr).
\end{equation}
Similarly, using Taylor's Theorem to expand the derivative $(\phi^{-1})'(w)$ about $w=0$, we have
\begin{equation*}
(\phi^{-1})'(w) = \Bigl(\frac{d\phi}{du}\Bigl|_{u=0}\Bigr)^{-1} +\Ord(w) =1+\Ord(w),
\end{equation*}
and hence that
\begin{equation} \label{dphi-inv-asymp}
(\phi^{-1})'\bigl(t^{1-3\sspeed}\wtt(t,x_0)\bigr) =1 +  \Ord(t).
\end{equation}
We also note from the decay estimates from Theorem \ref{thm-exist}.(a) that 
\begin{align}
\del{x}z(t,x_0) &= \del{x}z_*(x_0) + \Ord\bigl(t^{\frac{1}{2}(\nu+2-3\sspeed)}\bigr)
\AND
\del{x}\wtt(t,x_0) =  \del{x}\wtt_*(x_0) + \Ord\bigl(t^{\frac{1}{2}(\nu +3 \sigma)}\bigr). \label{dwtt-asymp} 
\end{align}
Plugging \eqref{z-x0-asymp}, \eqref{phi-inv-asymp}, \eqref{dphi-inv-asymp} and \eqref{dwtt-asymp} into
\eqref{T2-sol} and \eqref{asymp-E} yields asymptotic expansions
\begin{align*}
t^{-3(1+\sspeed)}\rho(t,x_0)&= \rho_c e^{(1+\sspeed) z_*(x_0)} + \Ord(t^{2-3\sspeed}) , \\
t^{-1}\vt^i(t,x_0) &= -\delta^i_0+ \Ord(t),\\
\frac{\del{x}\rho(t,x_0)}{\rho(t,x_0)} &= (1+\sspeed)\del{x}z_*(x_0) + \Ord\bigl(t^{\frac{1}{2}(\nu+2-3\sspeed)}\bigr),
\end{align*}
which are valid for $0<t<t_0$ for some $t_0 \in (0,1]$ chosen sufficiently small. Letting $t\searrow 0$ in these expression yields
\begin{gather*}
\lim_{t\searrow 0} t^{-3(1+\sspeed)}\rho(t,x_0) = \rho_c e^{(1+\sspeed)z_*(x_0)}, \quad
\lim_{t\searrow 0} t^{-1}\vt^{i}(t,x_0) =  -\delta^i_0 \AND
\lim_{t\searrow 0} \frac{\del{x}\rho(t,x_0)}{\rho(t,x_0)} = (1+\sspeed)\del{x}z_*(x_0).
\end{gather*}

\section{Statements and Declarations}

\subsection*{Data availability statement}
There is no data associated with this article.

\subsection*{Competing interests statement}
No funds, grants, or other support was received in the production of this article. The author has no relevant financial or non-financial interests to disclose.

%

\appendix

\section{\label{calc}Calculus inequalities}
In this appendix, we collect, for the convenience of the reader, a number of calculus inequalities that we employ in this article. 
The proof of the following inequalities are well known, and in particular, the proofs of  
Theorems \ref{Sobolev}, \ref{GNII}, \ref{Product}.(i), \ref{Product}.(ii), and \ref{Moser} below can be found in the following references, respectively:  \cite[Thm.~4.12]{AdamsFournier:2003},   \cite[\S Lecture II]{Nirenberg:1959} or \cite[Thm.~10.1]{Friedman:1969},  \cite[Thm.~A.5]{Koch:1990}, \cite[\S VI.3]{Choquet_et_al:2000} and 
\cite[Cor.~6.4.5]{Hormander:1997}.

In the following theorems, $\alpha=(\alpha_1,\ldots,\alpha_{n})\in \Zbb_{\geq 0}^{n}$ denotes a multi-index and 
$D^\alpha = \del{1}^{\alpha_1}\del{2}^{\alpha_2}\cdots\del{n}^{\alpha_{n}}$ where the $\del{i}=\fdel{\;}{x^i}$, $1\leq i\leq n$, denote partial 
derivatives with respect to periodic coordinates $(x^i)=(x^1,x^2,\ldots,x^n)$ on the $n$-torus $\Tbb^n$. 

\begin{thm}{\emph{[H\"{o}lder's inequality]}} \label{Holder}
If $0< p,q,r \leq \infty$ satisfy $1/p+1/q = 1/r$, then
\begin{equation*}
\norm{uv}_{L^r} \leq \norm{u}_{L^p}\norm{v}_{L^q}
\end{equation*}
for all $u\in L^p(\Tbb^n)$ and $v\in L^q(\Tbb^n)$.
\end{thm}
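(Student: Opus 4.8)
The plan is to reduce the three-exponent Hölder inequality to the classical conjugate-exponent case by rescaling the exponents, and then to deduce that case from Young's inequality for products; on the compact torus $\Tbb^n$ no extra care is needed, since having finite total measure only helps.

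First I would dispose of the degenerate cases. If $p=\infty$, the constraint $1/p+1/q=1/r$ forces $r=q$, and $\norm{uv}_{L^q}\le\norm{u}_{L^\infty}\norm{v}_{L^q}$ follows at once from $|u(x)v(x)|\le\norm{u}_{L^\infty}|v(x)|$ for a.e.\ $x$; the case $q=\infty$ is symmetric, and $r=\infty$ forces $p=q=\infty$ and is likewise trivial. So assume $p,q<\infty$, hence $r<\infty$. The norms $\norm{u}_{L^p}$, $\norm{v}_{L^q}$ are finite by hypothesis, and we may assume they are strictly positive, since otherwise $uv=0$ a.e.\ and both sides of the claimed inequality vanish. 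Now set $p'=p/r$ and $q'=q/r$. Since $1/p,1/q>0$ and $1/p+1/q=1/r$, we have $p',q'>1$ with $1/p'+1/q'=1$, so $p'$ and $q'$ are conjugate exponents. Because $u\in L^p$ gives $|u|^r\in L^{p'}(\Tbb^n)$ (as $rp'=p$) and likewise $|v|^r\in L^{q'}(\Tbb^n)$, applying the classical conjugate-exponent Hölder inequality to $|u|^r$ and $|v|^r$ yields
\[
\int_{\Tbb^n}|uv|^r\,dx\ \le\ \Bigl(\int_{\Tbb^n}|u|^{p}\,dx\Bigr)^{1/p'}\Bigl(\int_{\Tbb^n}|v|^{q}\,dx\Bigr)^{1/q'}\ =\ \norm{u}_{L^p}^{\,r}\,\norm{v}_{L^q}^{\,r},
\]
using $p/p'=q/q'=r$; taking $r$-th roots gives the stated bound.

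It remains to establish the classical case, namely $\int fg\le\norm{f}_{L^a}\norm{g}_{L^b}$ for nonnegative measurable $f,g$ and conjugate exponents $1\le a,b\le\infty$. I would derive this from Young's inequality $\xi\eta\le a^{-1}\xi^{a}+b^{-1}\eta^{b}$ for $\xi,\eta\ge0$, itself a one-line consequence of the concavity of $\log$ (with the endpoint $a=\infty$, $b=1$ handled directly by the trivial $L^\infty$ bound): normalising $F=f/\norm{f}_{L^a}$ and $G=g/\norm{g}_{L^b}$, applying Young's inequality pointwise to $F(x),G(x)$, and integrating over $\Tbb^n$ gives $\int FG\le a^{-1}+b^{-1}=1$, which is exactly the desired inequality after undoing the normalisation. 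There is no substantive obstacle: Hölder's inequality is entirely elementary, and the only point requiring a moment's thought is the exponent bookkeeping $p'=p/r$, $q'=q/r$ that turns the general statement into the conjugate-exponent one --- which is why it is quoted in the paper without proof. Any standard functional-analysis text (for instance the reference cited for the Sobolev embedding) contains the argument in this form.
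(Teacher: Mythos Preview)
Your proof is correct and complete; the reduction via $p'=p/r$, $q'=q/r$ to the conjugate-exponent case, followed by Young's inequality, is the standard textbook argument, and you have handled the edge cases (including the possibility $r<1$, which the hypothesis $0<p,q,r\le\infty$ allows) properly. The paper itself gives no proof of this statement: it is collected in the appendix as a well-known inequality, and unlike the other results there (Sobolev, Gagliardo--Nirenberg, product/commutator, Moser) it is not even accompanied by a specific reference, so there is nothing to compare against beyond noting that your argument is the expected one.
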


\begin{thm}{\emph{[Sobolev's inequality]}} \label{Sobolev} Suppose
$1\leq p < \infty$, $s\in \Zbb_{> 0}$ and $k\in \Zbb_{\geq 0}$.
If $s>n/p>s-1$ and $0<\lambda \leq s-n/p$, then
\begin{equation*}
\norm{u}_{C^{k,\lambda}} \lesssim \norm{u}_{W^{s+k,p}}
\end{equation*}
for all $u\in W^{s+k,p}(\Tbb^{n})$. Furthermore, this inequality holds if \textit{(i)} $n=(s-1)p$, $p>1$ and  $0<\lambda < 1$, or \textit{(ii)} 
$n=s-1$, $p=1$, and $0<\lambda\leq 1$. 
\end{thm}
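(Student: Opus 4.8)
The plan is to obtain the embedding in three stages: reduce the general statement to the case of no classical derivatives, settle the generic range $s>n/p>s-1$ by chaining the two basic Sobolev-type inequalities, and then dispose of the two borderline cases (i) and (ii) by minor variants of the same chaining. Throughout one may assume $u\in C^\infty(\Tbb^n)$ by density of $C^\infty(\Tbb^n)$ in $W^{s+k,p}(\Tbb^n)$, the final inequality passing to the limit. First I would note that if $|\alpha|\le k$ then $D^\alpha u\in W^{s,p}(\Tbb^n)$ with $\norm{D^\alpha u}_{W^{s,p}}\le \norm{u}_{W^{s+k,p}}$, and since $\norm{u}_{C^{k,\lambda}}\simeq\sum_{|\alpha|\le k}\norm{D^\alpha u}_{C^{0,\lambda}}$ on the compact manifold $\Tbb^n$, it suffices to prove $\norm{v}_{C^{0,\lambda}}\lesssim\norm{v}_{W^{s,p}}$ for smooth periodic $v$ under the stated hypotheses on $(s,p,\lambda)$.

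For the generic range $s>n/p>s-1$, the two analytic inputs are: (a) the Gagliardo--Nirenberg--Sobolev inequality $\norm{w}_{L^{q^*}(\Tbb^n)}\lesssim\norm{w}_{W^{1,q}(\Tbb^n)}$ for $1\le q<n$ with $1/q^*=1/q-1/n$, proved for smooth periodic $w$ by writing $w$ as a line integral of $\partial_i w$ in each coordinate direction and applying the Loomis--Whitney (iterated H\"older) inequality; and (b) Morrey's inequality $\norm{w}_{C^{0,1-n/r}(\Tbb^n)}\lesssim\norm{w}_{W^{1,r}(\Tbb^n)}$ for $r>n$, proved from the representation $w(x)-w(y)=-\int_0^{|x-y|}\partial_\rho w(y+\rho\omega)\,d\rho$ with $\omega=(x-y)/|x-y|$, averaged over $\omega$ and estimated by H\"older over a ball. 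Starting from $v\in W^{s,p}$, I would apply (a) to each derivative of order $s-1$, then $s-2$, and so on; after $j$ applications $v\in W^{s-j,p_j}$ with $1/p_j=1/p-j/n$, the condition $n/p>s-1\ge j$ keeping all intermediate exponents admissible. Taking $j=s-1$ gives $v\in W^{1,r}$ with $1/r=1/p-(s-1)/n$; the hypothesis $n/p>s-1$ forces $1/r>0$ and $s>n/p$ forces $1/r<1/n$, so $n<r<\infty$. Morrey's inequality then yields $\norm{v}_{C^{0,1-n/r}}\lesssim\norm{v}_{W^{1,r}}\lesssim\norm{v}_{W^{s,p}}$, and $1-n/r=s-n/p$; since $\Tbb^n$ has finite diameter, $C^{0,s-n/p}(\Tbb^n)\hookrightarrow C^{0,\lambda}(\Tbb^n)$ for all $0<\lambda\le s-n/p$, which finishes this case.

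For the borderline case (i), $n=(s-1)p$ with $p>1$ (hence $s\ge2$): chain (a) only $s-2$ times to reach $W^{2,p_{s-2}}$ with $1/p_{s-2}=1/p-(s-2)/n=1/n$, i.e.\ $W^{2,n}(\Tbb^n)$. Since $W^{1,n}(\Tbb^n)\hookrightarrow L^q(\Tbb^n)$ for every finite $q$ (the limiting case of (a)), every first derivative of $v$ lies in every $L^q$, so $v\in W^{1,q}(\Tbb^n)$ for all finite $q$; Morrey's inequality then gives $v\in C^{0,1-n/q}$ for each $q>n$, hence $v\in C^{0,\lambda}$ for every $0<\lambda<1$. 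For case (ii), $n=s-1$ and $p=1$ (hence $s=n+1$): I would invoke the endpoint $W^{n,1}(\Tbb^n)\hookrightarrow C^0(\Tbb^n)\cap L^\infty(\Tbb^n)$ (Gagliardo's theorem, obtained by iterating $W^{1,1}\hookrightarrow L^{n/(n-1)}$ and closing at the top exponent), so that each first derivative of $v\in W^{n+1,1}$ lies in $W^{n,1}\hookrightarrow L^\infty$; thus $\nabla v\in L^\infty$ and $v$ is Lipschitz with $\norm{v}_{C^{0,1}}\lesssim\norm{v}_{W^{n+1,1}}$, and $C^{0,1}(\Tbb^n)\hookrightarrow C^{0,\lambda}(\Tbb^n)$ for $0<\lambda\le1$.

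The substantive content is entirely in the two base inequalities, Morrey's inequality and the Gagliardo--Nirenberg--Sobolev inequality together with its endpoint $W^{n,1}\hookrightarrow C^0\cap L^\infty$; everything after that is bookkeeping with Sobolev exponents. I expect the only point requiring genuine care is verifying that the chain of Sobolev exponents closes \emph{exactly} at the H\"older exponent $s-n/p$, and that the strict inequalities $n/p>s-1$ and $s>n/p$ are precisely what keep every intermediate exponent in $(0,1/n)$ --- the first preventing a premature $L^\infty$ endpoint, the second guaranteeing $r>n$ at the last step. On $\Tbb^n$ the argument is in fact marginally cleaner than on $\Rbb^n$: no extension operator is needed, and periodicity makes the line-integral representations used in (a) and (b) global, so I would simply establish the base inequalities directly for smooth periodic functions.
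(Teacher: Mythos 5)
Your proposal is correct, but it follows a different route from the paper in the trivial sense that the paper does not prove Theorem \ref{Sobolev} at all: it is collected in Appendix \ref{calc} as a standard calculus inequality and attributed to \cite[Thm.~4.12]{AdamsFournier:2003}, whose proof proceeds through embeddings on domains with the cone property rather than the periodic line-integral arguments you use. What you supply is a self-contained torus proof: reduce to $k=0$ by applying the $C^{0,\lambda}$ estimate to each $D^\alpha u$, $|\alpha|\leq k$; chain the Gagliardo--Nirenberg--Sobolev step $W^{1,q}\hookrightarrow L^{q^*}$ exactly $s-1$ times, with $n/p>s-1$ guaranteeing every intermediate exponent stays below $n$ and $s>n/p$ guaranteeing the terminal exponent $r$ exceeds $n$; and finish with Morrey, the exponent arithmetic closing at $1-n/r=s-n/p$. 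This is a legitimate and complete argument, and it buys independence from the extension/cone-condition machinery of the cited reference, at the cost of having to verify the two base inequalities directly for periodic functions (which you correctly state with the full $W^{1,q}$ norm on the right, as needed on a compact manifold). Your treatments of the borderline cases (i) and (ii) are also sound; note that your argument for (i) never actually uses $p>1$, which is harmless since the conclusion there is weaker than in (ii).

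One small caveat: in case (ii) your parenthetical description of the endpoint $W^{n,1}(\Tbb^n)\hookrightarrow C^0\cap L^\infty$ as ``obtained by iterating $W^{1,1}\hookrightarrow L^{n/(n-1)}$ and closing at the top exponent'' is not accurate as a proof sketch---pure GNS chaining from $W^{n,1}$ terminates at $W^{1,n}$, which does not embed into $L^\infty$. The genuine endpoint (Gagliardo) requires the mixed-derivative argument, e.g.\ on the torus the bound $\norm{w}_{L^\infty}\lesssim \sum_{\alpha\in\{0,1\}^n}\norm{D^\alpha w}_{L^1}$ obtained by iterating the fundamental theorem of calculus in each coordinate separately. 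Since you invoke the endpoint as a known theorem rather than relying on that sketch, this is a presentational slip rather than a gap, but the sketch should be corrected or replaced by a citation.
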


\begin{thm}{\emph{[Gagliardo-Nirenberg interpolation inequality]}}\label{GNII}
Suppose $s\in \Zbb_{>|\alpha|}$, $1\leq p,q,r \leq \infty$, and $a\in [0,1]$ satisfy 
\begin{equation*}
\frac{|\alpha|}{s}\leq a \leq 1 \AND  \frac{|\alpha|}{n}+a\biggl(\frac{1}{p}-\frac{s}{n}\biggr)+(1-a)\frac{1}{q} = \frac{1}{r}.
\end{equation*}
Then
\begin{equation*}
\norm{D^\alpha u}_{L^r} \lesssim \norm{u}^{1-a}_{L^q}\norm{u}^{a}_{W^{s,p}}
\end{equation*}
for all $u \in L^q(\Tbb^n)\cap W^{s,p}(\Tbb^n)$ except if $1<p<\infty$ and $s-|\alpha|-n/p$ is a nonnegative integer in which case the inequality holds
for $|\alpha|/s\leq a <1$.
\end{thm}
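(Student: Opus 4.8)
The plan is to prove this classical interpolation inequality by adapting Nirenberg's original argument to the torus, in three stages: a one-dimensional core estimate obtained from integration by parts and H\"older's inequality (Theorem~\ref{Holder}); an iteration in one variable that produces all intermediate derivatives; and a slicing argument, combined with the generalized H\"older inequality, to pass to $\Tbb^n$. Periodicity will work in our favour, since it makes every boundary term in the integrations by parts vanish, and because the right-hand side carries the full Sobolev norm $\norm{u}_{W^{s,p}}$ rather than only $\norm{D^s u}_{L^p}$, no dilation invariance is needed and the whole argument localizes cleanly to a fundamental domain.

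The first step is the one-dimensional core estimate: for $1$-periodic $g$ and $2\le\mu<\infty$ with $\tfrac2\mu=\tfrac1\lambda+\tfrac1\nu$, one has $\norm{g'}_{L^\mu(\Tbb)}^2\lesssim\norm{g}_{L^\lambda(\Tbb)}\norm{g''}_{L^\nu(\Tbb)}$. For $\mu=2$ this is immediate, since integrating by parts on the circle gives $\int_\Tbb (g')^2=-\int_\Tbb g\,g''$ with no boundary contribution, and H\"older's inequality (Theorem~\ref{Holder}) finishes it; for general $\mu$ I would multiply by $|g'|^{\mu-2}$, integrate by parts once more, and apply H\"older with three exponents, choosing them so that the resulting $\norm{g'}_{L^\mu}^{\mu-2}$ factor can be divided out. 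The ranges $\mu<2$ and $\mu=\infty$ are then recovered by interpolating with the $\mu=2$ case and by a standard Landau--Kolmogorov-type estimate, respectively.

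Next, I would iterate: applying the core estimate to $g=D^i u$ for $i=1,\dots,s-1$ produces, along the one-parameter family of exponents forced by the interpolation identity, bounds $\norm{D^i u}_{L^{r_i}(\Tbb)}\lesssim\norm{u}_{L^q(\Tbb)}^{1-i/s}\norm{u}_{W^{s,p}(\Tbb)}^{i/s}$, with any lower-order terms at each stage absorbed using Theorem~\ref{Sobolev} or the previous iterate. This gives the inequality in dimension one at the endpoint value $a=|\alpha|/s$, and the remaining range $|\alpha|/s<a\le1$ I would obtain by H\"older-interpolating between this endpoint and the Sobolev embedding $W^{s,p}\hookrightarrow W^{|\alpha|,\rho}$ with $\tfrac1\rho=\tfrac1p-\tfrac{s-|\alpha|}{n}$ (the $a=1$ case) --- except exactly when the latter is the borderline embedding, i.e. $\rho=\infty$ and $s-|\alpha|-n/p\in\Zbb_{\ge0}$, which fails for $1<p<\infty$; this is precisely the stated exception, and there one simply keeps $a<1$. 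Finally, for a general pure derivative $D^\alpha=\del{1}^{\alpha_1}\cdots\del{n}^{\alpha_n}$ on $\Tbb^n$ I would apply the one-dimensional results coordinatewise to the slices $x_i\mapsto u(x)$ and combine them through the generalized H\"older inequality across the $n$ variables --- the usual Gagliardo--Nirenberg slicing device --- which yields the inequality for arbitrary $\alpha$, with $a$ and the exponents $(p,q,r)$ tied by exactly the stated scaling relation; density of $C^\infty(\Tbb^n)$ then removes the smoothness assumption.

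The hard part will not be any individual estimate but the exponent bookkeeping: I must check that the hypothesis $\tfrac{|\alpha|}{s}\le a\le1$ together with the scaling identity guarantees, in every configuration of $(p,q,r)$, that the chain of one-dimensional interpolations can actually be run with admissible exponents, and I must correctly isolate the single degenerate configuration --- the critical Sobolev embedding with $1<p<\infty$ --- that forces $a<1$. This casework, together with the full details of the one-dimensional estimates, is precisely what is carried out in Nirenberg~\cite{Nirenberg:1959} and Friedman~\cite{Friedman:1969}, and I would follow those sources.
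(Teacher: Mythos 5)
Your proposal is sound and is essentially the paper's approach: the paper does not prove Theorem \ref{GNII} itself but simply cites \cite{Nirenberg:1959} and \cite{Friedman:1969}, the very sources whose slicing-and-interpolation argument (one-dimensional core estimate, iteration, generalized H\"older across coordinates, and the borderline-embedding exception) you outline and defer to for the exponent casework. Nothing further is needed beyond the adaptation to the periodic setting you already indicate.
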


\begin{thm}{\emph{[Product and commutator estimates]}} \label{Product} $\;$

\begin{enumerate}[(i)]
\item
Suppose $1\leq p_1,p_2,q_1,q_2\leq \infty$, $s\in \Zbb_{\geq 1}$, $|\alpha|=s$ and
\begin{equation*}
\frac{1}{p_1}+\frac{1}{q_1} = \frac{1}{p_2} + \frac{1}{q_2} = \frac{1}{r}.
\end{equation*}
Then
\begin{align*}
\norm{D^\alpha (uv)}_{L^r} \lesssim \norm{u}_{W^{s,p_1}}\norm{v}_{L^{q_1}} + \norm{u}_{L^{p_2}}\norm{v}_{W^{s,q_2}} \label{clacpropB.2.1}
\intertext{and}
\norm{[D^\alpha,u]v}_{L^r} \lesssim \norm{D u}_{L^{p_1}}\norm{v}_{W^{s-1,q_1}} + \norm{D u}_{
W^{s-1,p_2}}\norm{v}_{L^{q_2}}
\end{align*}
for all $u,v \in C^\infty(\Tbb^{n})$.
\item[(ii)]  Suppose $s_1,s_2,s_3\in \Zbb_{\geq 0}$, $\;s_1,s_2\geq s_3$,  $1< p \leq \infty$, and $s_1+s_2-s_3 > n/p$. Then
\begin{equation*}
\norm{uv}_{W^{s_3,p}} \lesssim \norm{u}_{W^{s_1,p}}\norm{v}_{W^{s_2,p}}
\end{equation*}
for all $u\in W^{s_1,p}(\Tbb^{n})$ and $v\in W^{s_2,p}(\Tbb^{n})$.
\end{enumerate}
\end{thm}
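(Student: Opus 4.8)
The statement collects classical calculus inequalities of Kato--Ponce and Moser type, and on the torus $\Tbb^n$ (compact, without boundary) there are no boundary contributions, so the plan is the standard one: expand by the Leibniz rule and estimate each resulting product by H\"older's inequality (Theorem \ref{Holder}) together with the Gagliardo--Nirenberg interpolation inequality (Theorem \ref{GNII}), closing with Young's inequality to convert interpolated products of powers into sums. I would carry out the argument for one representative term in detail and note that the remaining configurations are handled identically.

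For the product estimate in part (i), I would write $D^\alpha(uv)=\sum_{\beta\leq\alpha}\binom{\alpha}{\beta}D^\beta u\,D^{\alpha-\beta}v$ and, for the summand with $|\beta|=j$ ($0\leq j\leq s$), apply H\"older with exponents $a_j,b_j$ satisfying $\tfrac{1}{a_j}+\tfrac{1}{b_j}=\tfrac1r$ chosen so that Theorem \ref{GNII} yields
\begin{equation*}
\norm{D^\beta u}_{L^{a_j}}\lesssim \norm{u}_{W^{s,p_1}}^{j/s}\norm{u}_{L^{p_2}}^{1-j/s},\qquad
\norm{D^{\alpha-\beta}v}_{L^{b_j}}\lesssim \norm{v}_{W^{s,q_2}}^{(s-j)/s}\norm{v}_{L^{q_1}}^{j/s}.
\end{equation*}
A short computation using $\tfrac1{p_1}+\tfrac1{q_1}=\tfrac1{p_2}+\tfrac1{q_2}=\tfrac1r$ shows that the interpolation identities forced on $\tfrac{1}{a_j}$ and $\tfrac{1}{b_j}$ are consistent with $\tfrac{1}{a_j}+\tfrac{1}{b_j}=\tfrac1r$. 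Multiplying the two bounds regroups the right-hand side as $\bigl(\norm{u}_{W^{s,p_1}}\norm{v}_{L^{q_1}}\bigr)^{j/s}\bigl(\norm{u}_{L^{p_2}}\norm{v}_{W^{s,q_2}}\bigr)^{(s-j)/s}$, and Young's inequality then gives the claimed bound after summing over $\beta$. For the commutator estimate I would use $[D^\alpha,u]v=\sum_{1\leq|\beta|\leq s}\binom{\alpha}{\beta}D^\beta u\,D^{\alpha-\beta}v$, write each $D^\beta u=\del{i}D^{\beta'}u$ with $|\beta'|=j-1$, and run the same scheme, now interpolating $\del{x}u$ between $L^{p_1}$ and $W^{s-1,p_2}$ and $v$ between $W^{s-1,q_1}$ and $L^{q_2}$; the exponents pair up because $1-\tfrac{j-1}{s-1}=\tfrac{s-j}{s-1}$, and Young's inequality finishes as before.

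For part (ii) I would argue by induction on $s_3$. The base case $s_3=0$ asks for $\norm{uv}_{L^p}\lesssim\norm{u}_{W^{s_1,p}}\norm{v}_{W^{s_2,p}}$ when $s_1+s_2>n/p$: if $s_1>n/p$ then Sobolev's inequality (Theorem \ref{Sobolev}) gives $\norm{u}_{L^\infty}\lesssim\norm{u}_{W^{s_1,p}}$ and we are done (symmetrically if $s_2>n/p$), while if both are $\leq n/p$ one splits by H\"older using embeddings $W^{s_i,p}\hookrightarrow L^{r_i}$ with $\tfrac1{r_1}+\tfrac1{r_2}=\tfrac1p$, which exist precisely because $s_1+s_2>n/p$. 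For the inductive step, $\del{i}(uv)=\del{i}u\cdot v+u\cdot\del{i}v$, so the induction hypothesis at level $s_3-1$ applies to each summand with parameters $(s_1-1,s_2,s_3-1)$ and $(s_1,s_2-1,s_3-1)$ respectively; the condition $s_1+s_2-s_3>n/p$ and the constraints $s_1-1\geq s_3-1$, $s_2\geq s_3-1$ are all preserved, and the $s_3=0$ case handles the undifferentiated contribution.

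The main obstacle, in both parts, is purely the exponent bookkeeping: one must check that every H\"older triple and every parameter demanded of Theorem \ref{GNII} is admissible for all $j$ and all admissible choices of $p$'s and $q$'s, including the degenerate cases of the Sobolev/Gagliardo--Nirenberg embeddings (for instance $W^{n/p,p}\not\hookrightarrow L^\infty$, and the excluded endpoint $a=|\alpha|/s$ when $s-|\alpha|-n/p\in\Zbb_{\geq0}$). In part (ii) this is exactly where the strict inequality $s_1+s_2-s_3>n/p$ enters: the slack pushes these endpoint configurations into the interior range where the interpolation inequalities apply cleanly. Since the result is standard, I would present the verification for one representative configuration and cite \cite[Thm.~A.5]{Koch:1990} and \cite[\S VI.3]{Choquet_et_al:2000} for the rest.
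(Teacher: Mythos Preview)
The paper does not give its own proof of this theorem: it states at the start of Appendix \ref{calc} that the proofs are well known and refers to \cite[Thm.~A.5]{Koch:1990} for part (i) and \cite[\S VI.3]{Choquet_et_al:2000} for part (ii). Your sketch---Leibniz expansion, H\"older, Gagliardo--Nirenberg interpolation, and Young for (i), and induction on $s_3$ using Sobolev embedding for (ii)---is exactly the standard argument one finds in those references, so there is nothing to compare. One small slip: in your discussion of the excluded endpoint in Theorem \ref{GNII}, it is the case $a=1$ (not $a=|\alpha|/s$) that is disallowed when $s-|\alpha|-n/p\in\Zbb_{\geq 0}$; this only affects the extreme terms $j=0$ and $j=s$ in the Leibniz sum, and those are handled directly by H\"older without any interpolation, so the argument goes through.
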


\begin{thm}{\emph{[Moser's estimate]}}  \label{Moser}
Suppose  $1\leq p \leq \infty$, $s\in \Zbb_{\geq 1}$, $1\leq k\leq s$, $|\alpha|=k$ and $f\in C^s(U)$ where $U\subset \Rbb^N$ is open. Then
\begin{equation*}
\norm{D^\alpha f(u)}_{L^{p}} \lesssim \norm{D_u f}_{W^{s-1,\infty}(U)}(1+\norm{u}^{s-1}_{L^\infty})\norm{u}_{W^{s,p}}
\end{equation*}
for all $u \in L^\infty(\Tbb^{n},\Rbb^N)\cap W^{s,p}(\Tbb^{n},\Rbb^N)$ satisfying $u(\Tbb^n)\subset U$. Moreover, if $0\in U$, then
\begin{equation*}
\norm{f(u)-f(0)}_{L^p} \leq K\norm{u}_{L^{p}}
\end{equation*}
for all $u \in L^\infty(\Tbb^{n},\Rbb^N)$ satisfying $u(\Tbb^n)\subset U$ where $K$ is the Lipschitz constant of $f$ on $U$. 
\end{thm}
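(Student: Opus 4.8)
The plan is to reduce the first estimate to the Gagliardo--Nirenberg interpolation inequality (Theorem~\ref{GNII}) through the multivariate Fa\`a di Bruno formula, and to dispose of the second estimate by a pointwise argument. Since $1\le k\le s$, the definition of the Sobolev norms gives $\norm{u}_{W^{k,p}}\le\norm{u}_{W^{s,p}}$ and $\norm{D_uf}_{W^{k-1,\infty}(U)}\le\norm{D_uf}_{W^{s-1,\infty}(U)}$, so it suffices to prove the first bound with $\norm{u}_{W^{k,p}}$ and $\norm{D_uf}_{W^{k-1,\infty}(U)}$ in place of $\norm{u}_{W^{s,p}}$ and $\norm{D_uf}_{W^{s-1,\infty}(U)}$; the prefactor $1+\norm{u}_{L^\infty}^{k-1}$ is controlled by $2\bigl(1+\norm{u}_{L^\infty}^{s-1}\bigr)$ upon separating the cases $\norm{u}_{L^\infty}\le 1$ and $\norm{u}_{L^\infty}\ge 1$. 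Hence we may and do assume $s=k$ from now on.

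First I would expand $D^\alpha(f(u))$ by iterating the chain rule, i.e.\ via the multivariate Fa\`a di Bruno formula: there are finitely many constants $C_{j,\beta_1,\dots,\beta_j}$, depending only on $n$, $N$ and $k$, such that
\begin{equation*}
D^\alpha\bigl(f(u)\bigr) = \sum_{j=1}^{k}\ \sum_{\substack{\beta_1+\cdots+\beta_j=\alpha\\ |\beta_i|\ge 1}} C_{j,\beta_1,\dots,\beta_j}\,\bigl(D_u^{j}f\bigr)(u)\bigl[D^{\beta_1}u,\dots,D^{\beta_j}u\bigr],
\end{equation*}
where $D_u^{j}f$ denotes the $j$-th derivative of $f$ regarded as a symmetric $j$-linear map. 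Taking $L^p$ norms, using the triangle inequality, and bounding the multilinear action pointwise by the operator norm of $(D_u^{j}f)(u)$ times $\prod_i|D^{\beta_i}u|$, we get
\begin{equation*}
\norm{D^\alpha(f(u))}_{L^p} \lesssim \sum_{j=1}^{k}\ \sum_{\substack{\beta_1+\cdots+\beta_j=\alpha\\ |\beta_i|\ge 1}} \norm{D_u^{j}f}_{L^\infty(U)}\,\Bigl\| \textstyle\prod_{i=1}^{j}|D^{\beta_i}u| \Bigr\|_{L^p},
\end{equation*}
and, since $D_u^{j}f=D_u^{j-1}(D_uf)$ with $j-1\le k-1$, each coefficient obeys $\norm{D_u^{j}f}_{L^\infty(U)}\le\norm{D_uf}_{W^{k-1,\infty}(U)}$.

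Next I would estimate each product. Fix a decomposition $\beta_1+\cdots+\beta_j=\alpha$ with all $|\beta_i|\ge 1$; then $\sum_{i=1}^{j}|\beta_i|=|\alpha|=k$, so the exponents $p_i:=pk/|\beta_i|\in[p,\infty]$ satisfy $\sum_{i=1}^{j}1/p_i=1/p$, and the generalised H\"older inequality (Theorem~\ref{Holder} used inductively) gives $\bigl\|\prod_i|D^{\beta_i}u|\bigr\|_{L^p}\le\prod_i\norm{D^{\beta_i}u}_{L^{p_i}}$. If some $|\beta_i|=k$ then $j=1$ and $\norm{D^\alpha u}_{L^p}\le\norm{u}_{W^{k,p}}$ directly; otherwise $|\beta_i|<k$ and Theorem~\ref{GNII} with $a=|\beta_i|/k$, $q=\infty$, $r=p_i$ applies --- one checks the scaling identity $\tfrac{|\beta_i|}{n}+a\bigl(\tfrac1p-\tfrac kn\bigr)=\tfrac{|\beta_i|}{pk}=\tfrac1r$ and that $a=|\beta_i|/k$ lies in the admissible range, which holds also in the exceptional integer case because $a$ is its left endpoint --- yielding $\norm{D^{\beta_i}u}_{L^{p_i}}\lesssim\norm{u}_{L^\infty}^{1-|\beta_i|/k}\norm{u}_{W^{k,p}}^{|\beta_i|/k}$. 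Multiplying over $i$ and using $\sum_i(1-|\beta_i|/k)=j-1$ and $\sum_i|\beta_i|/k=1$ gives $\prod_i\norm{D^{\beta_i}u}_{L^{p_i}}\lesssim\norm{u}_{L^\infty}^{j-1}\norm{u}_{W^{k,p}}$; summing over the finitely many decompositions and over $1\le j\le k$, together with $\sum_{j=1}^{k}\norm{u}_{L^\infty}^{j-1}\lesssim 1+\norm{u}_{L^\infty}^{k-1}$, proves the first estimate in the case $s=k$, hence in general by the reductions of the first paragraph. The case $p=\infty$ is identical, the interpolation step then being the classical Landau--Kolmogorov inequality $\norm{D^{\beta_i}u}_{L^\infty}\lesssim\norm{u}_{L^\infty}^{1-|\beta_i|/k}\norm{u}_{W^{k,\infty}}^{|\beta_i|/k}$ (Theorem~\ref{GNII} with $p=q=r=\infty$, whose exceptional clause does not trigger). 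For the second estimate, if $0\in U$ and $K$ is the Lipschitz constant of $f$ on $U$, then $|f(u(x))-f(0)|\le K|u(x)|$ for a.e.\ $x$ since $0$ and $u(x)$ both lie in $U$; integrating the $p$-th power over $\Tbb^n$ (or taking the essential supremum when $p=\infty$) gives $\norm{f(u)-f(0)}_{L^p}\le K\norm{u}_{L^p}$.

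The only genuinely delicate points are combinatorial rather than conceptual: establishing the multivariate Fa\`a di Bruno expansion with constants depending only on $(n,N,k)$, and checking that every application of Theorem~\ref{GNII} falls within its hypotheses --- in particular that the borderline exponent $a=|\beta_i|/k$ remains admissible in the exceptional case where $k-|\beta_i|-n/p$ is a nonnegative integer. Everything else reduces to H\"older's inequality and elementary arithmetic of exponents.
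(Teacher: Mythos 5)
Your proof is correct, and since the paper does not prove Theorem \ref{Moser} itself (it is quoted from \cite[Cor.~6.4.5]{Hormander:1997}), your Fa\`a di Bruno plus Gagliardo--Nirenberg route is exactly the standard argument behind the cited result and the same device the paper adapts in the proof of Lemma \ref{comm-lem}; the exponent bookkeeping ($p_i=pk/|\beta_i|$, $a=|\beta_i|/k$, the exceptional clause of Theorem \ref{GNII} harmless because $a<1$), the reduction to $s=k$, and the pointwise Lipschitz argument for the second bound all check out. The only step left implicit is that the Fa\`a di Bruno expansion is first justified for smooth $u$ and then extended to general $u\in W^{s,p}(\Tbb^n,\Rbb^N)\cap L^\infty(\Tbb^n,\Rbb^N)$ with $u(\Tbb^n)\subset U$ by a mollification/density argument -- routine, but worth a sentence.
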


\bibliographystyle{amsplain}
\bibliography{FLRW_Kgtot_v9}

\end{document}